\newenvironment{hyp}[1]{
\begin{enumerate}[label=\textbf{\sf(#1\arabic*)},resume=hyp#1]\begin{sf}}{\end{sf}\end{enumerate}}
\crefname{hyp}{}{ass}
\Crefname{hyp}{}{Ass}
\newtheorem{theorem}{Theorem}[section]
\newaliascnt{proposition}{theorem}
\newtheorem{proposition}[proposition]{Proposition}
\newaliascnt{lemma}{theorem}
\newtheorem{lemma}[lemma]{Lemma}
\newaliascnt{corollary}{theorem}
\newtheorem{corollary}[corollary]{Corollary}
\newaliascnt{definition}{theorem}
\newaliascnt{example}{theorem}
\newaliascnt{remark}{theorem}
\newtheorem{remark}[remark]{Remark}
\newcommand{\Aset}{\mathsf{A}}
\newcommand{\1}{\mathbbm{1}}
\newcommand{\as}{\mathrm{a.s.}}
\newcommandx{\convps}[1][1=]{\displaystyle \stackrel[#1]{\PP\mathrm{-a.s.}}{\longrightarrow}}
\newcommand{\bddfunc}[1]{\mathsf{F}_{\mathrm b}(#1)}
\newcommand{\bCset}{ {\bar{\mathsf C}}}
\newcommand{\bDset}{ {\bar{\mathsf D}}}
\newcommand{\Bset}{\mathsf{B}}
\newcommandx{\convprob}[1][1=]{\displaystyle \stackrel[#1]{\PP}{\longrightarrow}}
\newcommand{\Cset}{\mathsf{C}}
\newcommand{\Csigma}{\mathcal{E}_\Cset}
\newcommandx{\dlim}[1][1=]{\displaystyle \stackrel[#1]{\PP}{\rightsquigarrow}}
\newcommand{\Dset}{ {\mathsf D}}
\newcommand{\eqdef}{:=}
\newcommand{\eqsp}{}
\newcommand{\Eset}{{\mathsf E}}
\newcommand{\Esigma}{\mathcal E}
\newcommand{\Fset}{{\mathsf F}}
\newcommand{\Fsigma}{{\mathcal F}}
\newcommand{\gKKT}{{GKKT}}
\newcommand{\indi}[1]{\1_{#1}}
\newcommand{\indiacc}[1]{\1_{\lrcb{#1}}}
\newcommand{\indin}[1]{\1_{\{#1\}}}
\newcommand{\invar}{\mathcal{I}}
\newcommand{\lr}[1]{\left(#1 \right)}
\newcommand{\lrb}[1]{\left[#1 \right]}
\newcommand{\lrcb}[1]{\{#1\}}
\newcommand{\MP}{\eta}
\newcommand{\measureset}{\mathsf{M}}
\newcommand{\nonnegfunc}[1]{\mathsf{F}_+(#1)}
\newcommand{\nset}{{\mathbb N}}
\newcommand{\nsetpos}{\mathbb{N}^\ast}
\newcommand{\PE}{\mathbb E}
\newcommand{\PEE}[2]{\mathbb E^{#1}_{#2}}
\newcommand{\posfunc}[1]{\mathsf{F}_+(#1)}
\newcommand{\PP}{\mathbb P}
\newcommand{\PPP}[2]{ {\mathbb P}^{#1}_{#2}}
\newcommand{\rmd}{\mathrm d}
\newcommand{\rset}{\mathbb{R}}
\newcommand{\rsetps}{\mathbb{R}^*_+}
\newcommand{\rti}{\sigma}
\newcommand{\rtiC}[1]{\sigma_\Cset^{#1}}
\newcommand{\seq}[2]{(#1)_{#2}}
\newcommand{\set}[2]{\lrcb{#1\,:\eqsp #2}}
\newcommand{\shift}{\theta}
\newcommand{\tensprod}{\varotimes}
\newcommand{\X}[2]{X_{#1}^{#2}}
\newcommand{\Xinit}{\mu}
\def\msa{\mathsf{A}}
\def\msc{\mathsf{C}}
\def\bmsc{\bar{\mathsf{C}}}
\def\bmsc{\bar{\mathsf{C}}}
\def\mse{\mathsf{E}}
\def\msl{\mathsf{L}}
\def\msx{\mathsf{X}}
\def\msd{\mathsf{D}}
\def\mcbb{\mathcal{B}}  
\def\mce{\mathcal{E}}
\def\mcf{\mathcal{F}}
\def\mcg{\mathcal{G}}
\def\rset{\mathbb{R}}
\def\nset{\mathbb{N}}
\def\nsets{{\mathbb{N}^*}}
\def\rmd{\mathrm{d}}
\def\rme{\mathrm{e}}
\newcommandx{\functionspace}[2][1=+]{\mathbb{F}_{#1}(#2)}
\newcommandx{\VarDeux}[3][3=]{\operatorname{Var}^{#3}_{#1}\left\{#2 \right\}}
\newcommand{\LeftEqNo}{\let\veqno\@@leqno}
\newcommand{\abs}[1]{\left\vert #1 \right\vert}
\newcommand{\tvnorm}[1]{\| #1 \|_{\mathrm{TV}}}
\newcommandx{\Vnorm}[2][1=V]{\| #2 \|_{#1}}
\newcommandx{\VnormEq}[2][1=V]{\left\| #2 \right\|_{#1}}
\newcommandx{\norm}[2][1=]{\ifthenelse{\equal{#1}{}}{\left\Vert #2 \right\Vert}{\left\Vert #2 \right\Vert^{#1}}}
\newcommandx{\normLigne}[2][1=]{\ifthenelse{\equal{#1}{}}{\Vert #2 \Vert}{\Vert #2\Vert^{#1}}}
\newcommandx{\normL}[2][1=]{\ifthenelse{\equal{#1}{}}{\left\Vert #2 \right\Vert_{\msl}}{\left\Vert #2 \right\Vert^{#1}_{\msl}}}
\newcommandx{\normLLigne}[2][1=]{\ifthenelse{\equal{#1}{}}{\Vert #2 \Vert_{\msl}}{\Vert #2\Vert^{#1}_{\msl}}}
\newcommand{\parenthese}[1]{\left(#1 \right)}
\newcommand{\parentheseDeux}[1]{\left[ #1 \right]}
\newcommand{\defEns}[1]{\left\lbrace #1 \right\rbrace }
\newcommand{\plusinfty}{\infty}
\def\ie{\textit{i.e.}}
\def\eqsp{\,}
\newcommand{\ooint}[1]{\left(#1\right)}
\newcommand{\ccint}[1]{\left[#1\right]}
\newcommandx{\weight}[2][2=n]{\omega_{#1,#2}^N}
\def\as{\ensuremath{\text{a.s.}}}
\newcommandx\sequence[3][2=,3=]
\newcommandx\sequenceD[3][2=,3=]
\newcommandx{\sequencen}[2][2=n\in\N]{\ensuremath{\{ #1_n, \eqsp #2 \}}}
\newcommandx\sequenceDouble[4][3=,4=]
\newcommandx{\sequencenDouble}[3][3=n\in\N]{\ensuremath{\{ (#1_{n},#2_{n}), \eqsp #3 \}}}
\def\iid{i.i.d.}
\newcommand{\opnorm}[1]{{\left\vert\kern-0.25ex\left\vert\kern-0.25ex\left\vert #1
    \right\vert\kern-0.25ex\right\vert\kern-0.25ex\right\vert}}
\newcommandx{\CPE}[3][1=]{{\mathbb E}_{#1}\left[#2 \left \vert #3 \right. \right]} 
\newcommandx{\CPVar}[3][1=]{\mathrm{Var}^{#3}_{#1}\left\{ #2 \right\}}
\newcommand{\CPP}[3][]
{\ifthenelse{\equal{#1}{}}{{\mathbb P}\left(\left. #2 \, \right| #3 \right)}{{\mathbb P}_{#1}\left(\left. #2 \, \right | #3 \right)}}
\newcommandx{\osc}[2][1=]{\mathrm{osc}_{#1}(#2)}
\def\Idd{\operatorname{I}_d}
\newcommand\coupling[2]{\Gamma(\mu,\nu)}
\newcommand\restr[2]{{
  \left.\kern-\nulldelimiterspace 
  #1 
  \vphantom{\big|} 
  \right|_{#2} 
  }}
\newcommand\unifDist{\mathbf{Unif}}
\def\shift{\theta}
\def\proj{\operatorname{proj}}
\def\gaussLaw{\mathrm{N}}
\newcommand{\mb}[1]{\mathbf{#1}}
\def\Qmala{R^{\scriptscriptstyle{\mathsf{MALA}}}}
\def\Qrwm{R^{\scriptscriptstyle{\mathsf{RWM}}}}
\def\HMC{\scriptscriptstyle{\mathsf{HMC}}}
\def\MH{{\scriptscriptstyle{\mathsf{MH}}}}
\def\propMH{K^{\scriptscriptstyle{\mathsf{MH}}}}
\def\QMH{Q^{\MH}}
\def\alphaMH{\alpha^{\MH}}
\def\alphaMHD{\tilde{\alpha}^{\scriptscriptstyle{\mathsf{MH}}}}
\def\Leb{\lambda_{\scriptscriptstyle{\mathsf{Leb}}}}
\numberwithin{equation}{section}
\begin{document}

\begin{frontmatter}
\title{Boost your favorite Markov Chain Monte Carlo sampler using Kac's theorem: the Kick-Kac teleportation algorithm}
\runtitle{Boost your favorite MCMC sampler using Kac's theorem: the KKT algorithm}

\begin{aug}
\author[A]{\fnms{Randal} \snm{Douc}\ead[label=e1]{randal.douc@telecom-sudparis.eu}},
\author[B]{\fnms{Alain} \snm{Oliviero Durmus}\ead[label=e2]{alain.durmus@polytechnique.edu}}
\\
\author[C]{\fnms{Aur\'elien} \snm{Enfroy}\ead[label=e3]{aurelien.enfroy@ens-paris-saclay.fr}}
\and
\author[D]{\fnms{Jimmy} \snm{Olsson}\ead[label=e4]{jimmyol@kth.se}}
\address[A]{Department CITI, Telecom SudParis, Evry, France, \printead{e1}}
\address[B]{CMAP, CNRS, Ecole Polytechnique, Institut Polytechnique de Paris, 91120 Palaiseau, France, \printead{e2}}
\address[C]{Laboratoire de mathématiques d’Orsay, Université Paris-Saclay, \\ Orsay, France, \printead{e3}}
\address[D]{Department of Mathematics, KTH Royal Institute of Technology, Stockholm, Sweden, \printead{e4}}
\end{aug}

\begin{abstract}
The present paper focuses on the problem of sampling from a given target distribution $\pi$ defined on some general state space. To this end, we introduce a novel class of non-reversible Markov chains, each chain being defined on an extended state space and having an invariant probability measure admitting $\pi$ as a marginal distribution. The proposed methodology is inspired by a new formulation of Kac's theorem and allows global and local dynamics to be smoothly combined. Under mild conditions, the corresponding Markov transition kernel can be shown to be irreducible and Harris recurrent. In addition, we establish that geometric ergodicity holds under appropriate conditions on the global and local dynamics. Finally, we illustrate numerically the use of the proposed method and its potential benefits in comparison to existing Markov chain Monte Carlo (MCMC) algorithms. 
\end{abstract}

\begin{keyword}[class=MSC]
  \kwd[Primary: ]{62-08}
 \kwd{60J20}
\kwd{65C05}
\kwd[; secondary; ]{60F15}
\end{keyword}

\begin{keyword}
\kwd{Kac's theorem}
  \kwd{Markov chain Monte Carlo}
  \kwd{Metropolis--Hastings algorithm}
   \kwd{non-reversible Markov chain}
\end{keyword}
\end{frontmatter}


\section{Introduction}
\label{sec:intro}
In this work, we develop a new Monte Carlo technique to sample from a
given target distribution $\pi$ on some general state space
$(\Eset, \Esigma)$. Our construction is based on \emph{Kac's theorem},
which is a fundamental result in Markov chain theory. More precisely,
let $P$ be some Markov kernel leaving $\pi$ invariant and
$\Cset \in \Esigma$ some measurable set. Moreover, for every $x \in \Esigma$, let $\PP_x$ be the law of the canonical Markov chain $\seq{X_n}{n \in \nset}$ on $(\Eset, \Esigma)$ with kernel $P$ and such that $X_0 = x$, and denote by $\PE_x$ the corresponding expectation. Under some conditions on $P$ and $\Cset$, Kac's theorem states that for every bounded Borel-measurable function $h$ on $(\Eset, \Esigma)$, the Lebesgue integral of $h$ with respect to $\pi$, denoted by $\pi(h)$, can be expressed as
\begin{equation} \label{eq:Kacs:formula:intro:display}
	\pi(h) = \pi(\Cset) \int \pi_\Cset(\rmd x) \, \PE_x \lrb{\sum_{k = 0}^{\rti_{\msc} - 1} h(X_k)}, 
\end{equation}
where $\sigma_{\msc} = \inf \{ k > 0 : X_k \in \msc\}$ is the first return time to $\Cset$ and $\pi_\Cset \eqdef \pi(\cdot \cap \Cset) / \pi(\Cset)$ is the (normalized) restriction of $\pi$ to $\Cset$ (see \Cref{thm:kac} below). Here we have assumed that $ \pi(\Cset) > 0$. A remarkable feature of Kac's formula is that it characterizes completely $\pi$ by $\pi(\msc)$, $\pi_{\msc}$ and $P$. In particular, there is a large flexibility in the choice of these three elements. In the light of \eqref{eq:Kacs:formula:intro:display}, a possible strategy for estimating $\pi(h)$ is to produce i.i.d. draws $(\X{0}{i})_{i=1}^N$ from $\pi_\Cset$ and let each draw $\X{0}{i}$ initialize a Markov chain $(\X{k}{i})_{k \in \nset}$ evolving according to $P$; once the chain returns to $\Cset$, {\ie}, after $\rtiC{i}$ time steps, it is killed and the last state is discarded. By \eqref{eq:Kacs:formula:intro:display}, the independent Markovian trajectories $(\X{k}{i})_{k = 0}^{\sigma_{\msc}^i - 1}$, 
$i \in \{1, \ldots, N\}$, 
produced in this manner can be used to form a consistent estimator $\sum_{i=1}^N \sum_{k=0}^{\rtiC{i}-1} h(\X{k}{i}) / \sum_{i = 1}^N \rtiC{i}$ of $\pi(h)$. More generally, instead of sampling i.i.d draws from $\pi_\Cset$, we may generate $(\X{0}{i})_{i \in \nset}$ by simulating a Markov chain evolving on $\Cset$ according to some Markov kernel $Q$ leaving $\pi_\Cset$ invariant. Then, as before, we let each state $\X{0}{i}$ initialize a Markovian excursion $(\X{k}{i})_{k = 0}^{\rtiC{i} - 1}$ evolving according to $P$. After this, the---now dependent---trajectories $(\X{k}{i})_{k = 0}^{\rtiC{i} - 1}$, $i \in \nset$, can be used to form an estimator of $\pi(h)$ as previously. Now, concatenate the produced trajectories into a single sequence and denote by $Y_n$ the $(n + 1)$th element of this sequence (or, formally, for every $n \in \nset$, let $i(n) \eqdef \sup \{i \in \nset : \sum_{\ell = 1}^i \rtiC{\ell} \leq n \}$, $k(n) \eqdef n - \sum_{\ell = 1}^{i(n)} \rtiC{\ell}$ and set $Y_n \eqdef \X{k(n)}{i(n)}$). The resulting process $(Y_n)_{n \in \nset}$, which we will refer to as the \emph{Kick-Kac teleportation} (KKT) process and define carefully in \Cref{sec:telep-proc}, is non-Markovian and evolves by moving---or, \emph{teleporting}---across $\Cset$ according to $Q$ and excursing the space outside $\Cset$ according to $P$. After each $Q$ transition, an excursion according to $P$ is initiated; once the excursion returns to $\Cset$, its last state is discarded, and a new teleportative move according to $Q$, picking up the end state of the previous teleportation, takes place, generating the new state of the process. When the density of $\pi$ is known up to a normalizing constant, so is the density of $\pi_\Cset$, which provides a variety of possible choices of $Q$, including the plethora of \emph{Markov chain Monte Carlo} (MCMC) kernels, such as those of \emph{Metropolis--Hastings} (MH) type \cite{metropolis:rosenbluth:rosenbluth:teller:teller:1953,hastings:1970,tierney:1994}. Moreover, with an appropriate choice of $\Cset$, \emph{rejection sampling} can be implemented to generate i.i.d. draws from $\pi_\Cset$; in that case, $Q(x, \cdot) = \pi_\Cset$ for all $x \in \Cset$, yielding a process lacking long-term memory. 

    The KKT sampling method that we propose has particular relevance to
    target distributions with multiple, isolated modes. Standard MCMC methods,  such as the \emph{random-walk Metropolis} (RWM) \emph{algorithm}, the \emph{Metropolis-adjusted Langevin algorithm} (MALA) \cite{roberts:tweedie:1996} or the 
\emph{Hamiltonian Monte Carlo} (HMC) \emph{method} \cite{neal:2011}, generally exhibit poor performance when
    applied to such models due to difficulties in crossing
    low-probability barriers. Still, since such models arise in many
    applications
    \cite{kou2006equi,feroz:hobson:bridges:2009,ihler2005nonparametric}, plenty of variations on the MCMC technology have been
    proposed to address this issue. The literature on this topic is
    extensive, why we here only provide a brief introduction to place our
    work into a context. A first family of methods is based on tempering
    strategies \cite{geyer1991markov,miasojedow2013adaptive,neal1996sampling} and a
    second one on optimization methods trying first to identify
    the modes of the target distribution before applying adaptive
    MCMC techniques
    \cite{andricioaei2001smart,lan2014wormhole}. Finally, exploratory
    schemes have been suggested in
    \cite{wang2001determining,bornn2013adaptive}. In contrast to these works, our methodology belongs to a class of
    `resurrected' or `regenerative' processes, which return to a
    certain part of the state space once being killed
    \cite{hobert2002applicability,hobert:robert:2004,mykland1995regeneration}. 
    In many cases, these processes have been constructed with the aim of coupling from the past and perfect simulation \cite{propp1996exact,fill1997interruptible}.
    Among existing methods, we found that the hybrid kernel approach introduced in 
    \cite{brockwell:kadane:2005} is the one that is most closely related to the KKT sampler proposed by us, and a detailed comparison is given in
    \Cref{sec:revis-hybr-kern} below. Using regeneration strategies for
    designing efficient Monte Carlo samplers is still the subject of
    active research; see
    \cite{jacob2020unbiased,lee2014perfect,wang2021regeneration} for
    some recent work in this direction. In
the case where $\pi$ is multi-modal, the KKT process provides, especially in its most
    general form---the \emph{general KKT process}---, a very
    generic framework. Using the teleportation-based sampling
    procedure described above, multi-modality can be straightforwardly
    handled by letting $\Cset$ be a low-probability region surrounding
    the modes of the target and $P$ some MCMC kernel used to explore
    locally the modes. For instance, $\Cset$ can be defined as a level
    set of the target density (up to a constant of
    proportionality). Parameterized appropriately, the KKT process can
    be made to transit with ease across regions of the state space
    where the ergodic properties of the kernel $P$ are poor,
    de-correlating significantly the produced samples. However, although we provide, in \Cref{sec:numerics}, a couple of numerical examples that well illustrate the potential of the proposed sampling methodology, a more far-reaching exploration of different ways of designing the set $\Cset$ and the kernel $Q$ is beyond the scope of the present paper and left as future research. 
      
In the present paper, which provides a basis for further methodological innovation, we 
\begin{itemize}
    \item provide, under mild assumptions, a novel and concise proof of Kac's theorem \eqref{eq:Kacs:formula:intro:display} as well as a generalized version of the same identity where $\pi_\Cset$ is replaced by a general probability distribution $\tilde{\pi}$ on $(\Eset, \Esigma)$ such that $\tilde{\pi}$ is absolutely continuous with respect to $\pi$ with a $\pi$-a.s. bounded density $\rmd \tilde{\pi} / \rmd \pi$. These results are of independent interest. 
    \item use Kac's theorem to construct an increasingly general KKT
      process framework and provide a rigorous analysis of its
      theoretical properties.
    \item show that our methodology extends the classcial MCMC
      framework in the sense that the general KKT process covers the
      MH algorithm as a special case.
    \item illustrate, in a simulation study, the potential of the proposed KKT sampler.   
\end{itemize}
  
\subsection*{Outline of the paper and the results} 

As a preparation for coming developments, we first provide, in \Cref{sec:kac}, our new proof of Kac's theorem \eqref{eq:Kacs:formula:intro:display} under the mild assumption that $\Cset$ is $\pi$-accessible (\Cref{thm:kac}). Moreover, as established by \Cref{cor:kac}, in the case where $\pi$ is the unique invariant distribution of $P$, Kac's theorem holds true for any set $\Cset \in \Esigma$ such that $\pi(\Cset) > 0$. 

In the first part of \Cref{sec:telep-proc} we examine the KKT process in the memoryless case where $Q(x, \cdot) = \pi_\Cset$ for all $x \in \Cset$. In this case, the KKT process $\seq{Y_n}{n \in \nset}$ is a Markov chain, and we establish that it allows the target $\pi$ of interest as an invariant distribution. In addition, we provide necessary and sufficient conditions for the process to be $\pi$-reversible (\Cref{prop:nonrev}), conditions that are not satisfied in general. 

In the second part of \Cref{sec:telep-proc} we turn to a general $Q$ leaving $\pi_{\msc}$ invariant, and define the KKT process $\seq{Y_n}{n \in \nset}$ via an additional, auxiliary $\Cset$-valued process $(Z_n)_{n \in \nset}$ defining the states of $\seq{Y_n}{n \in \nset}$ inside $\Cset$. In this construction (which is detailed in \Cref{alg:telep}), the bivariate process $(Y_n, Z_n)_{n \in \nset}$ is a Markov chain on the extended state space $\Eset \times \Cset$ with invariant probability measure $\check{\pi}$ admitting $\pi$ as a marginal with respect to the first argument (\Cref{lem:invariant_measure_kac_process_c}). Moreover, if $\pi$ and $\pi_\Cset$ are the unique invariant probability measures for $P$ and $Q$, respectively, then $\check{\pi}$ is the unique invariant probability measure for the process $(Y_n, Z_n)_{n \in \nset}$ (\Cref{propo:uniqueness_bar_pi}). Then, under the assumption that $\check{\pi}$ is the unique invariant probability measure, we show that $(Y_n, Z_n)_{n \in \nset}$ satisfies a law of large numbers with respect to $\check{\pi}$ if and only if the durations of the excursions from $\Cset$ are a.s. finite and $Q$ satisfies a law of large numbers with respect to $\pi_\Cset$ (\Cref{thm:LLN_Markov}). 

\sloppy In order to complete the analysis of the KKT process, we also show
that $(Y_n, Z_n)_{n \in \nset}$ is geometrically ergodic (with respect
to $\check{\pi}$) if $Q$ is geometrically ergodic (with respect to
$\pi_\Cset$) and $P$ exhibits geometric drift toward $\Cset$
(\Cref{thm:ergo:geom}). We show that the latter automatically holds in
the special case where the complement of $\msc$ is compact and $P$ is
either an MH or a Feller kernel.  Importantly, $P$ is not required to be geometrically
ergodic, which supports the idea underpinning the construction, namely
that the mixing of a Markov chain evolving according to $P$ can be
drastically improved by changing the dynamics of the chain in regions
of bad mixing.

In \Cref{sec:extensions}, we present the generalization of Kac's theorem described above (\Cref{prop:kac:general}) and use this result to construct a general KKT process which can be shown to cover the classical MH algorithm as well as the hybrid kernel of \cite{brockwell:kadane:2005} as special cases. 

The proposed sampling technology is illustrated numerically in \Cref{sec:numerics}, where the KKT sampler is benchmarked successfully against some existing advanced MCMC algorithms. 

Finally, to facilitate reading, the notation used in this paper is collected in \Cref{sec:notation} and some technical results and proofs are postponed to \Cref{sec:appendix}.


\section{Kac's formula}
\label{sec:kac}
As explained above, the teleportation process introduced in \Cref{sec:telep-proc} is based on Kac's theorem. Due to its importance to the construction of the teleportation process, we restate, after the introduction of adequate notation, this result and provide general assumptions under which the theorem holds true. In addition, we provide an effective and, up to our knowledge, novel proof. 

Let $(\Eset, \Esigma)$ be some measurable space and denote by $\measureset(\Esigma)$ and $\measureset_1(\Esigma)$ the sets of $\sigma$-finite nonnegative measures and probability measures, respectively, on $(\mse,\mce)$. In addition, we denote by $\bddfunc{\Esigma}$ and $\nonnegfunc{\Esigma}$ the sets of bounded and nonnegative $\mce/\mcbb(\rset)$-measurable functions, respectively. Furthermore, let $(\Eset^\nset, \Esigma^{\tensprod \nset})$ be the associated canonical space and denote by $\seq{X_n}{n \in \nset}$ the corresponding coordinate process. For every initial distribution $\Xinit \in \measureset_1(\Esigma)$ and Markov kernel $P$ on $\Eset \times \Esigma$ we let $\PP_\Xinit$ be the probability measure on $(\Eset^\nset, \Esigma^{\tensprod \nset})$ induced by $P$ and $\Xinit$ and denote by $\PE_\Xinit$ the corresponding expectation. As usual, the \emph{shift operator} is the transformation on $\Eset^\nset$ given by 
$\shift : (\omega_0, \omega_1, \omega_2, \ldots) \mapsto (\omega_1, \omega_2, \ldots)$. If $\pi$ is an invariant probability measure for $P$, \ie~$\pi P=\pi$, then by the Markov property, for every $\msa \in \Esigma^{\tensprod \nset}$,
\begin{equation} \label{eq:measure_preserving}
    \PE_\pi[\indi{\msa} \circ \shift] = \PE_\pi[\indi{\msa}] \eqsp, 
\end{equation}
which implies that $\PP_\pi \circ \shift^{-1} = \PP_\pi$; in other words, $(\Eset^\nset, \Esigma^{\tensprod \nset}, \PP_\pi, \shift)$ is a measure-preserving dynamical system. Moreover, for every $\Cset \in \Esigma$ we define recursively 
\begin{equation} \label{eq:def_return_time_X}
    \rti^\ell_\msc \eqdef \inf\set{k > \rti_\Cset^{\ell-1}}{X_k\in \Cset}\eqsp, \quad \ell \in \nsetpos\eqsp, 
\end{equation}
with $\rti_\Cset^0 \eqdef 0$. Note that $\rti_\Cset \eqdef \rti_\Cset^1$ is the return time to the set $\msc$. In the following we say that $\Cset\in \Esigma$ is \emph{$\pi$-accessible} if $\PP_x(\rti_\Cset < \plusinfty) > 0$ for $\pi$-almost all $x \in \Eset$. Moreover, we simply say that $\Cset$ is \emph{accessible} if $\PP_x(\rti_\Cset < \plusinfty) > 0$ for all $x \in \Eset$.

We have now all the notation required for the statement of Kac's theorem.
\begin{theorem}[Kac's theorem] \label{thm:kac} 
Let $P$ be a Markov kernel on $\Eset \times \Esigma$
with invariant probability measure $\pi$. Then for every $\pi$-accessible set $\msc \in \Esigma$ it holds that 
\begin{equation} \label{eq:kac:formula}
    \pi = \pi^0_\msc = \pi^1_\msc,
\end{equation}
where the measures $\pi^0_\msc$ and $\pi^1_\msc$ are defined by, for $f \in \nonnegfunc{\Esigma}$, 
\begin{equation} \label{def:pi0}
\pi^0_\msc(f) \eqdef \int_\msc \pi(\rmd x) \,  \PE_x \lrb{\sum_{k=0}^{\rti_\msc-1}f(X_k)}, \quad \pi^1_\msc(f) \eqdef \int_\msc \pi(\rmd x) \, \PE_x\lrb{\sum_{k=1}^{\rti_\msc} f(X_k)}.
\end{equation}
\end{theorem}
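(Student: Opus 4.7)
The plan is to reduce both identities $\pi = \pi^0_\msc$ and $\pi = \pi^1_\msc$ to a single shift-invariance manipulation on the measure-preserving system $(\Eset^\nset, \PP_\pi, \shift)$. I focus on $\pi = \pi^0_\msc$; the identity $\pi = \pi^1_\msc$ follows by re-indexing. Starting from
\begin{equation*}
\pi^0_\msc(f) = \PE_\pi\!\left[\indi{X_0 \in \msc} \sum_{k=0}^{\rti_\msc - 1} f(X_k)\right] = \sum_{k \geq 0} \PE_\pi\!\left[f(X_k) \indi{X_0 \in \msc,\, X_1, \ldots, X_k \notin \msc}\right],
\end{equation*}
I apply, for each $0 \leq k \leq N$, the shift $\shift^{N-k}$ to the $k$-th summand. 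By \eqref{eq:measure_preserving} this transforms it into $\PE_\pi[f(X_N) \indi{X_{N-k} \in \msc,\, X_{N-k+1}, \ldots, X_N \notin \msc}]$. Summing over $k \in \{0, \ldots, N\}$, the resulting indicators are pairwise disjoint (the $k$-th one pinpoints the last visit to $\msc$ in $[0,N]$ as being at time $N-k$), and their union equals $\{\tau_\msc \leq N\}$, where $\tau_\msc \eqdef \inf\{k \geq 0 : X_k \in \msc\}$. Thus
\begin{equation*}
\sum_{k=0}^N \PE_\pi[f(X_k) \indi{X_0 \in \msc,\, X_1, \ldots, X_k \notin \msc}] = \PE_\pi[f(X_N) \indi{\tau_\msc \leq N}].
\end{equation*}
As $N \to \infty$, monotone convergence sends the left-hand side to $\pi^0_\msc(f)$, while the right-hand side equals $\pi(f) - \PE_\pi[f(X_N) \indi{\tau_\msc > N}]$ (using $\PE_\pi[f(X_N)] = \pi(f)$ by invariance) and converges to $\pi(f)$ for bounded $f$ as soon as $\PP_\pi(\tau_\msc < \infty) = 1$. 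Extension to general $f \in \nonnegfunc{\Esigma}$ is then by monotone convergence.

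The main obstacle is the recurrence statement $\PP_\pi(\tau_\msc < \infty) = 1$, since $\pi$-accessibility only delivers $p(x) \eqdef \PP_x(\tau_\msc < \infty) > 0$ for $\pi$-a.e.\ $x$. My plan is as follows. First, $\pi(\msc) > 0$: otherwise $\PP_\pi(X_k \in \msc) = 0$ for every $k$, forcing $p = 0$ $\pi$-a.e.\ and contradicting $\pi$-accessibility. Second, by the Markov property at time $1$, $p$ satisfies $p = Pp$ on $\msc^c$ and $p = 1 \geq Pp$ on $\msc$, hence $p \geq Pp$ everywhere; since $\pi(p) = \pi(Pp)$ by $\pi$-invariance, equality holds $\pi$-a.e. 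Consequently $(p(X_n))_{n \in \nset}$ is a bounded $\PP_\pi$-martingale with constant marginal law, so $\Var_\pi[p(X_n)]$ does not depend on $n$; combined with the identity $\Var_\pi[p(X_n)] = \Var_\pi[p(X_0)] + \sum_{k=1}^n \PE_\pi[(p(X_k) - p(X_{k-1}))^2]$ from the orthogonality of martingale increments, this forces $p(X_k) = p(X_{k-1})$ $\PP_\pi$-a.s.\ for each $k \geq 1$, and hence $\PP_\pi$-a.s.\ $p(X_n) = p(X_0)$ for all $n$. Evaluating at the random index $\tau_\msc$ on the event $\{\tau_\msc < \infty\}$, where $X_{\tau_\msc} \in \msc$ gives $p(X_{\tau_\msc}) = 1$, I obtain $\{\tau_\msc < \infty\} \subseteq \{p(X_0) = 1\}$ modulo $\PP_\pi$-null sets. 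Translating back, $\pi(p) = \PP_\pi(\tau_\msc < \infty) \leq \pi(\{p=1\})$, which combined with $p > 0$ $\pi$-a.e.\ forces $\pi(\{0 < p < 1\}) = 0$, and hence $p = 1$ $\pi$-a.e.

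Putting these two ingredients together proves $\pi^0_\msc = \pi$. For $\pi^1_\msc = \pi$, I repeat the same telescoping, starting instead from $\pi^1_\msc(f) = \sum_{k \geq 1} \PE_\pi[f(X_k) \indi{X_0 \in \msc,\, X_1, \ldots, X_{k-1} \notin \msc}]$ and applying $\shift^{N+1-k}$ to the $k$-th term, which produces $\PE_\pi[f(X_{N+1}) \indi{\tau_\msc \leq N}]$ on the right-hand side; the same limit argument closes the identity. The entire argument outside the martingale recurrence upgrade is essentially clean bookkeeping of disjoint indicator events under shift invariance, so it is that martingale step that I expect to be the only delicate one.
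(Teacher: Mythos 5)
Your proof is correct, and its skeleton coincides with the paper's: a last-exit decomposition combined with stationarity of $\PP_\pi$ under the shift, followed by the key recurrence fact that $\msc$ is visited $\PP_\pi$-a.s. Your telescoping identity is, up to anchoring the decomposition at time $0$ rather than time $1$ and phrasing stationarity via $\shift$ rather than via the Markov property plus $\pi P = \pi$, exactly the paper's display \eqref{eq:kac:one}. Where you genuinely diverge is in the proof of the recurrence statement (the paper's \Cref{lem:probOne}). The paper observes that $\indi{\{\rti_\msc<\plusinfty\}}$ is $\PP_\pi$-a.s. shift-invariant and invokes the correspondence between bounded invariant random variables and harmonic functions (\cite[Theorem~17.1.1]{meyn:tweedie:2012}) to identify it with $\PP_{X_0}(\rti_\msc<\plusinfty)$, which accessibility forces to be positive and hence equal to one. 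You instead show directly that $p(x)=\PP_x(\tau_\msc<\plusinfty)$ is superharmonic, upgrade this to $p=Pp$ $\pi$-a.e. by integrating against the invariant measure, and then use the $\mathrm{L}^2$ orthogonality of martingale increments together with the constancy of the marginal laws to force $(p(X_n))_{n}$ to be $\PP_\pi$-a.s. constant; evaluating at $\tau_\msc$ then yields the zero--one conclusion. This is self-contained and arguably more elementary, at the price of being slightly weaker: you obtain $\PP_\pi(\tau_\msc<\plusinfty)=1$ for the hitting time, which is all that \eqref{eq:kac:formula} needs, whereas the paper's lemma gives the pointwise statement $\PP_x(\rti_\msc<\plusinfty)=1$ for $\pi$-almost all $x$ in terms of the return time, a form that is reused elsewhere (e.g.\ in the proof of \Cref{lem:invariant_measure_kac_process_c}). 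Finally, for $\pi^1_\msc$ you redo the telescoping, while the paper disposes of it in one line via $\pi^1_\msc=\pi^0_\msc P=\pi P=\pi$; both routes are fine.
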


Even though the conclusion of \Cref{thm:kac} is the same as that of \cite[Theorem~10.4.9]{meyn:tweedie:2012}, these results are obtained under different assumptions. Indeed, \cite[Theorem 10.4.9]{meyn:tweedie:2012} requires the Markov kernel $P$ to be recurrent and irreducible, properties that are not assumed in \Cref{thm:kac}. It can also be noted that \cite[Theorem~3.6.5]{douc:moulines:priouret:soulier:2018} provides another statement of Kac's theorem which is expressed in terms of subinvariant measures (instead of invariant probability measures). Consequently, in \cite{douc:moulines:priouret:soulier:2018}, the identity \eqref{eq:kac:formula} holds under the assumptions that (i) $\Cset$ is $\pi$-accessible and (ii) $\PP_x(\sigma_\Cset<\infty)=1$ for $\pi$-almost all $x \in \Cset$, whereas, as we will see in \Cref{lem:probOne} below, it turns out that that (ii) is already implied by (i) in the particular case of invariant probability measures. This allows the proof of \eqref{eq:kac:formula} to be neatly simplified.   

We provide here an intermediate result which will be used in the proof of \Cref{thm:kac}

\begin{lemma} \label{lem:probOne}
    Let $P$ be a Markov kernel on $\Eset \times \Esigma$ with invariant probability measure $\pi$. Then $\Cset \in \Esigma$ is $\pi$-accessible if and only if $\PP_x(\rti_\msc < \plusinfty) = 1$ for $\pi$-almost all $x \in \Eset$.
\end{lemma}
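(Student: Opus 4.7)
The forward implication is trivial, so I focus on the converse: assuming $\Cset$ is $\pi$-accessible, I want to show that $h(x) \eqdef \PP_x(\rti_\Cset < \infty)$ equals $1$ for $\pi$-almost every $x$. The natural first move is a one-step conditioning: the Markov property yields $h(x) = P(x, \Cset) + \int_{\Cset^c} P(x, \rmd y)\, h(y) = Ph(x) + \int_\Cset P(x, \rmd y)\,(1 - h(y))$. Hence $h$ is $P$-superharmonic, and the invariance identity $\pi(Ph) = \pi(h)$ upgrades this inequality to an equality $\pi$-a.s. As a byproduct, $\int_\Cset P(x, \rmd y)\,(1 - h(y)) = 0$ for $\pi$-a.e.\ $x$, whence $\pi(\Cset \cap \{h < 1\}) = 0$ by invariance; that is, $h = 1$ $\pi$-a.e.\ on $\Cset$.

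The crux is to extend $h = 1$ from $\Cset$ to $\Eset$, and here I would run a Jensen--variance argument. Since $Ph^2 \geq (Ph)^2 = h^2$ $\pi$-a.s.\ (Cauchy--Schwarz pointwise) and $\pi(Ph^2) = \pi(h^2)$ by invariance, we must have $Ph^2 = h^2$ $\pi$-a.s., so $\mathrm{Var}_{P(x,\cdot)}(h) = 0$ for $\pi$-a.e.\ $x$. This forces $h(X_1) = h(x)$ $\PP_x$-a.s.\ for $\pi$-a.e.\ $x$. A standard invariance-based induction---if $G$ denotes the full-$\pi$ set on which this property holds, then $\pi(G^c)=0$ forces $P(x, G) = 1$ for $\pi$-a.e.\ $x$---propagates the property to $h(X_n) = h(x)$ $\PP_x$-a.s.\ for all $n$ and $\pi$-a.e.\ $x$.

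To conclude, fix such an $x$. Decomposing $\{\rti_\Cset < \infty\} = \bigcup_k \{\rti_\Cset = k\}$, the previous step gives $h(X_{\rti_\Cset}) = h(x)$ $\PP_x$-a.s.\ on $\{\rti_\Cset < \infty\}$. On the other hand, $X_{\rti_\Cset} \in \Cset$ and $\pi(\Cset \cap \{h < 1\}) = 0$; a union bound using $\pi$-invariance gives $\PP_x(\exists k,\, X_k \in \Cset \cap \{h < 1\}) = 0$, so $h(X_{\rti_\Cset}) = 1$ $\PP_x$-a.s.\ on $\{\rti_\Cset < \infty\}$. Combining, for $\pi$-a.e.\ $x$ either $\PP_x(\rti_\Cset < \infty) = 0$ or $h(x) = 1$, and $\pi$-accessibility excludes the former.

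I expect the main obstacle to be the Jensen--variance step of the second paragraph: it is the nontrivial ingredient that promotes the $\pi$-a.s.\ harmonicity $Ph = h$ into pointwise constancy of $h$ along $P$-orbits, without invoking irreducibility or recurrence of $P$ (as in \cite[Theorem 10.4.9]{meyn:tweedie:2012}). Everything else is careful bookkeeping of $\pi$-null sets transported through the Markov transition and through the random time $\rti_\Cset$.
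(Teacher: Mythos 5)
Your argument is correct, and it establishes the nontrivial direction (accessibility implies return with probability one) by a genuinely different route than the paper --- note only that you have swapped the labels ``forward'' and ``converse'' relative to the statement. The paper's proof works on the path space: since $\{\rti_\Cset \circ \shift < \plusinfty\} \subset \{\rti_\Cset < \plusinfty\}$, the indicator of $\msa \eqdef \{\rti_\Cset < \plusinfty\}$ satisfies $\indi{\msa} \circ \shift \leq \indi{\msa}$, and preservation of $\PP_\pi$ under $\shift$ upgrades this inequality to an equality $\PP_\pi$-a.s., making $\msa$ an invariant event; the correspondence between invariant events and bounded harmonic functions (\cite[Theorem~17.1.1]{meyn:tweedie:2012}) then yields $\PP_{X_0}(\msa) = \indi{\msa}$ $\PP_\pi$-a.s., so that $\PP_x(\msa) \in \{0,1\}$ for $\pi$-almost all $x$, and accessibility excludes the value $0$. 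Your proof stays on the state space and re-derives this zero--one dichotomy by hand: superharmonicity of $h = \PP_{\cdot}(\rti_\Cset < \plusinfty)$ together with $\pi$-invariance gives $Ph = h$ $\pi$-a.e.\ and $h = 1$ $\pi$-a.e.\ on $\Cset$; the Jensen--variance step ($Ph^2 \geq (Ph)^2 = h^2$ $\pi$-a.e.\ while $\pi(Ph^2) = \pi(h^2)$, hence $\mathrm{Var}_{P(x,\cdot)}(h) = 0$ for $\pi$-a.e.\ $x$) shows that $h$ is $\PP_x$-a.s.\ constant along trajectories for $\pi$-a.e.\ $x$; and evaluating at the return time, where $h$ equals $1$ off a $\pi$-null subset of $\Cset$ that the chain a.s.\ avoids under $\PP_x$ for $\pi$-a.e.\ $x$, forces $h(x) = 1$ whenever $h(x) > 0$. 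All of these steps check out ($h$ is bounded, so the second-moment manipulations are legitimate, and the null sets are transported through $P$ and through $\rti_\Cset$ by countably many applications of invariance). What your route buys is self-containedness: it replaces the appeal to the ergodic-theoretic identification of invariant random variables with an elementary variance computation, at the cost of noticeably more bookkeeping. Both arguments avoid any irreducibility or recurrence assumption on $P$, which is the whole point of the lemma.
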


\begin{proof}
    Assume that $\msc$ is $\pi$-accessible and set $\msa \eqdef \{\rti_\msc < \plusinfty\}$. Since $\{\rti_\msc \circ \shift<\plusinfty\} \subset \{\rti_\msc<\plusinfty\}$, it holds that $\indi{\msa} \circ \shift\leq \indi{\msa}$. Combining this with \eqref{eq:measure_preserving} yields $\indi{\msa} \circ \shift = \indi{\msa}$, $\PP_\pi$-a.s., \ie, $\indi{\msa}$ is $\PP_\pi$-a.s. an invariant random variable for the dynamical system $(\Eset^\nset, \Esigma^{\varotimes \nset}, \PP_\pi, \shift)$. By \cite[Theorem~17.1.1]{meyn:tweedie:2012}, $\PP_{X_0}(\msa) = \indi{\msa}$, $\PP_\pi$-a.s. Since $\msc$ is $\pi$-accessible by assumption, it holds that $\PP_x(\msa)>0$ for $\pi$-almost all $x \in \Eset$; thus, we obtain 
    \begin{equation*}
        1 = \int \pi(\rmd x) \, \indiacc{\PP_x(\msa) > 0} = \PP_\pi(\PP_{X_0}(\msa) > 0)
        = \PP_\pi(\indi{\msa} > 0) = \PP_\pi(\msa) = \PP_\pi(\rti_\msc < \plusinfty).
    \end{equation*}
    Consequently, 
    $$
        0 = 1 - \PP_\pi(\rti_\msc<\plusinfty) = \int \pi(\rmd x) \lr{1 - \PP_x(\rti_\Cset<\plusinfty)},
    $$
    from which we finally conclude that $\PP_x(\rti_\Cset<\plusinfty)=1$ for $\pi$-almost all $x \in \Eset$. The converse is obvious.
\end{proof}

\begin{proof}[Proof of \Cref{thm:kac}]
By the last-exit decomposition and the Markov property we have for all $f \in \posfunc{\Esigma} \cap \bddfunc{\Esigma}$ and $n \in \nsetpos$, 
\begin{align*}
    \pi(f) &
    = \sum_{\ell=1}^n \PE_\pi \lrb{f(X_n) \indi{\msc}(X_\ell) \prod_{k=\ell+1}^n \indi{\msc^c}(X_{k})}+ \PE_\pi[f(X_n) \indin{\rti_\msc>n}] \\
    &= \sum_{\ell=1}^n \PE_\pi \lrb{\indi{\msc}(X_\ell) \PE_{X_{\ell}}\lrb{f(X_{n-\ell}) \prod_{k=1}^{n-\ell}\indi{\msc^c}(X_{k})}} + \PE_\pi[f(X_n) \indin{\rti_\msc>n}]\eqsp.
\end{align*}
Noting that $\pi$ is invariant and setting $k=n-\ell$, we get
\begin{align}
    \pi(f) &= \sum_{k=0}^{n-1} \int_\msc \pi(\rmd x) \, \PE_x[f(X_k) \indin{\rti_\msc>k}]+ \PE_\pi[f(X_n) \indin{\rti_\msc>n}] \nonumber \\
    &= \int_\msc \pi(\rmd x) \, \PE_x \lrb{\sum_{k = 0}^{(n - 1) \wedge (\rti_\msc -1)} f(X_k)} + \PE_\pi[f(X_n) \indin{\rti_\msc>n}] \eqsp. \label{eq:kac:one}
\end{align}
We now let $n \to \infty$ in \eqref{eq:kac:one}. Since $f$ is bounded and $\PP_\pi(\rti_\msc=\plusinfty)=0$ by~\Cref{lem:probOne}, the dominated convergence theorem implies that the second term on the right-hand side of \eqref{eq:kac:one} tends to zero. Moreover, by applying the monotone convergence theorem to the first term, we obtain that $\pi = \pi^0_\msc$. Finally, this implies the last equality of \eqref{eq:kac:formula}, since by the Markov property, $\pi^1_\msc=\pi^0_\msc P=\pi P= \pi$.
\end{proof}

\begin{remark} \label{rem:arbitrary}
  If $C \in \Esigma$ is arbitrary (and not necessarily $\pi$-accessible), then by \eqref{eq:kac:one}, $\pi(f) \geq \int_\msc \pi(\rmd x) \, \PE_x [\sum_{k = 0}^{(n - 1) \wedge (\rti_\msc -1)} f(X_k)]$ for every $f \in \posfunc{\Esigma}$. Thus, the monotone convergence theorem implies that
  \begin{equation*}
  \pi(f) \geq \lim_{n\to \infty} \int_\msc \pi(\rmd x) \, \PE_x \lrb{\sum_{k = 0}^{(n - 1) \wedge (\rti_\msc -1)} f(X_k)} = \int_\msc \pi(\rmd x) \, \PE_x \lrb{\sum_{k = 0}^{\rti_\msc - 1}f(X_k)}. 
  \end{equation*} 
\end{remark}

\begin{remark} \label{rem:positive}
If $\msc \in \Esigma$ is $\pi$-accessible, then \Cref{thm:kac} implies that $\pi(\msc)>0$; indeed, otherwise $\pi_\msc^0$, and hence $\pi$, would be the null measure, which is not possible. 
\end{remark}

To apply \Cref{thm:kac}, we need to check that  $\msc$ is $\pi$-accessible. If the Markov kernel $P$ has a unique invariant probability measure $\pi$, it turns out that this condition can be simplified.

\begin{lemma} \label{lem:verif_C_acc_unique_invariant}
    Let $P$ be a Markov kernel on $\Eset \times \Esigma$ with a unique invariant probability measure $\pi$. Then every set $\Cset \in \Esigma$ such that $\pi(\Cset) > 0$ is $\pi$-accessible.
\end{lemma}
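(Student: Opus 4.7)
The plan is to argue by contradiction: suppose $\Cset \in \Esigma$ satisfies $\pi(\Cset) > 0$ yet is not $\pi$-accessible, so that the measurable set $A \eqdef \{x \in \Eset : \PP_x(\rti_\Cset < \plusinfty) = 0\}$ has $\pi(A) > 0$. From this I will produce a second invariant probability measure for $P$, contradicting uniqueness.

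The key step, and main obstacle, is to show that $A$ is absorbing for $P$, \ie, $P(x, A) = 1$ for every $x \in A$. Letting $h(x) \eqdef \PP_x(\rti_\Cset < \plusinfty)$, a first-step decomposition via the Markov property yields
\[
h(x) = P(x, \Cset) + \int_{\Cset^c} P(x, \rmd y) \, h(y).
\]
For $x \in A$ both non-negative terms must vanish, so $P(x, \Cset) = 0$ and $h(y) = 0$ for $P(x, \cdot)$-almost every $y \in \Cset^c$, which combines into $P(x, A) = P(x, \Cset^c) = 1$. A straightforward induction then gives $P^n(x, \Cset) = 0$ for every $n \in \nsetpos$ and every $x \in A$. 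In particular, $\pi(A) = 1$ is impossible, since invariance of $\pi$ would then force $\pi(\Cset) = \int_A \pi(\rmd x) \, P(x, \Cset) = 0$, contradicting the assumption $\pi(\Cset) > 0$. Hence $0 < \pi(A) < 1$.

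It remains to verify that the normalized restriction $\pi_A(\cdot) \eqdef \pi(\cdot \cap A)/\pi(A)$ is $P$-invariant and distinct from $\pi$. Applying invariance of $\pi$ to $A$ together with $P(x, A) = 1$ on $A$ first yields $P(x, A) = 0$ for $\pi$-almost every $x \in A^c$; combined with absorption, this allows one to check $\pi_A P(B) = \pi(B \cap A)/\pi(A) = \pi_A(B)$ for every $B \in \Esigma$ by elementary manipulation of $\pi(B \cap A) = \pi P(B \cap A)$ split over $A$ and $A^c$. Since $\pi_A(A^c) = 0 < \pi(A^c)$, the measure $\pi_A$ is a second invariant probability measure, contradicting uniqueness. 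The only delicate point beyond the absorption step is keeping track of $\pi$-negligible exceptional sets when manipulating $P(\cdot, A)$ off $A$; everything else is routine.
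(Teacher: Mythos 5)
Your proof is correct, but it takes a genuinely different route from the paper's. The paper invokes the fact that uniqueness of the invariant probability measure makes the associated shift dynamical system ergodic (\cite[Theorem~5.2.6]{douc:moulines:priouret:soulier:2018}) and then applies Birkhoff's ergodic theorem to $\indi{\Cset}$ to get $\lim_n n^{-1}\sum_{k=0}^{n-1}\indi{\Cset}(X_k)=\pi(\Cset)>0$ $\PP_\pi$-a.s., which immediately forces $\PP_x(\rti_\Cset<\plusinfty)=1$ for $\pi$-almost all $x$. You instead argue by contradiction and construct a second invariant probability measure explicitly: the first-step decomposition of $h(x)=\PP_x(\rti_\Cset<\plusinfty)$ correctly shows that $\Aset=\{h=0\}$ is absorbing with $P(x,\Cset)=0$ on $\Aset$, which rules out $\pi(\Aset)=1$; the identity $\int_{\Aset^c}\pi(\rmd x)\,P(x,\Aset)=0$ (obtained from $\pi(\Aset)=\pi P(\Aset)$ and absorption) then makes the verification that $\pi_\Aset$ is $P$-invariant go through, and $\pi_\Aset(\Aset^c)=0<\pi(\Aset^c)$ contradicts uniqueness. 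All the steps check out, including the handling of the $\pi$-null exceptional set off $\Aset$. What each approach buys: yours is elementary and self-contained, using only the Markov property and basic measure theory (it is essentially the standard proof that a non-trivial absorbing set yields a non-trivial invariant restriction), at the cost of some bookkeeping; the paper's is shorter given the ergodic-theoretic machinery it cites elsewhere, and it delivers directly the stronger conclusion $\PP_x(\rti_\Cset<\plusinfty)=1$ $\pi$-a.e. (which your argument yields only after combining with \Cref{lem:probOne}, as the paper itself does separately). One minor remark: the induction giving $P^n(x,\Cset)=0$ for all $n$ is not needed, since ruling out $\pi(\Aset)=1$ only uses the case $n=1$.
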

Before proving \Cref{lem:verif_C_acc_unique_invariant}, we combine it with \Cref{thm:kac}
to obtain the following corollary, which allows Kac's
theorem to be applied to every set $\Cset \in \Esigma$ such that $\pi(\Cset) > 0$. 
\begin{corollary} \label{cor:kac}
    Let $P$ be a Markov kernel on $\Eset \times \Esigma$
    with a unique invariant probability measure $\pi$.
    Then for all $\msc \in \Esigma$ such that $\pi(\msc) > 0$, $\pi = \pi^0_\msc = \pi^1_\msc$,
    where the measures $\pi^0_\msc$ and $\pi^1_\msc$ are defined in \eqref{def:pi0}.
\end{corollary}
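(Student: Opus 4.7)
The Corollary is an immediate consequence of the two ingredients already in hand: Lemma \ref{lem:verif_C_acc_unique_invariant} guarantees that every $\Cset \in \Esigma$ with $\pi(\Cset) > 0$ is $\pi$-accessible whenever $\pi$ is the unique invariant probability measure for $P$, and Theorem \ref{thm:kac} then delivers the identities $\pi = \pi^0_\Cset = \pi^1_\Cset$. So the entire substance of the proof really lies in establishing Lemma \ref{lem:verif_C_acc_unique_invariant}, and the plan below is to sketch how I would approach it.

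I would argue by contradiction, exploiting uniqueness of the invariant probability measure. Introduce the ``bad'' set
\[
    A \eqdef \{x \in \Eset : \PP_x(\sigma_\Cset < \infty) = 0\},
\]
so that $\Cset$ is $\pi$-accessible precisely when $\pi(A) = 0$. The first key step is to show that $A$ is absorbing, i.e., $P(x, A) = 1$ for every $x \in A$. This comes from a one-step decomposition: since for any $y \in \Cset^c$ the return time $\sigma_\Cset$ coincides with the hitting time of $\Cset$, the Markov property gives, for $x \in A$,
\[
    0 = \PP_x(\sigma_\Cset < \infty) = P(x, \Cset) + \int_{\Cset^c} P(x, \rmd y) \, \PP_y(\sigma_\Cset < \infty),
\]
and both nonnegative terms must vanish. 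This yields $P(x, \Cset) = 0$ together with $P(x, \Cset^c \setminus A) = 0$, hence $P(x, A) = 1$.

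Now, if $\pi(A) > 0$, the normalized restriction $\pi_A \eqdef \pi(\cdot \cap A)/\pi(A)$ would be a well-defined probability measure. Combining the absorbing property with the invariance of $\pi$, a direct computation gives $\pi_A P \leq \pi_A$ as measures; since both sides are probability measures with total mass one, equality must hold, so $\pi_A$ is an invariant probability measure for $P$. By the uniqueness assumption, $\pi_A = \pi$, which forces $\pi(A) = 1$. But then $P(\cdot, \Cset) = 0$ holds $\pi$-almost everywhere (by the first part of the argument), and invariance of $\pi$ gives $\pi(\Cset) = \pi P(\Cset) = 0$, contradicting the assumption $\pi(\Cset) > 0$. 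We conclude $\pi(A) = 0$, i.e., $\Cset$ is $\pi$-accessible, and then the Corollary follows from Theorem \ref{thm:kac}.

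The main obstacle I foresee is the bookkeeping around the strict return time $\sigma_\Cset$ (as opposed to a hitting time), which is what necessitates the one-step decomposition and the observation that on $\Cset^c$ the return and hitting times agree. Once this is handled, the uniqueness argument is short, and closing the contradiction via $\pi(\Cset) = \pi P(\Cset) = 0$ is essentially mechanical.
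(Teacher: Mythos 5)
Your proposal is correct, and it reduces the corollary to \Cref{lem:verif_C_acc_unique_invariant} plus \Cref{thm:kac} exactly as the paper does; the difference lies entirely in how you prove the lemma. The paper's proof of \Cref{lem:verif_C_acc_unique_invariant} invokes the fact that uniqueness of the invariant probability measure makes the shift dynamical system ergodic and then applies Birkhoff's theorem: $n^{-1}\sum_{k=0}^{n-1}\indi{\Cset}(X_k)\to\pi(\Cset)>0$, $\PP_\pi$-a.s., so the chain returns to $\Cset$ with $\PP_\pi$-probability one. Your argument instead introduces the set $A=\{x:\PP_x(\sigma_\Cset<\infty)=0\}$, shows via a one-step decomposition that $A$ is absorbing, deduces from absorption and invariance that $\pi_A P\leq\pi_A$ and hence $\pi_A=\pi$ by uniqueness if $\pi(A)>0$, and closes the contradiction with $\pi(\Cset)=\pi P(\Cset)=0$. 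I checked the details: the identity $0=P(x,\Cset)+\int_{\Cset^c}P(x,\rmd y)\,\PP_y(\sigma_\Cset<\infty)$ correctly forces $P(x,A)=1$ for $x\in A$ (since $A^c\subset\Cset\cup(\Cset^c\setminus A)$), and the step from $\pi_A P\leq\pi_A$ to equality for probability measures is sound. Your route is more elementary—it uses only the Markov property and uniqueness, with no appeal to ergodic theory—at the cost of delivering only $\PP_x(\sigma_\Cset<\infty)>0$ for $\pi$-almost all $x$, i.e., $\pi$-accessibility in the paper's sense, rather than the a.s.-finiteness of $\sigma_\Cset$ that the Birkhoff argument yields directly; but that is all the corollary needs, and the upgrade to probability one is anyway supplied by \Cref{lem:probOne}.
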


\begin{proof}[Proof of \Cref{lem:verif_C_acc_unique_invariant}] 
    Let $\msc \in \Esigma$ be such that $\pi(\msc) > 0$. Since $P$ has a unique invariant probability measure, \cite[Theorem~5.2.6]{douc:moulines:priouret:soulier:2018} implies that the associated dynamical system is ergodic. Therefore, by the Birkhoff ergodic theorem,  
    $$
    \lim_{n \to \plusinfty} n^{-1} \sum_{k = 0}^{n-1} \indi{\msc}(X_k) = \pi(\msc) > 0, \quad \mbox{$\PP_\pi$-a.s.}
    $$
    Thus, $\PP_\pi(\rti_\Cset <\plusinfty)=1$, implying that $\PP_x(\rti_\msc<\plusinfty) = 1 > 0$ for $\pi$-almost all $x \in \Eset$. This shows that $\msc$ is $\pi$-accessible.
\end{proof}


\section{The Kick-Kac teleportation process}
\label{sec:telep-proc}
Given  $\pi \in \measureset_1(\Esigma)$, we will now use Kac's theorem (\Cref{thm:kac}) to construct a stochastic process targeting $\pi$. In the following we let $P$ be some $\pi$-invariant Markov kernel on $\Eset \times \Esigma$; this kernel will be referred to as the \emph{base kernel}.

\subsection{The memoryless Kick-Kac teleportation process}

As we saw in the previous section, the following assumption guarantees that Kac's theorem (\Cref{thm:kac}) applies. 

\begin{hyp}{A}
    \item \label{ass:P:inv}
    The Markov kernel $P$ allows an invariant probability measure $\pi$ such that $\Cset$ is $\pi$-accessible.
\end{hyp}
As shown by \Cref{lem:verif_C_acc_unique_invariant},
\ref{ass:P:inv}
is satisfied if $\pi$ is the unique invariant probability measure of
$P$ and $\pi(\msc)>0$.

Under \ref{ass:P:inv}, the probability measure
\begin{equation}
  \label{eq:def_pi_C}
  \pi_\msc : \Esigma \ni \Aset \mapsto \pi(\Aset \cap \msc)/\pi(\msc) 
\end{equation}
is well defined in the light of \Cref{rem:positive} and provides the conditional probabilities of $\pi$ given $\msc$. Note that \eqref{eq:kac:formula}--\eqref{def:pi0} express the invariant probability measure $\pi$ in terms of its restriction to the set $\msc$ only. This remarkable fact is underpinning the construction of the \emph{memoryless Kick-Kac teleportation} (KKT) \emph{process}, whose evolution is described by \Cref{alg:memoryless:telep}. In words, the memoryless KKT process evolves by proposing a candidate according to the Markov kernel $P$. If the candidate falls
into the region $\msc$, it is discarded and the next state is drawn exactly from $\pi_\msc$, independently of the past (in this sense, the process is \emph{memoryless}); otherwise, if the candidate falls outside $\msc$, it is accepted as the new state of the process. 

\begin{algorithm}[h!]
    \caption{The memoryless Kick-Kac teleportation process}
    \label{alg:memoryless:telep}
    \begin{algorithmic}[1]
        \State {\bf Initialization:} draw $Y_0$
        \For{$k\gets 1$ to $n$}
            \State draw $Y_k^\star \sim P(Y_{k-1},\cdot)$
            \If {$Y_k^\star \notin \msc$}
                \State set $Y_{k} \gets Y_k^\star$
            \Else
                \State draw $Y_k\sim \pi_\msc$
            \EndIf
        \EndFor
    \end{algorithmic}
\end{algorithm}

Of course, the construction requires the set $\Cset$ to be chosen in such a way that exact sampling from $\pi_{\msc}$ is feasible. A natural approach is based on the accept-reject algorithm. More precisely, assume that the probability measure $\pi$ is absolutely continuous with respect to  some $\sigma$-finite measure $\nu$ and denote by $p_{\pi}$ a version of its Radon--Nikodym derivative with respect to $\nu$. In addition, let $q: \Eset \to \rset_+$ be another probability density function with respect to $\nu$, serving as instrumental density, and assume that $\msc \subset \set{x \in \Eset}{p_{\pi}(x) \leq \epsilon q(x)}$. Then, as $\pi_{\msc}$ admits a density proportional to $\1_\msc \, p_{\pi}$ with respect to $\nu$ and $\1_\msc (x) p_{\pi}(x) \leq \epsilon q(x)$ for every $x \in \Eset$, the accept-reject method can be applied to obtain \iid~samples from $\pi_{\msc}$, provided that the ratio $\1_{\msc}(x)p_{\pi}(x)/[\epsilon q(x)]$ is computable for all $x \in \Eset$ such that $q(x) >0$.

The Markov transition kernel $S$ on $\Eset \times \Esigma$ associated with the Markov chain $\seq{Y_n}{n \in \nset}$ is given by
$$
    S h(y) \eqdef \int_{\Cset^c} P(y,\rmd y') \, h(y') + P(y, \Cset) \pi_\Cset(h), \quad (y, h) \in \Eset \times \posfunc{\Esigma}.
$$
Since $\pi$ is invariant for $P$, it holds that  
\begin{equation*}
    \pi S h = \pi P(\indi{\Cset^c} h)+\pi P(\Cset) \pi_\Cset(h)=\pi(\indi{\Cset^c} h)+ \pi(\indi{\Cset}h)=\pi(h), 
\end{equation*}
which means that $\pi$ is also invariant for $S$ (some kernel notation, such as $S h$ and $\pi S$ above, that will be used in the rest of the paper is provided in \Cref{sec:notation}).

We now address the $\pi$-reversibility of $S$. In the following, we let $\pi \tensprod P$ be the probability measure in $\measureset_1(\Esigma^{\tensprod 2})$ defined by 
$ \pi \tensprod P(\Aset) \eqdef \int \1_{\Aset}(x, y) \, \pi(\rmd x) \, P(x, \rmd y)$, $\Aset \in \Esigma^{\tensprod  2}$. Note that with this notation, $P$ is reversible with respect
to $\pi$ if and only if for every $\Aset \in \Esigma$ and $\Bset \in \Esigma$, $ \pi \tensprod P(\Aset \times \Bset) =  \pi \tensprod P(\Bset \times \Aset)$. In addition, for any given $\msd \in \Esigma$, let $\Esigma_{\msd} \eqdef \set{\msd \cap \Aset}{\Aset \in\Esigma}$ be the trace $\sigma$-field of $\Esigma$ on $\msd$ and denote, for any measure $\nu \in \measureset_1(\Esigma)$, by $\nu |_{\msd}$ the restriction of $\nu$ to $\Esigma_\msd$. 

\begin{proposition} \label{prop:nonrev}
  Assume \ref{ass:P:inv} and that $\Esigma_{\Cset^c}$ is countably generated. Then $S$ is $\pi$-reversible if and only if the following two conditions are satisfied. 
  \begin{enumerate}[(a)]
    \item For every $\Aset \in \Esigma_{\Cset^c}^{\tensprod 2}$,   
    $$
    \int \hspace{-2mm} \int \indi{\Aset}(x,y) \, \pi(\rmd x) \, P(x,\rmd y) = \int \hspace{-2mm} \int  \indi{\Aset}(y,x) \, \pi(\rmd x) \, P(x, \rmd y). 
    $$ \label{item:nonrev:a} 
    \item \label{item:nonrev:b} There exists $\mu \in \measureset(\Esigma_{\Cset^c})$  
    such that $P(y,\cdot)|_{\Cset^c} = \mu$ for $\pi$-almost all $y\in\Cset$. 
  \end{enumerate} 
\end{proposition}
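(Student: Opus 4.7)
The strategy is to probe the reversibility identity $(\pi \tensprod S)(A \times B) = (\pi \tensprod S)(B \times A)$ on measurable rectangles, splitting the analysis according to which of $\Cset, \Cset^c$ each of $A, B$ lies in. The explicit form of $S$ yields
\begin{equation*}
(\pi \tensprod S)(A \times B) = \int_A \pi(\rmd x) \, P(x, B \cap \Cset^c) + \pi_\Cset(B) \int_A \pi(\rmd x) \, P(x, \Cset),
\end{equation*}
so, using the symmetry of the reversibility identity in $(A, B)$, it is enough to deal with three cases: (C1) $(A, B) \in \Esigma_{\Cset^c}^{\tensprod 2}$, (C2) $A \in \Esigma_{\Cset^c}$ and $B \in \Esigma_{\Cset}$, and (C3) $(A, B) \in \Esigma_{\Cset}^{\tensprod 2}$.

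For the implication (a) $\wedge$ (b) $\Rightarrow$ $\pi$-reversibility of $S$, case (C1) coincides with condition (a) on rectangles. In case (C2) the left-hand side equals $\pi_\Cset(B) \int_A \pi(\rmd x) \, P(x, \Cset)$ while the right-hand side equals $\int_B \pi(\rmd x) \, P(x, A)$; I would combine the $\pi$-invariance of $P$ with (a) applied to $A \times \Cset^c$ and with (b) to derive $\int_A \pi(\rmd x) \, P(x, \Cset) = \pi(\Cset) \mu(A)$, and then use (b) on the right-hand side to conclude that both sides evaluate to $\pi(B) \mu(A)$. In case (C3), (b) gives $P(x, \Cset) = 1 - \mu(\Cset^c)$ for $\pi$-almost all $x \in \Cset$, so the desired equality reduces to the trivial identity $\pi_\Cset(B) \pi(A) = \pi_\Cset(A) \pi(B)$.

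For the converse, case (C1) directly yields (a) on rectangles, and a monotone class argument extends it to all of $\Esigma_{\Cset^c}^{\tensprod 2}$. Case (C2) gives
\begin{equation*}
\int_B \pi(\rmd x) \, P(x, A) = \pi_\Cset(B) \int_A \pi(\rmd x) \, P(x, \Cset) = \pi(B) \mu(A),
\end{equation*}
where $\mu(A) \eqdef \pi(\Cset)^{-1} \int_A \pi(\rmd x) \, P(x, \Cset)$ defines a finite measure on $\Esigma_{\Cset^c}$. Uniqueness of Radon--Nikodym densities then yields, for each fixed $A \in \Esigma_{\Cset^c}$, a $\pi$-null set $N_A \subset \Cset$ such that $P(x, A) = \mu(A)$ for all $x \in \Cset \setminus N_A$.

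The hard part is to strengthen this ``$A$ by $A$'' statement into the single $\pi$-null exceptional set required by (b), since a priori $N_A$ depends on $A$. This is precisely where the countable generation of $\Esigma_{\Cset^c}$ enters. The plan is to fix a countable family $\mcg \subset \Esigma_{\Cset^c}$ that contains $\Cset^c$, is closed under finite intersections, and generates $\Esigma_{\Cset^c}$; then $N \eqdef \bigcup_{A \in \mcg} N_A$ remains $\pi$-null, and for every $x \in \Cset \setminus N$ the finite measures $P(x, \cdot)|_{\Cset^c}$ and $\mu$ coincide on $\mcg$. Dynkin's $\pi$-$\lambda$ theorem then promotes this agreement to all of $\Esigma_{\Cset^c}$, establishing (b).
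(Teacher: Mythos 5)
Your proposal is correct and follows essentially the same route as the paper: the case split over $\Cset$ versus $\Cset^c$ on rectangles is the paper's four-term decomposition of $[\pi \tensprod S](\Aset_0 \times \Aset_1)$, the identity $\int_A \pi(\rmd x)\, P(x,\Cset) = \pi(\Cset)\mu(A)$ via invariance and \ref{item:nonrev:a} is exactly the paper's treatment of the mixed term, and the converse's passage from a per-set null set to a single one via a countable generating $\pi$-system containing $\Cset^c$ and Dynkin's $\pi$--$\lambda$ theorem is the paper's argument verbatim (the paper realizes your Radon--Nikodym uniqueness step by testing against the sets $\{P(\cdot,\Bset_k) \gtrless \mu(\Bset_k)\}$).
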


The proof of \Cref{prop:nonrev} is given in \Cref{sec:S:nonRev}. 

In the light of \Cref{prop:nonrev}, we may expect $S$ to be non-reversible in general; indeed, even if $P$ is $\pi$-reversible (implying \ref{item:nonrev:a}), condition \ref{item:nonrev:b} will typically not be satisfied. Some classes of MCMC methods based on non-reversible kernels have been shown to exhibit favorable convergence properties compared to standard reversible methods. For instance, \cite{diaconis:holmes:neal:2000} carries through a detailed study of a particular type of \textit{lifted Markov chains}, and shows theoretically and numerically that this class of MCMC algorithms provides better accuracy than standard reversible ones. In addition, by extending results in \cite{tierney:1998}, \cite{andrieu:livingstone:2019} establishes that non-reversible MCMC algorithms most often lead to improved asymptotic variance in the case where a central limit theorem holds. However, whereas the non-reversibility of the memoryless KKT algorithm is encouraging, the requirement of exact sampling from $\pi_\Cset$ may limit the choice of the set $\Cset$. In the next section we will show how this drawback can be circumvented by including an additional Markov transition kernel into the construction. 

\subsection{The Kick-Kac teleportation process}

To avoid exact sampling from the conditional probability measure $\pi_\msc$ defined by \eqref{eq:def_pi_C}, we now provide an alternative to the memoryless KKT process. The process that we will construct includes another level of transitions according to some Markov kernel $Q$ on $\msc \times \Esigma_\msc$ which will be referred to as the teleportation kernel and will be supposed to satisfy the following condition.
\begin{hyp}{A}
  \item \label{ass:Q:inv}
  The Markov transition kernel $Q$ on $\msc \times \Esigma_\msc$ allows $\pi_\Cset$ as invariant probability measure.
\end{hyp}

The output $\seq{Y_k}{k \in \nset}$ of \Cref{alg:telep}, in which the exact sampling from $\pi_\Cset$ in \Cref{alg:memoryless:telep} is replaced by Markovian moves according to $Q$, will be referred to as the \emph{Kick-Kack teleportation} (KKT) \emph{process} in the sequel. In \Cref{alg:telep}, a candidate is proposed according to the kernel $P$. If it falls outside $\Cset$, the candidate is accepted as the new state of the process. On the other hand, if the candidate falls into the critical region $\Cset$, it is discarded, and a new state of the process is generated through a transition according to $Q$, starting off from the last state in $\Cset$ in the past history of the process. In the case where $\Cset$ is a region where the ergodic behavior of $P$ is poor, the KKT process simply replaces transitions according to $P$ across $\Cset$ by transitions according to the kernel $Q$, which is taylor-made to target efficiently $\pi_{\msc}$.

\begin{algorithm}[h]
    \caption{The Kick-Kack teleportation (KKT) process}
    \label{alg:telep}
    \begin{algorithmic}[1]
        \State {\bf Initialization:} draw $(Y_0,Z_0)$
        \For{$k \gets 1$ to $n$}
            \State draw $Y^\star_k \sim P(Y_{k - 1}, \cdot)$
            \If {$Y^\star_k \notin \msc$}
                \State set $(Y_k, Z_k) \gets (Y^\star_k,Z_{k-1})$
            \Else
                \State draw $Z_k \sim Q(Z_{k - 1}, \cdot)$ and set $Y_k \gets Z_k$
            \EndIf
        \EndFor
    \end{algorithmic}
\end{algorithm}

While $\seq{Y_k}{k\in \nset}$ is not a Markov chain in general, so is $\seq{Y_k, Z_k}{k \in \nset}$. It is easily seen that the Markov transition kernel of the latter bivariate process is given by
\begin{equation}
  \label{eq:def_R_Kac}
  R h(y, z) \eqdef \int_{\Cset^c} P(y, \rmd y') \, h(y', z)+ P(y, \Cset) \int_{\Cset} Q(z, \rmd z') \, h(z',z'),  
\end{equation}
for $(y, z) \in \Eset \times \Cset$ and $h \in \posfunc{\Esigma \tensprod \Csigma}$. We now provide an expression of an invariant probability measure of $R$ which allows $\pi$ as a marginal. This constitutes a first step towards justifying theoretically that the KKT process can be used for producing approximate draws from $\pi$. 

From now on, we will deal simultaneously with several Markov kernels. Thus, we will specify the notation $\PP_\Xinit$ to avoid ambiguities. More precisely, for a given measurable space $(\Fset, \Fsigma)$, Markov kernel $K$ on $\Fset \times \Fsigma$ and initial distribution $\Xinit \in \measureset_1(\Fsigma)$ we denote by $\PP^K_\Xinit$ the probability measure on $(\Fset^\nset, \Fsigma^{\tensprod \nset})$ induced by $K$ and $\Xinit$. The associated expectation operator is denoted $\PE^K_\Xinit$. If $x\in \Fset$ and $\Xinit=\delta_x$, we write $\PPP{K}{x}$ and $\PEE{K}{x}$ instead of $\PPP{K}{\delta_x}$ and $\PEE{K}{\delta_x}$, respectively. 
Now, define the probability measure 
\begin{equation}
    \label{eq:df_tilde_pi_kac_c}
    \check{\pi}(h) \eqdef \int_{\msc} \pi(\rmd x)  \, \PEE{P}{x} \lrb{\sum_{k=0}^{\rti_{\msc}-1} h(X_k,x)}, \quad h \in \posfunc{\Esigma \tensprod \Csigma}.
\end{equation}
The following key result justifies theoretically the use of the KKT process for targeting $\pi$. 
\begin{proposition}
    \label{lem:invariant_measure_kac_process_c}
    Assume \ref{ass:P:inv}. Then, $\pi$ is the marginal of $\check{\pi}$ with respect to the first component. Moreover, if also \ref{ass:Q:inv} holds, then $\check{\pi}$ is an invariant probability measure for $R$.
\end{proposition}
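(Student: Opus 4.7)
The plan is to prove the two assertions in sequence, the first as an immediate corollary of Kac's theorem and the second by direct computation of $\check\pi R$. Throughout I will freely use that, by \Cref{lem:probOne}, $\rti_\msc < \plusinfty$ $\PPP{P}{x}$-a.s. for $\pi$-almost every $x \in \Eset$, so that all sums up to $\rti_\msc - 1$ are finite a.s.

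For the marginal claim, I test $\check\pi$ against functions of the product form $h(y,z) = f(y)$ with $f \in \posfunc{\Esigma}$. Then directly from \eqref{eq:df_tilde_pi_kac_c},
\[
\check\pi(h) = \int_\msc \pi(\rmd x)\, \PEE{P}{x}\Bigl[\sum_{k=0}^{\rti_\msc - 1} f(X_k)\Bigr] = \pi^0_\msc(f),
\]
and this equals $\pi(f)$ by \Cref{thm:kac} under \ref{ass:P:inv}. Taking $f \equiv 1$ gives $\check\pi(\Eset \times \msc) = \pi(\Eset) = 1$, so $\check\pi$ is indeed a probability measure and $\pi$ is its first marginal.

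For the invariance claim, I split $\check\pi(Rh) = I_1 + I_2$ according to the two terms in \eqref{eq:def_R_Kac}:
\[
I_1 = \int_\msc \pi(\rmd x)\, \PEE{P}{x}\Bigl[\sum_{k=0}^{\rti_\msc - 1} \int_{\msc^c} P(X_k, \rmd y')\, h(y', x)\Bigr],
\]
\[
I_2 = \int_\msc \pi(\rmd x)\, \PEE{P}{x}\Bigl[\sum_{k=0}^{\rti_\msc - 1} P(X_k, \msc)\Bigr] \int_\msc Q(x, \rmd z')\, h(z', z'),
\]
where in $I_2$ the teleportation integral depends on $x$ but not on $X_k$, and can therefore be pulled out of the inner expectation. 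For $I_1$, I apply the Markov property at time $k$ (noting that $\{k < \rti_\msc\}$ is $\mathcal{F}_k$-measurable) to rewrite the inner integral as $\indi{\msc^c}(X_{k+1})\,h(X_{k+1},x)$. Since $X_j \notin \msc$ for $j \in \{1, \ldots, \rti_\msc - 1\}$ and $X_{\rti_\msc} \in \msc$ by definition of $\rti_\msc$, the sum telescopes to $\sum_{j=1}^{\rti_\msc - 1} h(X_j, x) = \sum_{k=0}^{\rti_\msc - 1} h(X_k, x) - h(x, x)$, using $X_0 = x$. Hence $I_1 = \check\pi(h) - \int_\msc \pi(\rmd x)\, h(x, x)$.

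For $I_2$, the same Markov-property/telescoping argument shows that $\PEE{P}{x}[\sum_{k=0}^{\rti_\msc - 1} P(X_k, \msc)] = \PEE{P}{x}[\sum_{k=0}^{\rti_\msc - 1} \indi{\msc}(X_{k+1})]$, and this sum counts exactly the single visit to $\msc$ at time $\rti_\msc$, hence equals $1$ a.s. Therefore, writing $g(z') := h(z', z')$, one obtains $I_2 = \int_\msc \pi(\rmd x)\, (Qg)(x) = \pi(\msc)\, \pi_\msc(Qg) = \pi(\msc)\, \pi_\msc(g) = \int_\msc \pi(\rmd x)\, h(x, x)$, where the $Q$-invariance of $\pi_\msc$ (assumption \ref{ass:Q:inv}) is used exactly once, in the middle equality. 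Summing, the boundary terms $\int_\msc \pi(\rmd x)\, h(x, x)$ cancel and $\check\pi(Rh) = \check\pi(h)$. The main obstacle is the bookkeeping with the random horizon $\rti_\msc$: the cancellation only becomes visible after carefully isolating the $k = 0$ contribution in $I_1$ (the discarded starting point of the excursion) against the unique $k = \rti_\msc - 1$ contribution in $I_2$ (the return to $\msc$ that triggers the teleportation).
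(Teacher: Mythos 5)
Your proof is correct and follows essentially the same route as the paper's: the same test against functions $f(y)$ plus Kac's theorem for the marginal claim, and the same decomposition of $\check{\pi}Rh$ into the two terms of \eqref{eq:def_R_Kac}, with the Markov property shifting the index, the observation that $X_k\in\msc^{\comp}$ for $1\le k<\rti_\msc$ and $X_{\rti_\msc}\in\msc$, and a single use of \ref{ass:Q:inv} to identify the second term with $\int_\msc\pi(\rmd x)\,h(x,x)$. No gaps; the bookkeeping of the $k=0$ versus $k=\rti_\msc$ contributions matches the paper's computation of $A$ and $B$ exactly.
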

\begin{proof}
Under \ref{ass:P:inv}, \Cref{thm:kac} applies to $\pi$. Thus, for every $\Aset \in\Esigma$, 
$$
    \check{\pi}(\Aset \times \Cset) = \int_{\msc} \pi(\rmd x) \, \PEE{P}{x}\lrb{\sum_{k = 0}^{\rti_{\msc} - 1} \indi{\Aset}(X_k)} = \pi(\Aset)\eqsp, 
$$
which establishes the first claim of the proposition. 

We turn to the second claim. Under \ref{ass:P:inv}, \Cref{lem:probOne} shows that $\PPP{P}{x}(\rti_\Cset<\plusinfty)=1$ for $\pi$-almost all $x\in \Eset$. This fundamental property will be used in several parts of the proof. By the very definitions of $\check{\pi}$ and $R$, for every $h \in \posfunc{\Esigma \tensprod \Csigma}$,
\begin{equation}\label{eq:barpiR}
\check{\pi} R h=\int_{\Cset} \pi(\rmd x) \, \PEE{P}{x}\lrb{\sum_{k=0}^{\rti_\Cset-1} Rh(X_k,x)}=A+B,
\end{equation}
where
\begin{align*}
A&\eqdef \int_\Cset \pi(\rmd x) \, \PEE{P}{x} \lrb{\sum_{k = 0}^{\rti_\Cset -1} \int_{\Cset^c}P(X_k,\rmd y) \, h(y, x)}, \\
B&\eqdef \int_\Cset \pi(\rmd x) \, \PEE{P}{x}\lrb{\sum_{k = 0}^{\rti_\Cset -1} P(X_k,\Cset)} \int_{\msc} Q(x,\rmd z) \, h(z, z).
\end{align*}
Using the Markov property (in combination with Tonelli's theorem), we may write 
\begin{align*}
  A&= \int_\Cset \pi(\rmd x) \, \PEE{P}{x} \lrb{\PEE{P}{x}\parentheseDeux{ h(X_{1},x)\indi{\Cset^c}(X_{1})}+\sum_{k=1}^{\infty}  \indin{\rti_{\Cset} > k}\PEE{P}{X_k}\parentheseDeux{ h(X_{1},x)\indi{\Cset^c}(X_{1})}}\\
   &=\int_\Cset \pi(\rmd x) \, \PEE{P}{x} \lrb{ h(X_{1},x)\indi{\Cset^c}(X_{1})+\sum_{k=1}^{\infty}  \PEE{P}{x}\parentheseDeux{ h(X_{k+1},x)\indin{\rti_{\Cset} > k}\indi{\Cset^c}(X_{k+1}) \mid \mcg_k } }\\
     &=\int_\Cset \pi(\rmd x) \, \PEE{P}{x} \lrb{ h(X_{1},x)\indi{\Cset^c}(X_{1})+\sum_{k=2}^{\rti_{\msc}}  h(X_{k},x)\indi{\Cset^c}(X_{k}) }\\
    &=\int_\Cset \pi(\rmd x) \, \PEE{P}{x}\lrb{\sum_{k=1}^{\rti_\Cset}  \indi{\Cset^c}(X_k) h(X_k, x)}=\int_\Cset \pi(\rmd x) \, \PEE{P}{x}\lrb{\sum_{k=1}^{\rti_\Cset-1}  h(X_k, x)},
\end{align*}
where for any $k\in \nsets$, $\mcg_k=\sigma(X_1,\ldots,X_k)$ and the last equality follows as $\indi{\Cset^c}(X_{k})=1$ for all $k \in \{1,\ldots,\rti_\Cset-1\}$ and $\indi{\Cset^c}(X_{\rti_\Cset}) = 0$ on the set $\{\rti_\Cset < \plusinfty\}$ (having $\PPP{P}{x}$-probability one for $\pi$-almost all $x \in \Eset$). Similarly, using again the Markov property, 
\begin{align*}
  B&=\int_\Cset \pi(\rmd x) \, \PEE{P}{x}\lrb{\sum_{k = 0}^{\rti_\Cset -1}  \indi{\Cset}(X_{k+1})} \int_\Cset Q(x,\rmd z) \, h(z, z) \\
  &=\int_\Cset \pi(\rmd x) \, \PEE{P}{x}\lrb{\sum_{k = 1}^{\rti_\Cset} \indi{\Cset}(X_k)} \int_\Cset Q(x,\rmd z) \, h(z, z) \\ &=\int_\Cset \pi(\rmd x) \int_\Cset Q(x, \rmd z) \, h(z, z) \eqsp,
\end{align*}
where the last equality follows since $\indi{\Cset}(X_{k}) = 0$ for all $k \in \{1,\ldots,\rti_\Cset-1\}$ and $\indi{\Cset}(X_{\rti_\Cset})=1$ on $\{\rti_\Cset <\plusinfty\}$. Therefore,   \ref{ass:Q:inv} implies that $B  =\int_\Cset \pi(\rmd z) \, h(z, z)$. Plugging the obtained expressions of $A$ and $B$ into \eqref{eq:barpiR} yields
$$
    \check{\pi} R h = \int_\Cset \pi(\rmd x) \, \PEE{P}{x}\lrb{\sum_{k=0}^{\rti_\Cset-1}  h(X_k, x)} = \check{\pi}(h)\eqsp, 
$$
showing that $\check{\pi}$ is invariant for $R$. 
\end{proof}

\subsection{A law of large numbers for the KKT process}

In order to establish that  $\seq{Y_k, Z_k}{k \in \nset}$, evolving according to the Markov kernel $R$, satisfies a law of large numbers, we will first find conditions under which $R$ admits a unique invariant probability measure. In that case, \cite[Theorem~5.2.6]{douc:moulines:priouret:soulier:2018} guarantees that the associated dynamical system is ergodic and that Birkhoff's theorem applies.  
\begin{proposition}
    \label{propo:uniqueness_bar_pi}
    Assume \ref{ass:P:inv} and \ref{ass:Q:inv}. In addition, suppose that $\pi_{\msc}$ and $\pi$ are the unique invariant probability measures for $Q$ and $P$, respectively. Then $\check{\pi}$ is the unique invariant probability measure for $R$. 
\end{proposition}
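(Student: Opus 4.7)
The plan is to show that any $R$-invariant probability measure $\nu$ necessarily coincides with $\check{\pi}$. The first key observation is that, by the construction of $R$ in \eqref{eq:def_R_Kac} (equivalently \Cref{alg:telep}), at every time $n \geq 1$ the event $Y_n \in \Cset$ forces $Y_n = Z_n$, since $Y_n$ is then set equal to $Z_n\sim Q(Z_{n-1},\cdot)\subset\Cset$. Invariance of $\nu$ thus yields $\nu(\{(y,z):y\in\Cset,\,y\neq z\})=0$, so $\nu|_{\Cset\times\Cset}$ is carried by the diagonal; denote by $\alpha$ its pushforward to $\Cset$ via $(x,x)\mapsto x$.

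To identify $\alpha$, I test $\nu R=\nu$ against $h(y,z)=g(z)$ with $g\in\bddfunc{\Csigma}$: using \eqref{eq:def_R_Kac}, this gives $\rho(g)=\rho(Qg)$ with $\rho(\rmd z)\eqdef\int P(y,\Cset)\,\nu(\rmd y,\rmd z)$, so $\rho$ is $Q$-invariant on $\Cset$. The case $\rho(\Cset)=0$ is ruled out as follows: it would imply $P(y,\Cset)=0$ for $\nu_1$-a.e.\ $y$ (with $\nu_1$ the first marginal of $\nu$), so that $R$ acts as $P$ in the first coordinate $\nu$-a.s., forcing $\nu_1 P=\nu_1$; uniqueness of $\pi$ then gives $\nu_1=\pi$, whence $\rho(\Cset)=\pi P(\Cset)=\pi(\Cset)>0$ by \Cref{rem:positive}, a contradiction. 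Uniqueness of $\pi_\Cset$ therefore yields $\rho=\rho(\Cset)\pi_\Cset$. Testing next against $h(y,z)=\indi{\Cset}(y)g(z)$ and using the diagonal support gives $\alpha(g)=\rho(Qg)=\rho(\Cset)\pi_\Cset(g)$, so $\alpha=\rho(\Cset)\pi_\Cset$.

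The core of the argument is then a Kac-type decomposition of $\nu$ along $R$-excursions. Starting from $(x,x)$, the $R$-chain reads $(X_k,x)$ for $k=0,\dots,\rti_\Cset-1$, where $X$ is a $P$-chain issued from $x$, before teleporting via $Q(x,\cdot)$ to a new diagonal point. Iterating $\nu R^n=\nu$, decomposing against the first teleportation time and passing to the limit produces
\[
\nu(h)=\int_\Cset \alpha(\rmd x)\,\PEE{P}{x}\lrb{\sum_{k=0}^{\rti_\Cset-1}h(X_k,x)}.
\]
Substituting $\alpha=\rho(\Cset)\pi_\Cset$, invoking Kac's theorem applied to $\pi$ (which yields $\int_\Cset \pi_\Cset(\rmd x)\,\PEE{P}{x}[\rti_\Cset]=1/\pi(\Cset)$), and imposing $\nu(1)=1$ forces $\rho(\Cset)=\pi(\Cset)$; hence $\alpha=\pi|_\Cset$, and comparison with \eqref{eq:df_tilde_pi_kac_c} gives $\nu=\check{\pi}$. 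The main obstacle is the rigorous justification of this decomposition: the iteration produces a residual term of order $\int_{\Cset^c}\nu_1(\rmd y)\,\PPP{P}{y}(\rti_\Cset>n)$ that must vanish as $n\to\plusinfty$, and since $\PPP{P}{y}(\rti_\Cset<\plusinfty)=1$ is only guaranteed by \Cref{lem:probOne} for $\pi$-a.e.\ $y$, the technical crux is to transfer this property to $\nu_1$-a.e.\ $y$, which is naturally achieved by first showing that $\nu_1$ is invariant for the memoryless kernel $S$ and then leveraging uniqueness of $\pi$ as $P$-invariant.
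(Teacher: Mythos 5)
Your first two steps track the paper's proof closely: the diagonal support of $\nu$ on $\Cset\times\Cset$, the $Q$-invariance of $\rho(\rmd z)=\int P(y,\Cset)\,\nu(\rmd y,\rmd z)$, the exclusion of $\rho(\Cset)=0$ via $P$-invariance of the first marginal and uniqueness of $\pi$, and the identification $\alpha=\rho(\Cset)\pi_\Cset$ all appear (in essentially the same form) in the paper. The divergence, and the genuine gap, is in the final step. Your exact excursion decomposition of $\nu$ is nothing but Kac's formula for $R$ and the set $\bCset=\Cset\times\Cset$, and equality there requires $\bCset$ to be $\nu$-accessible, equivalently that the residual term vanishes. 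The paper explicitly flags that $\bCset$ need not be $\nu$-accessible and therefore uses only the one-sided bound of \Cref{rem:arbitrary}, which holds with no accessibility hypothesis; this yields $\nu\geq(\zeta/\pi(\Cset))\,\check{\pi}$ with $\zeta>0$, and uniqueness then follows from the fact that two distinct invariant probability measures could be chosen mutually singular, which is incompatible with both dominating a positive multiple of $\check{\pi}$. Your proposal contains neither the mutual-singularity step nor a valid substitute for it.

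Concretely, your proposed repair of the residual term does not close the gap. It is true that the first marginal $\nu_1$ is $S$-invariant, but ``uniqueness of $\pi$ as $P$-invariant'' does not give $\nu_1=\pi$: an $S$-invariant measure is not $P$-invariant, and the uniqueness of the invariant probability measure of $S$ is precisely the memoryless instance of the proposition you are trying to prove. Worse, the step is circular: iterating $\nu_1 S=\nu_1$ and using Kac's theorem for $\pi$ shows that the limiting residual mass equals $1-\nu_1(\Cset)/\pi(\Cset)$, so proving that the residual vanishes is \emph{equivalent} to proving $\nu_1(\Cset)=\pi(\Cset)$, which is essentially the conclusion ($\alpha=\pi|_\Cset$). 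What this route does deliver without further input is only the domination $\nu\geq(\nu_1(\Cset)/\pi(\Cset))\,\check{\pi}$ --- the same inequality the paper obtains --- after which you still need the singularity argument (or an equivalent device) to upgrade domination to equality.
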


\begin{proof}
  By \Cref{lem:invariant_measure_kac_process_c}, $\check{\pi}$ is an invariant probability measure for $R$. We show by contradiction that it is the unique invariant probability measure for $R$. First, assume that if $\mu$ is an other invariant probability measure for $R$, then there exists $c >0$ such that for every $\msa \in \mce$,  
  \begin{equation}
    \label{eq:proof_domination_invariant_R}
  \check{\pi}(\msa) \leq c \mu(\msa).  
  \end{equation}
   Now, if $R$ does not admit a unique probability measure, then there exist, by \cite[Theorem~1.4.6(ii)]{douc:moulines:priouret:soulier:2018}, at least two distinct probability measures $\mu_1$ and $\mu_2$ that are mutually singular, \ie, there exists $\Aset \in \Esigma \tensprod \Csigma$ such that $\mu_1(\Aset)=\mu_2(\Aset^c)=0$. Since $\mu_1$ and $\mu_2$ satisfy \eqref{eq:proof_domination_invariant_R} for some constants $c_1$ and $c_2$, we obtain that $\check{\pi}(\Aset)=\check{\pi}(\Aset^c)=0$, which, finally, implies that $\check{\pi}(\Eset \times \Cset) = \check{\pi}(\Aset)+\check{\pi}(\Aset^c)=0$. This is contradictory since $\check{\pi}$ is a probability measure.

   Now, we let $\mu \in \measureset_1(\Esigma \tensprod \Csigma)$ be any invariant probability measure for $R$ and show \eqref{eq:proof_domination_invariant_R} holds for some $c >0$.
Since for every $\Bset \in \Csigma$, 
$$
    \mu(\Eset \times \Bset) = \mu R (\Eset \times \Bset) = \int_{\Eset \times \Bset} \mu(\rmd (y, z)) \, P(y, \msc^c) + \int_{\Eset \times \msc} \mu(\rmd (y, z)) \, P(y, \msc) Q(z, \Bset)\eqsp, 
$$
it holds that 
\begin{equation*}
    \int_{\Eset \times \Bset} \mu(\rmd (y, z)) \, P(y, \msc) = \int_{\mse\times \msc} \mu(\rmd (y, z)) P(y, \msc) Q(z,\Bset)\eqsp. 
\end{equation*}
In other words, the finite nonnegative measure $\Csigma \ni \Bset \mapsto \int_{\Eset \times \Bset} \mu(\rmd (y, z)) \, P(y, \msc)$ is invariant for $Q$, and since $\pi_{\msc}$ is supposed to be the unique invariant probability measure of $Q$, there exists a constant $\zeta \geq 0$ such that for every $\Bset \in \Esigma_\Cset$, 
\begin{equation} \label{eq:invariant:zero}
    \int_{\Eset \times \Bset} \mu(\rmd (y, z)) \, P(y,\msc) = \zeta \pi_{\msc}(\Bset)\eqsp.
\end{equation} 
We show by contradiction that $\zeta > 0$. Indeed, if $\zeta=0$, then setting $\Bset = \msc$ in \eqref{eq:invariant:zero} yields
\begin{equation} \label{eq:invariant:one}
    \int_{\mse\times \msc} \mu(\rmd (y, z)) \, P(y, \msc) = 0\eqsp. 
\end{equation}
Consequently, for every $\Aset \in \Esigma$,  
\begin{align*}
  &    \mu(\Aset \times \msc) = \mu R (\Aset \times \msc)\\
  &= \int_{\mse\times \msc} \mu(\rmd (y, z)) \, P(y, \msc^c \cap \Aset) + \int_{\mse\times \msc} \mu(\rmd (y, z)) \, P(y,\msc) Q(z,\msc \cap \Aset) \\ &= \int_{\mse\times \msc} \mu(\rmd (y, z)) \, P(y,\msc^c \cap \Aset) + 0 \\ 
    &= \int_{\mse\times \msc} \mu(\rmd (y, z)) \, P(y,\msc^c \cap \Aset) + \int_{\mse\times \msc} \mu(\rmd (y, z)) \, P(y,\msc \cap \Aset) \\
    &= \int_{\mse\times \msc} \mu(\rmd (y, z)) \, P(y, \Aset)\eqsp. 
\end{align*}
Therefore, the probability measure $\Esigma \ni \Aset \mapsto \mu(\Aset \times \msc)$ is invariant for $P$, and since $\pi$ is the unique $P$-invariant probability measure, it holds that $\mu(\cdot \times \msc) = \pi$. This allows us to rewrite \eqref{eq:invariant:one} according to 
$$
    0 = \int_{\mse\times \msc} \mu(\rmd (y, z)) \, P(y, \msc) = \int_{\mse\times \msc} \pi(\rmd y) \, P(y, \msc) = \pi(\msc), 
$$
which contradicts \Cref{rem:positive}. Thus, \eqref{eq:invariant:zero} holds with $\zeta > 0$. Consequently, for every $h \in \posfunc{\Esigma \tensprod \Csigma}$, 
\begin{align}
 &   \int_{\msc \times \msc} \mu(\rmd (y, z)) \, h(y, z) = \int_{\mse \times \msc} \mu(\rmd (y, z)) \, R(h \indi{\msc \times \msc})(y, z) \nonumber \\
    & \qquad \qquad = \int_{\mse\times \msc} \mu(\rmd (y, z)) \, P(y,\msc) \int_{\msc}  Q(z, \rmd z') \, h(z', z')  = \zeta \int \pi_{\msc}(\rmd z) \, h(z, z)\eqsp. \label{eq:uniq:one}
\end{align}
The probability measure $\mu$ is invariant for $R$, but since $\Cset \times \Cset$ is not necessarily $\mu$-accessible we cannot use Kac's theorem directly. Instead, \Cref{rem:arbitrary} shows that for every bounded function $h \in\posfunc{\Esigma \tensprod \Csigma}$,
\begin{align*}
    \mu(h) &\geq \int_{\msc \times \msc} \mu(\rmd (y, z)) \, \PE_{(y, z)}^R \lrb{\sum_{k=0}^{\sigma_{\msc\times \msc} - 1} h(Y_k, Z_k)} \\
    &=  \zeta \int \pi_{\msc}(\rmd z) \, \PE_z^P \lrb{\sum_{k=0}^{\sigma_{\msc} - 1} h(X_k, z)} = \frac{\zeta\check{\pi}(h)}{\pi(\msc)}\eqsp,
\end{align*}
where the penultimate and last equalities follow from \eqref{eq:uniq:one} and \eqref{eq:df_tilde_pi_kac_c}, respectively. Since $\zeta>0$, we have shown \eqref{eq:proof_domination_invariant_R}.  The proof is completed. 
\end{proof}

The fact that $\check{\pi}$ is a unique invariant probability measure for $R$ allows us to apply \cite[Theorem~5.2.6]{douc:moulines:priouret:soulier:2018} and Birkhoff's theorem to establish that for every measurable function $f : \Eset \times \Cset \to \rset$ such that $\check{\pi}(\abs{f}) < \plusinfty$,
\begin{equation}
  \label{eq:lln_p_bar_pi}
  \lim_{n \to \plusinfty}  n^{-1} \sum_{k=0}^{n-1}  f(Y_k, Z_k)  = \check{\pi}(f), \qquad \PP^R_{\check{\pi}}\mbox{-a.s.}
\end{equation}
However, this property is not fully satisfying since it only holds under stationarity, \ie, under $\PP^R_{\check{\pi}}$. From a simulation perspective, it is crucial that the same result holds irrespective of how the chain is initialized. Our next result addresses this question and establishes that  \eqref{eq:lln_p_bar_pi} holds for an arbitrary initial distribution, \ie, when $\PP^R_{\check{\pi}}$ is replaced by $\PP^R_{\xi}$, where $\xi \in \measureset_1(\Esigma \tensprod \Csigma)$ is arbitrary. Interestingly, the assumptions under which we operate when deriving this result also turn out to be necessary and, therefore, impossible to weaken. 
\begin{theorem}
    \label{thm:LLN_Markov}
    Assume \ref{ass:P:inv} and \ref{ass:Q:inv}. In addition, suppose that the Markov kernel $R$ admits a unique invariant probability measure $\check{\pi}$. Then the two conditions 
    \begin{enumerate}[(a)]
        \item \label{assum:LLN_Markov:a} for every $x \in \Eset$, $\PP^P_x(\rti_{\Cset} < \plusinfty) = 1$ and 
        \item \label{assum:LLN_Markov:b} for every $x \in \Cset$ and bounded measurable function $h: \msc \to \rset$,
        \begin{equation*}
            \lim_{\ell \to \plusinfty}  \ell^{-1} \sum_{k=0}^{\ell-1} h(X_k) = \pi_{\msc}(h) \eqsp, \qquad \PP^Q_{x}\mbox{-a.s.}, 
        \end{equation*}
    \end{enumerate}
    hold if and only if for every $\xi \in \measureset_{1}(\Esigma \tensprod \Csigma)$ and measurable function $f : \Eset \times \Cset \to \rset$ such that $\check{\pi}(\abs{f}) < \plusinfty$, 
    \begin{equation}
    \label{eq:lln}
    \lim_{n \to \plusinfty}  n^{-1} \sum_{k=0}^{n-1}  f(Y_k,Z_k)  = \check{\pi}(f) \eqsp, \qquad \PP^R_{\xi}\mbox{-a.s.}  
    \end{equation} 
  \end{theorem}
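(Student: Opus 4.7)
The plan is to establish the two implications separately, both grounded in the excursion structure of $(Y_n, Z_n)_{n \in \nset}$. Setting $\tau_0 := 0$, $\tau_{j+1} := \inf\{k > \tau_j : Y_k \in \Cset\}$ and $\tilde Z_j := Z_{\tau_j}$, I would first verify, directly from the transition kernel $R$ in \eqref{eq:def_R_Kac}, that starting from $(z_0, z_0)$ with $z_0 \in \Cset$ the sequence $(\tilde Z_j)_{j \geq 0}$ is a $Q$-Markov chain from $z_0$ and, conditionally on it, the excursions $(Y_{\tau_j + k})_{0 \leq k < \tau_{j+1} - \tau_j}$, $j \geq 0$, are independent, with excursion $j$ distributed as a $P$-chain from $\tilde Z_j$ stopped just before its first return to $\Cset$. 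In particular $Y_k \in \Cset$ precisely at $k = 0$ and $k = \tau_j$ for $j \geq 1$, with $Y_{\tau_j} = \tilde Z_j$.

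For the implication ``\eqref{eq:lln} $\Rightarrow$ \ref{assum:LLN_Markov:a}, \ref{assum:LLN_Markov:b}'', I would read each assumption off \eqref{eq:lln} via a well-chosen test function. With $f = \indi{\Cset \times \Cset}$ the definition of $\check{\pi}$ yields $\check{\pi}(f) = \pi(\Cset) > 0$. Starting $R$ from $(x, z_0)$ with $z_0 \in \Cset$ and arbitrary $x \in \Eset$, the first entry time of $Y$ into $\Cset$ has the law of $\sigma_\Cset$ under $\PPP{P}{x}$ because $Y$ follows $P$ until absorption; hence $\PPP{P}{x}(\sigma_\Cset = \plusinfty) > 0$ would force $(1/n) \sum_{k = 0}^{n-1} \indi{\Cset}(Y_k) \to 0 \neq \pi(\Cset)$ on a positive-probability event, contradicting \eqref{eq:lln}. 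This yields \ref{assum:LLN_Markov:a}. For \ref{assum:LLN_Markov:b}, fix $z \in \Cset$ and bounded $h : \Cset \to \rset$, set $f(y, z') := h(y) \indi{\Cset}(y)$ so that $\check{\pi}(f) = \pi(\Cset) \pi_\Cset(h)$, and start $R$ from $(z, z)$. The excursion structure gives $\sum_{k = 0}^{n - 1} f(Y_k, Z_k) = \sum_{j = 0}^{M(n)} h(\tilde Z_j)$ with $M(n) := \max\{j \geq 0 : \tau_j \leq n - 1\}$. Applying \eqref{eq:lln} both to this $f$ and to $\indi{\Cset \times \Cset}$ yields $M(n)/n \to \pi(\Cset)$ (hence $M(n) \to \plusinfty$ a.s.) and $(1/M(n)) \sum_{j = 0}^{M(n)} h(\tilde Z_j) \to \pi_\Cset(h)$ a.s.; since $M(n)$ sweeps every nonnegative integer, this upgrades to $(1/L) \sum_{j = 0}^{L} h(\tilde Z_j) \to \pi_\Cset(h)$ a.s. Because $(\tilde Z_j)$ under $\PP^R_{(z, z)}$ has the law of a $Q$-chain from $z$, this is \ref{assum:LLN_Markov:b}.

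For the reverse implication the strategy is to establish positive Harris recurrence of $R$; combined with the assumed uniqueness of $\check{\pi}$, \eqref{eq:lln} then follows from classical Markov chain theory, e.g., \cite[Theorem~17.1.7]{meyn:tweedie:2012}. Fix $\msa \in \Esigma \tensprod \Csigma$ with $\check{\pi}(\msa) > 0$, set $\msa_z := \{y \in \Eset : (y, z) \in \msa\}$, and let $p(z)$ denote the $\PPP{P}{z}$-probability that the trajectory $(X_0, \ldots, X_{\sigma_\Cset - 1})$ visits $\msa_z$. The representation $\check{\pi}(\msa) = \int_\Cset \pi(\rmd z) \, \PEE{P}{z}[\sum_{k = 0}^{\sigma_\Cset - 1} \indi{\msa_z}(X_k)]$ forces $\pi_\Cset(\{p > 0\}) > 0$, so $\pi_\Cset(\Bset_\varepsilon) > 0$ with $\Bset_\varepsilon := \{z \in \Cset : p(z) \geq \varepsilon\}$ for some $\varepsilon > 0$. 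Starting $R$ from $(z_0, z_0)$ with arbitrary $z_0 \in \Cset$, \ref{assum:LLN_Markov:b} applied to $\indi{\Bset_\varepsilon}$ delivers $\sum_j \indi{\Bset_\varepsilon}(\tilde Z_j) = \plusinfty$ a.s., hence $\sum_j p(\tilde Z_j) = \plusinfty$ a.s. The events $E_j := \{\text{excursion } j \text{ visits } \msa_{\tilde Z_j}\}$ are, by the strong Markov property at $\tau_j$, conditionally independent given $(\tilde Z_j)_{j \geq 0}$ with conditional probability $p(\tilde Z_j)$; the conditional Borel--Cantelli lemma then yields $\sum_j \indi{E_j} = \plusinfty$ a.s., i.e., $R$ visits $\msa$ infinitely often $\PP^R_{(z_0, z_0)}$-a.s. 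Arbitrary initial conditions are reduced to this case via \ref{assum:LLN_Markov:a} and the strong Markov property at $\tau_1$, completing Harris recurrence.

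The main obstacle is a fully rigorous treatment of the regenerative excursion structure, in particular the conditional independence of $(E_j)$ given $(\tilde Z_j)_{j \geq 0}$ and the identification of the corresponding conditional probabilities. This requires iterated use of the strong Markov property at the stopping times $\tau_j$ together with a monotone-class argument to extract the joint law of the excursions; once this structural step is in place, the remaining ingredients (conditional Borel--Cantelli, the $\check{\pi}$-representation and the classical passage from positive Harris recurrence to the LLN) are routine.
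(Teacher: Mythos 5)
Your proposal is correct, and your argument for the implication \eqref{eq:lln} $\Rightarrow$ \ref{assum:LLN_Markov:a}--\ref{assum:LLN_Markov:b} is essentially the paper's own (test functions $\indi{\Cset\times\Cset}$ and $(y,z)\mapsto \indi{\Cset}(y)h(y)$, ratio of the two Ces\`aro limits, identification of the chain embedded at the return times to $\Cset\times\Cset$ as a $Q$-chain). The converse implication, however, is taken by a genuinely different route. The paper does not establish Harris recurrence of $R$: it first proves the stand-alone equivalence of \Cref{propo:harmonic_LLN} (the LLN for all initial laws holds iff every bounded harmonic function is constant) and then shows that any bounded $R$-harmonic $h$ satisfies $h(y,z)=\int_\Cset Q(z,\rmd z')\,h(z',z')$, by evaluating $\PE^R_{(y,z)}[h(Y_{\rti_{\bCset}},Z_{\rti_{\bCset}})\indin{\rti_{\bCset}<\plusinfty}]$ in two ways --- once via \Cref{lem:tech}\ref{item:tech:two} together with \ref{assum:LLN_Markov:a}, and once via the representation $h(y,z)=\PE^R_{(y,z)}[W]$ for a shift-invariant $W$ --- so that $h$ reduces to a bounded $Q$-harmonic function, which is constant by \ref{assum:LLN_Markov:b} and \Cref{propo:harmonic_LLN} applied to $Q$. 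You instead upgrade \ref{assum:LLN_Markov:a}--\ref{assum:LLN_Markov:b} to positive Harris recurrence of $R$ through the sectionwise excursion-hitting probability $p(z)$ and a conditional Borel--Cantelli argument, and then invoke the classical LLN for positive Harris chains. Both arguments are sound. The paper's route is lighter: it only needs the product formula of \Cref{lem:tech}\ref{item:tech:three} for functionals of the excursion lengths and the embedded chain, and it bypasses irreducibility and Harris theory entirely. Your route requires the stronger structural fact that the full excursion paths are conditionally independent given $\sigma(\tilde Z_j,\ j\geq 0)$ with conditional law $\PP^P_{\tilde Z_j}$ restricted to $(X_0,\ldots,X_{\rti_\Cset-1})$ (a strengthening of \Cref{lem:tech}\ref{item:tech:three}, provable by the same induction), plus the measurability of $z\mapsto p(z)$ and the fact that for a recurrent chain the invariant probability is a maximal irreducibility measure, so that Harris recurrence relative to $\check{\pi}$ is Harris recurrence proper; in exchange it yields Harris recurrence of $R$ as a by-product --- strictly more than \eqref{eq:lln} --- and dispenses with \Cref{propo:harmonic_LLN} altogether.
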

  It is worthwhile to note that \Cref{thm:LLN_Markov} does not presuppose a law of large numbers for the Markov kernel $P$. The proof of \Cref{thm:LLN_Markov} relies  on the following necessary and sufficient conditions---based on properties of  \emph{harmonic functions} for $P$ (\ie~measurable functions $h$ for which $P|h| < \infty$ and $Ph = h$)---for a Markov chain to satisfy a law of large numbers. 

\begin{proposition}
\label{propo:harmonic_LLN}
Let $P$ be a Markov kernel on $ \Eset \times \Esigma$ admitting an invariant probability measure $\pi$. Then the two following conditions are equivalent.
\begin{enumerate}[label=(\roman*)]
\item \label{propo:harmonic_LLN_i} For every $\xi \in \measureset_{1}(\Esigma)$ and measurable function $g: \Eset \to \rset$ such that $\pi(\abs{g}) < \plusinfty$,
\begin{equation*}
\lim_{n\to \plusinfty} n^{-1} \sum_{k=0}^{n-1} g(X_k) = \pi(g), \qquad \PP^P_{\xi}\mbox{-a.s.} 
\end{equation*}
\item \label{propo:harmonic_LLN_ii} Every bounded harmonic function $h : \Eset \to \rset$ for $P$ is constant. 
 \end{enumerate}
\end{proposition}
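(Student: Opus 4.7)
The plan is to prove the two implications separately: the forward direction (i) $\Rightarrow$ (ii) follows from a short martingale argument, while the reverse relies on showing that $\PP^P_\pi$ is ergodic and then transferring the resulting law of large numbers to an arbitrary starting distribution.

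For (i) $\Rightarrow$ (ii), I would take a bounded harmonic function $h$ and observe that, under any $\PP^P_x$, the process $M_n \eqdef h(X_n)$ is a bounded martingale by the harmonicity $Ph = h$ combined with the Markov property. Doob's martingale convergence theorem then gives $M_n \to M_\infty$, $\PP^P_x$-a.s., with $\PE^P_x[M_\infty] = h(x)$ by uniform integrability. The Cesàro average of $M_n$ coincides with $n^{-1}\sum_{k=0}^{n-1} h(X_k)$, which by (i) applied with $\xi = \delta_x$ converges $\PP^P_x$-a.s. to the constant $\pi(h)$. Hence $M_\infty = \pi(h)$ $\PP^P_x$-a.s., and taking expectations yields $h(x) = \pi(h)$ for every $x$, so $h$ is constant.

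For (ii) $\Rightarrow$ (i), I would proceed in three steps. Step 1: show that the canonical dynamical system $(\Eset^\nset, \Esigma^{\tensprod \nset}, \PP^P_\pi, \shift)$ is ergodic. For any $\shift$-invariant event $\Aset$, the function $h_\Aset(x) \eqdef \PP^P_x(\Aset)$ is bounded and, by the Markov property combined with the identity $\Aset = \shift^{-1}\Aset$, harmonic for $P$; hypothesis (ii) forces $h_\Aset$ to be constant. Consequently the Lévy martingale $\PE^P_\pi[\indi{\Aset} \mid \Fsigma_n] = \PE^P_\pi[\indi{\Aset}\circ \shift^n \mid \Fsigma_n] = h_\Aset(X_n)$ is constant, so its limit $\indi{\Aset}$ is $\PP^P_\pi$-a.s. constant, yielding $\PP^P_\pi(\Aset) \in \{0,1\}$. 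Step 2: apply Birkhoff's ergodic theorem to conclude that $n^{-1} \sum_{k=0}^{n-1} g(X_k) \to \pi(g)$, $\PP^P_\pi$-a.s., for every $\pi$-integrable $g$.

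Step 3 extends this convergence to an arbitrary initial distribution $\xi$. For bounded $g$, the event $\Bset \eqdef \{n^{-1}\sum_{k=0}^{n-1} g(X_k) \to \pi(g)\}$ is $\shift$-invariant, since the discrepancy $(g(X_n) - g(X_0))/n$ vanishes uniformly as $n \to \infty$. Reapplying the argument of Step 1 to $x \mapsto \PP^P_x(\Bset)$ shows it is bounded harmonic and therefore constant; integrating against $\pi$ and using Step 2 pins the constant to $1$, so $\PP^P_\xi(\Bset) = 1$. For a general $\pi$-integrable $g$, I would approximate by the bounded truncations $g_N \eqdef (g \wedge N) \vee (-N)$, apply the bounded case to $g_N$ (giving $n^{-1}\sum g_N(X_k) \to \pi(g_N)$ $\PP^P_\xi$-a.s.) and to the nonnegative functions $|g|\wedge N$ (giving a lower bound $\liminf n^{-1}\sum |g|(X_k) \geq \pi(|g|)$ by monotone convergence), and then combine these with a Hopf-type maximal ergodic inequality to control the residual $n^{-1}\sum |g - g_N|(X_k)$ as $N \to \infty$.

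The main obstacle is this final extension from bounded to integrable $g$ under an arbitrary starting distribution, because $\PP^P_\xi$ need not be dominated by $\PP^P_\pi$ and the upper bound $\limsup n^{-1}\sum (|g| - |g|\wedge N)(X_k) \to 0$ cannot be read off directly from Birkhoff under $\PP^P_\pi$; it is precisely at this point that one must invoke a maximal ergodic inequality (or a truncation argument together with the harmonicity characterization already established) to close the gap.
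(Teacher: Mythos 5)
Your overall architecture matches the paper's: the direction (i) $\Rightarrow$ (ii) via bounded martingale convergence of $h(X_n)$ is exactly the paper's argument (which invokes it through a citation), and Steps 1--2 of the converse (ergodicity of $(\Eset^\nset,\Esigma^{\tensprod\nset},\PP^P_\pi,\shift)$ from the constancy of bounded harmonic functions, then Birkhoff) reproduce what the paper obtains by citation as well. The problem is Step 3: you establish the law of large numbers under an arbitrary initial distribution $\xi$ only for \emph{bounded} $g$, and you explicitly leave open the passage to general $\pi$-integrable $g$, proposing a truncation combined with a maximal ergodic inequality that you acknowledge you cannot run under $\PP^P_\xi$ (correctly so: Hopf-type inequalities are statements under the invariant measure, and $\PP^P_\xi$ need not be absolutely continuous with respect to $\PP^P_\pi$). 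As it stands this is a genuine gap.

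The gap is, however, caused by a misdiagnosis rather than a real obstruction. You claim the event $\Bset=\{\lim_{n} n^{-1}\sum_{k=0}^{n-1}g(X_k)=\pi(g)\}$ is $\shift$-invariant only for bounded $g$, ``since the discrepancy $(g(X_n)-g(X_0))/n$ vanishes uniformly''. But invariance of $\Bset$ does not require comparing the shifted and unshifted averages at the same index $n$: writing $S_n=\sum_{k=0}^{n-1}g(X_k)$, one has $n^{-1}S_n\circ\shift=n^{-1}(S_{n+1}-g(X_0))=\tfrac{n+1}{n}\,(n+1)^{-1}S_{n+1}-n^{-1}g(X_0)$, and since $g(X_0)$ is a fixed finite real number on each trajectory, $n^{-1}S_n\circ\shift$ converges to $\pi(g)$ if and only if $n^{-1}S_n$ does. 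Hence $\Bset=\shift^{-1}\Bset$ exactly, for \emph{any} real-valued measurable $g$, bounded or not. Your Step 3 argument for bounded $g$ (the map $x\mapsto\PP^P_x(\Bset)$ is bounded and harmonic by the Markov property and invariance of $\Bset$, hence constant by (ii), and the constant equals $1$ upon integrating against $\pi$ and invoking Step 2) therefore applies verbatim to every $g$ with $\pi(\abs{g})<\plusinfty$; no truncation or maximal inequality is needed. This is precisely how the paper concludes.
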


Since the equivalence provided by \Cref{propo:harmonic_LLN} has not, as far as we know, been established in the literature before, we provide a complete proof of this result in \Cref{sec:harmonic}. We now have all the tools needed for proving \Cref{thm:LLN_Markov}. 

\begin{proof}[Proof of \Cref{thm:LLN_Markov}]
    We aim to establish the equivalence between the conditions \ref{assum:LLN_Markov:a} and \ref{assum:LLN_Markov:b} and the law of large numbers \eqref{eq:lln}. We first assume that \ref{assum:LLN_Markov:a} and \ref{assum:LLN_Markov:b} hold. Since $\check{\pi}$ is an invariant probability measure for $R$, \Cref{propo:harmonic_LLN} applies to $R$. Therefore, given an arbitrary harmonic function  $h$ for $R$, it is sufficient to establish that $h$ is constant. With this aim in mind, we prove that for every $(y, z) \in \Eset \times \Cset$,
    \begin{equation} \label{eq:2_proof_LLN}
        h(y,z) = \int_{\msc} Q(z,\rmd z') \, h(z', z').
    \end{equation}
      Indeed, if \eqref{eq:2_proof_LLN} is true, then $h(y, z)$ does not depend on $y$ (since the right-hand side of \eqref{eq:2_proof_LLN} does not depend on $y$) and then $g(z) \eqdef h(z,z)$ is a bounded harmonic function for $Q$ since \eqref{eq:2_proof_LLN} yields $g=Qg$. Finally, combining \ref{assum:LLN_Markov:b} with \Cref{propo:harmonic_LLN} applied to $Q$ shows that $g$ is a constant function. Thus, also $h$ is constant. 
      
    We now turn to the proof of \eqref{eq:2_proof_LLN}. Set $\Aset \eqdef \{\rti_{\bmsc} < \plusinfty\}$ where $\bmsc = \Cset \times \Cset$. Since $Y_{\rti_{\bmsc}} = Z_{\rti_{\bmsc}}$ on $\Aset$, combining \Cref{lem:tech}\ref{item:tech:two} with \ref{assum:LLN_Markov:a} implies  
    \begin{multline} \label{eq:3_proof_LLN}
      \PE^R_{(y,z)}\left[ h(Y_{\rti_{\bmsc}}, Z_{\rti_{\bmsc}}) \1_{\Aset} \right] = \PE^R_{(y,z)}\left[ h(Z_{\rti_{\bmsc}}, Z_{\rti_{\bmsc}}) \1_{\Aset} \right] \\
                                                            = \PP^P_y(\rti_\Cset < \infty) \int_{\msc} Q(z, \rmd z') \, h(z',z') = \int_{\msc} Q(z,\rmd z') \, h(z', z'). 
    \end{multline}
    On the other hand, by \cite[Proposition 5.2.2(ii)]{douc:moulines:priouret:soulier:2018} there exists a random variable $W$ that is invariant for the shift operator $\shift$ and such that $\PE^R_{(y,z)}[W] = h(y,z)$ for all $(y,z) \in \Eset \times \Cset$. Thus, by the strong Markov property, since $W \circ \shift^{\rti_{\bmsc}}=W$ on $\Aset$, 
    \begin{multline*}
       \PE^R_{(y, z)} \left[ h(Y_{\rti_{\bmsc}}, Z_{\rti_{\bmsc}}) \1_{\Aset} \right] = \PE^R_{(y,z)}\lrb{\PE^R_{(Y_{\rti_{\bmsc}}, Z_{\rti_{\bmsc}})}[W] \1_{\Aset}} 
       = \PE^R_{(y,z)}[W \circ \shift^{\rti_{\bmsc}} \1_{\Aset}] \\ 
       =\PE^R_{(y, z)}[W] = h(y, z),  
    \end{multline*}
    where we used \Cref{lem:tech}\ref{item:tech:one} in the penultimate equality. Combining this equality with \eqref{eq:3_proof_LLN} yields \eqref{eq:2_proof_LLN}, and we have finally established \eqref{eq:lln}.
    
    Conversely, assuming \eqref{eq:lln}, we prove that \ref{assum:LLN_Markov:a} and \ref{assum:LLN_Markov:b} hold. Let $(y_0,z_0) \in \Eset \times \Cset$; then, applying \eqref{eq:lln} to the function $f : \Eset \times \Cset \ni (y, z) \mapsto \indi{\Cset}(y) = \indi{\bar \Cset}(y, z)$ and the measure $\xi = \delta_{(y_0, z_0)}$ yields, $\PP^R_{(y_0,z_0)}$-a.s.,  
    \begin{equation}
      \label{eq:cns:one}
      \lim_{n \to \plusinfty}  n^{-1} \sum_{k=0}^{n-1}  \indi{\Cset}(Y_k) =
      \lim_{n \to \plusinfty}  n^{-1} \sum_{k=0}^{n-1}  \indi{\bar \Cset}(Y_k,Z_k)  = \check{\pi}(\bar \Cset)=\pi(\Cset)>0, 
    \end{equation} 
    where the positivity of $\pi(\Cset)$ follows from \Cref{rem:positive} under \ref{ass:P:inv}. 
    Therefore, for all $(y_0,z_0) \in \Eset \times \Cset$, $1 = \PP^R_{(y_0, z_0)}(\rti_{\bar \Cset} <\infty) = \PP^P_{y_0}(\rti_{\Cset} < \infty)$, where the last equality follows from \Cref{lem:tech} by letting $v \equiv 1$ and $n \to \infty$ in \eqref{eq:induction}. This shows \ref{assum:LLN_Markov:a}.  
    
    We now turn to \ref{assum:LLN_Markov:b}. Pick arbitrarily $h \in \bddfunc{\Csigma}$ and $x \in \Cset$. Applying \eqref{eq:lln} to the function $f : \Eset \times \Cset \ni (y, z) \mapsto \indi{\Cset}(y)h(y)$ and the measure $\xi=\delta_{(x, x)}$ yields 
    \begin{equation*}
      \lim_{n \to \plusinfty}  n^{-1} \sum_{k=0}^{n-1} \indi{\Cset}(Y_k) h(Y_k) = \pi(\indi{\Cset} h), \qquad \PP^R_{(x, x)}\mbox{-a.s.} 
    \end{equation*}  
    Combining this limit with \eqref{eq:cns:one} provides 
    \begin{equation*}
      \lim_{n \to \plusinfty}   \frac{\sum_{k=0}^{n-1}  \indi{\Cset}(Y_k) h(Y_k)}{\sum_{k=0}^{n-1}  \indi{\Cset}(Y_k) } = \frac{\pi(\indi{\Cset} h)}{\pi(\Cset)} = \pi_\Cset(h), \qquad \PP^R_{(x, x)}\mbox{-a.s.}  
    \end{equation*}  
    Therefore, $\lim_{\ell \to \plusinfty}  \ell^{-1} \sum_{k=1}^{\ell}  h(Y_{\rti_\bCset^k}) =\pi_\Cset(h)$, $\PP_{(x, x)}^R$-a.s. However, by \Cref{lem:tech}\ref{eq:tech:four}, the embedded process $\seq{Y_{\rti_\bCset^k}}{k \in \nsetpos}$  is, under $\PP_{(x, x)}^R$, a Markov chain with transition kernel $Q$ and initial distribution $\delta_x$. Thus,       
    \begin{equation*}
      1=\PP_{(x, x)}^R \lr{\lim_{\ell \to \plusinfty}  \ell^{-1} \sum_{k = 1}^{\ell}  h(Y_{\rti_\bCset^k}) =\pi_\Cset(h)} = \PP^Q_x \lr{\lim_{\ell\to \plusinfty} \ell^{-1} \sum_{k=0}^{\ell-1} h(X_k) = \pi_{\msc}(h)},   
    \end{equation*}
    which establishes \ref{assum:LLN_Markov:b}. This completes the proof.   
\end{proof}

\subsection{Geometric ergodicity of the KKT process}
\label{sec:geom-ergod-kkt}

We introduce the following additional conditions under which we will establish that $R$ is geometrically ergodic. 

\begin{hyp}{A}
\item \label{ass:P:geom}
There exist $(\lambda,b) \in (0, 1) \times \rsetps$ and a measurable function $V_P: \Eset \to [1,\plusinfty)$  such that
\begin{equation} \label{eq:P:drift}
P V_P \leq \lambda V_P+ b \indi{\Cset} \quad \mbox{and} \quad \sup_{y \in \Cset} V_P(y) < \plusinfty. 
\end{equation}
\end{hyp}
Assumption~\ref{ass:P:geom}, which is a typical geometric drift condition, implies that $\msc$ is accessible for $P$; indeed, under \ref{ass:P:geom}, by \cite[Proposition~4.3.3]{douc:moulines:priouret:soulier:2018}, $\PP_x^P(\rti_{\msc}< \plusinfty) = 1$ for every $x \in \Eset$. 

We now provide a sufficient condition for checking \ref{ass:P:geom}. 

\begin{lemma}
    \label{lem:lem:util}
    Let $ \msc \in\Esigma $ and assume that $ \eta \eqdef  \inf_{y\in\msc^c} P(y, \msc) >0$. Then \ref{ass:P:geom} holds. 
\end{lemma}
\begin{proof}
    Pick $\gamma>1$ and define $ V_\gamma(x)= \gamma \indi{\msc^c}(x)+ \indi{\msc}(x)$, $x \in \Eset$. Then, 
    \begin{align*}
        PV_\gamma(x)&=\gamma + (1-\gamma)P(x,\msc) \\
        &\leq \lr{\gamma + \eta(1-\gamma)} \indi{\msc^c}(x)+ \gamma \indi{\msc}(x)\\
            & \leq \lambda_\gamma V_\gamma(x)+ b_\gamma \indi{\msc}(x), 
         \end{align*}
where we have set $ \lambda_\gamma=1+\eta(1/\gamma -1) \in (0,1)$ and $b_\gamma=\gamma\in \rset^*_+$. Thus, \ref{ass:P:geom} holds with $\lambda=\lambda_\gamma$, $b=b_\gamma$ and $V_P=V_\gamma$ (the latter being bounded). 
\end{proof}

\begin{remark} \ \\[-4mm]
  \begin{enumerate}[(1)] 
  \item 
  Consider the case where $\Eset\subset \rset^d$ and $P$ is an MH kernel with proposal
    kernel $\propMH$. In addition, suppose that the target $\pi$ has a density, denoted by the same symbol $\pi$, and that $\propMH$ has a transition density $r$, both with respect to the Lebesgue measure. Then 
    $$
    \inf_{y\in\msc^c} P(y, \msc) \geq \int_{\msc} \ \inf_{y \in \msc^c} \lrb{r(y,z) \min \lr{\frac{\pi(z)r(z,y)}{\pi(y)r(y,z)},1}} \rmd z 
    $$
    which is strictly positive when $\msc^c$ is compact and $\pi$ and $r$ are continuous and positive. Hence, under the latter conditions, \Cref{lem:lem:util} applies and \ref{ass:P:geom} holds true. 
  \item  More generally, when $\Eset$ is a metric space, while
    \ref{ass:P:geom} is a standard assumption if $\msc$ is a compact
    set, the condition on $\eta$ stated in \Cref{lem:lem:util} allows
    \ref{ass:P:geom} to be checked under mild assumptions on $P$ in
    the case where $\msc^{c}$ is compact (and, consequently, $\msc$ is
    non-compact). Indeed, if $P$ is Feller (\ie, for any bounded and
    continuous function $f$, $Pf$ is also bounded and continuous), the
    condition $\inf_{y\in\msc^c} P(y, \msc) >0$ holds if
    $P(y,\msc) >0$ for every $y\in\msc^c$. Indeed, when $\msc^c$ is
    compact, the set $\msc$ is open and \cite[Proposition
    12.1.8]{douc:moulines:priouret:soulier:2018} implies that
    $y \mapsto P(y,\msc)$ is lower semicontinuous and that there
    exists $y_\star$ in the compact set $\msc^c$ such that
    $\inf_{y\in\msc^c} P(y, \msc)=P(y_\star,\msc)>0$. Therefore,
    \Cref{lem:lem:util} applies, and hence \ref{ass:P:geom} holds
    true. Using this approach, we can easily check that the condition
    of \Cref{lem:lem:util} (and hence \ref{ass:P:geom}) is satisfied
    for all the examples studied numerically in
    \Cref{sec:numerics}. Note also that \Cref{prop:geom:mh} below
    provides examples where \ref{ass:P:geom} can still be satisfied in
    situations where neither $ \msc $ nor $ \msc^c $ is compact.
\end{enumerate}
\end{remark}




Let $\epsilon >0$ and $\nu \in \measureset_1(\Csigma)$. We say that $\Dset \in \Esigma_\msc$ is a \emph{$(1,\epsilon \nu)$-small set for $Q$} if for every $x \in \Dset$ and $\Aset \in \Esigma_\Cset$, $Q(x, \Aset) \geq \epsilon \nu(\Aset)$.
\begin{hyp}{A}
  \item \label{ass:Q:geom} \ \\
  \begin{enumerate}[(i)]
  \item   \label{ass:geom:D} There exists an accessible $(1,\epsilon \nu)$-small set $\Dset \in \Csigma$ for $Q$ such that $\nu(\Dset)>0$.
  \item \label{ass:geom:mino:C} It holds that $ \inf_{z \in \Dset} P(z, \Cset) > 0$.
  \item \label{ass:geom:drift} There exist constants $(\lambda,b) \in (0,1) \times \rsetps$ and a measurable function $V_Q: \Cset \to [1,\plusinfty)$ such that
  \begin{equation} \label{eq:Q:drift}
  Q V_Q\leq \lambda V_Q+ b \indi{\Dset}\quad \mbox{and} \quad \sup_{z \in \Dset} V_Q(z)<\infty\eqsp.
  \end{equation}
  \end{enumerate}
\end{hyp}

Since we will assume \ref{ass:P:geom} and \ref{ass:Q:geom} jointly, we can always, for simplicity, assume that \eqref{eq:P:drift} and \eqref{eq:Q:drift} hold for the \emph{same} constants $(\lambda, b)$ (for instance, the largest ones). Note that \ref{ass:Q:geom}\ref{ass:geom:D} implies (see \cite[Theorem~9.2.2]{douc:moulines:priouret:soulier:2018}) that $Q$ is
irreducible, and therefore $\pi_{\Cset}$ is the unique invariant probability measure of $Q$ under
\ref{ass:Q:inv}. Moreover, \ref{ass:Q:geom}\ref{ass:geom:mino:C} holds true in the case where $\Eset$ is a topological space (equipped with its Borel $\sigma$-field $\Esigma$) and there exists a kernel $T$ on $\Eset$ and a compact set $\Dset$ such that $P(y, \Cset) \geq T(y, \Cset) > 0$ for all $y \in \Dset$ and the mapping $y \mapsto T(y,\Cset)$ is lower semi-continuous. This setting covers, for instance, the case where $P$ is the Markov kernel associated with the Metropolis--Hastings algorithm with positive proposal transition density. 
 
\Cref{thm:ergo:geom} below states that the KKT process is geometrically ergodic provided that $Q$ is geometrically ergodic on $\Cset$ and $P$ \emph{pushes toward} $\Cset$ according to the geometric drift condition \ref{ass:P:geom}. Therefore, remarkably, no assumption concerning smallness of $\Cset$ with respect to $P$ is made, and the set $\Cset$ may hence be potentially \emph{large}. Note also that no assumption on geometric ergodicity of the Markov kernel $P$ is needed. \Cref{thm:ergo:geom} conveys the idea that if a Markov chain evolving according to $P$ exhibits poor ergodic behavior on $\Cset$, then the action of changing its dynamics in the same region to one governed by $Q$, where $Q$ is geometrically ergodic on $\Cset$, may result in a geometrically ergodic KKT process.  
\begin{theorem}
\label{thm:ergo:geom}
Assume \ref{ass:P:inv}--\ref{ass:Q:geom}. 
Then there exist constants $C > 0$ and $\rho \in (0, 1)$ such that for every $\Xinit \in \measureset_1(\Esigma \tensprod \Csigma)$ and $n \in \nset$,
\begin{equation} \label{eq:ergo:geom}
\tvnorm{\Xinit R^n-\check{\pi}} \leq C \rho^n \int_{\mse\times \msc} V_P(y)  V_Q(z)  \, \Xinit(\rmd(y,z)).
\end{equation}
\end{theorem}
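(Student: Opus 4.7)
The plan is to apply a standard drift-and-minorization criterion for geometric ergodicity (e.g., Theorem~15.2.5 in \cite{douc:moulines:priouret:soulier:2018}) to the kernel $R$, using the product Lyapunov function $W(y,z) := V_P(y) V_Q(z)$. This reduces the proof to verifying two ingredients for $R$: an accessible small set and a geometric drift toward it.

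For the minorization, I would take $\bar{D} := D \times D$ as the candidate small set. For any $(y, z) \in \bar{D}$ and measurable rectangle $A \times B$, combining assumption~\ref{ass:Q:geom}, items (i)--(ii), with the explicit form~\eqref{eq:def_R_Kac} yields
\begin{equation*}
    R((y, z), A \times B) \geq P(y, \Cset) \int_\Cset Q(z, \mathrm{d} z') \, \indi{A}(z') \indi{B}(z') \geq \alpha_0 \epsilon \, \nu(A \cap B \cap \Cset),
\end{equation*}
where $\alpha_0 := \inf_{z \in D} P(z, \Cset) > 0$. Hence $\bar{D}$ is a $(1, \alpha_0 \epsilon \tilde\nu)$-small set for $R$, with $\tilde\nu$ the diagonal embedding defined by $\tilde\nu(A \times B) := \nu(A \cap B \cap \Cset)$. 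Accessibility of $\bar{D}$ then follows by combining~\ref{ass:P:geom}, which, by~\cite[Proposition~4.3.3]{douc:moulines:priouret:soulier:2018}, forces $\PPP{P}{y}(\sigma_\Cset < \plusinfty) = 1$ for every $y \in \Eset$, with the accessibility of $D$ for $Q$ built into~\ref{ass:Q:geom}(i); together with~\Cref{propo:uniqueness_bar_pi}, this also identifies $\check{\pi}$ as the unique invariant probability measure of $R$.

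The core of the argument is the drift inequality $RW \leq \kappa W + K \indi{\bar D}$ for some $\kappa \in (0, 1)$ and $K < \plusinfty$. From~\eqref{eq:def_R_Kac},
\begin{equation*}
    RW(y, z) = V_Q(z) \int_{\Cset^c} P(y, \mathrm{d} y') V_P(y') + P(y, \Cset) \int_\Cset Q(z, \mathrm{d} z') V_P(z') V_Q(z'),
\end{equation*}
one bounds the first integral via~\ref{ass:P:geom} (so that $\int_{\Cset^c} P(y, \mathrm{d} y') V_P(y') \leq \lambda V_P(y) + b \indi{\Cset}(y)$) and the second using $M_P := \sup_{y \in \Cset} V_P(y) < \plusinfty$ together with~\ref{ass:Q:geom}(iii). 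A helpful preliminary reduction is to replace $V_P$ by the normalized version $V_P'(y) := V_P(y) \indi{\Cset^c}(y) + \indi{\Cset}(y)$, which still satisfies~\ref{ass:P:geom} (with adjusted constants) but has $V_P' \equiv 1$ on $\Cset$, thereby eliminating the $M_P$ factor appearing in the cross term. A second useful observation, obtained by integrating~\ref{ass:P:geom} against $\indi{\Cset}$, is that $P(y, \Cset) \leq \lambda V_P(y)$ on $\Cset^c$; this allows the residual cross term to be absorbed into $\lambda W$ on that set. The boundary contributions, which are supported either on $\Cset$ in the $y$-coordinate or on $D$ in the $z$-coordinate, are then consolidated into the $\indi{\bar D}$ term, if necessary by passing to a suitable iterate $R^m$ so that both coordinates are forced to lie in $D$ simultaneously.

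The main obstacle is the asymmetric behavior of the two coordinates under $R$: during a $P$-excursion outside $\Cset$ the $Z$-coordinate is frozen and $V_Q$ does not contract, so the required contraction must be engineered entirely from the decay of $V_P$ along the excursion together with the $Q$-contraction that fires at each reentry into $\Cset$. Once the drift and the minorization are in place, the cited theorem produces constants $C > 0$ and $\rho \in (0, 1)$ for which~\eqref{eq:ergo:geom} holds, with the initial-condition factor $\xi(W) = \int V_P(y) V_Q(z) \, \xi(\mathrm{d}(y, z))$ as claimed.
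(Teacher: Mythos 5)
Your minorization step is fine and coincides with the paper's \Cref{lem:smallSet}, but the heart of your argument --- the drift inequality $RW \leq \kappa W + K \indi{\bDset}$ for $W(y,z) = V_P(y)V_Q(z)$ --- is asserted rather than proved, and it is exactly the step that does not go through. The paper explicitly remarks that constructing a drift function for $R$ outside $\Dset\times\Dset$ by combining $V_P$ and $V_Q$ remains an open question, and your sketch stalls at that same point. Concretely, writing out $RW$ as you do, the term $P(y,\Cset)\int_\Cset Q(z,\rmd z')\,V_Q(z')$ produces via \ref{ass:Q:geom}\ref{ass:geom:drift} a contribution $P(y,\Cset)\,b\,\indi{\Dset}(z)$ supported on $\Eset\times\Dset$, and the first integral produces via \ref{ass:P:geom} a contribution $b\,V_Q(z)\indi{\Cset}(y)$ supported on $\Cset\times\Cset$. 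Neither of these sets is contained in $\bDset$, neither is known to be small for $R$ (the theorem deliberately assumes no smallness of $\Cset$ for $P$, and $P$ need not even be irreducible, so sublevel sets of $W$ are not available as small sets), and on both sets $W$ is bounded but not necessarily large; so these terms can be absorbed neither into $\kappa W$ with $\kappa<1$ nor into $K\indi{\bDset}$. For instance, for $y\in\Cset$ and $z\in\Cset\setminus\Dset$, your normalized function gives at best $RW'(y,z)\leq(\lambda+\lambda M_P+b)\,W'(y,z)$ with $M_P=\sup_{y\in\Cset}V_P(y)$, which is a contraction only under an unwarranted smallness condition on $b$ and $M_P$. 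Your proposed escape --- passing to an iterate $R^m$ so that both coordinates are forced into $\Dset$ simultaneously --- cannot work either: the excursion lengths outside $\Cset$ are unbounded, so no fixed $m$ forces the first coordinate back into $\Cset$, let alone into $\Dset$.

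The paper sidesteps the drift construction entirely and instead applies the return-time criterion \cite[Theorem~11.4.2]{douc:moulines:priouret:soulier:2018}, which requires only an accessible small set $\bDset$ together with $\sup_{(y,z)\in\bDset}\PEE{R}{(y,z)}[\gamma^{\rti_{\bDset}}]<\plusinfty$ for some $\gamma>1$. The key computation decomposes the event $\{\rti_{\bDset}=\rti^k_{\bCset}\}$, factorizes the resulting expectation over successive excursions using \Cref{lem:tech}\ref{item:tech:three}, and chooses $\gamma\in\lr{1,\exp(\log^2\lambda/\log\MP)}$ with $\MP=\sup_{y\in\Cset}V_P(y)+b/\lambda$, so that Jensen's inequality yields $\sup_{x\in\Cset}\PEE{P}{x}[\gamma^{\rti_\Cset}]\leq\lambda^{-1}$; the product of per-excursion factors is then dominated by $\lambda^{-(k-1)}\,\PPP{Q}{z}(\rti_{\Dset}=k)$, which is summable because $\PEE{Q}{z}[\lambda^{-\rti_\Dset}]\leq V_Q(z)+b/\lambda$ under \ref{ass:Q:geom}\ref{ass:geom:drift}; this also produces the factor $V_P(y)V_Q(z)$ in \eqref{eq:ergo:geom}. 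As written, your argument does not establish the theorem; a drift-based proof would require a genuinely new idea at the point you flag as "the main obstacle."
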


Our approach for establishing the geometric ergodicity in \Cref{thm:ergo:geom} is non-standard. Indeed, even though it can be easily seen that $\Dset \times \Dset$ is a small set for $R$, it is not obvious (and this remains an open question) how to construct, under \ref{ass:P:geom}--\ref{ass:Q:geom}, a drift function for $R$ outside $\Dset \times \Dset$  by combining the drift functions $V_P$ and $V_Q$ associated with $P$ and $Q$, respectively. Instead, the geometrically ergodic bound \eqref{eq:ergo:geom} is obtained through a delicate application of \cite[Theorem~11.4.2]{douc:moulines:priouret:soulier:2018}.  

\begin{proof}[Proof of \Cref{thm:ergo:geom}]
    Set $\bCset \eqdef \Cset \times \Cset$ and $\bDset \eqdef \Dset \times \Dset$. Under \ref{ass:P:geom}, \cite[Proposition 4.3.3(ii)]{douc:moulines:priouret:soulier:2018} allows us to write, for every $y \in \Eset$,
    \begin{equation} \label{eq:bound:return:P}
        \PEE{P}{y}[\lambda^{-\rti_\Cset}]  \leq V_P(y) + b / \lambda < \plusinfty\eqsp. 
    \end{equation}
    Therefore, $\PPP{P}{y}(\rti_\Cset<\plusinfty)=1$ for every $y \in \Eset$, and by \Cref{lem:tech}\ref{item:tech:one}, for every $\ell \in \nsetpos$ and $(y, z) \in \Eset \times \Cset$, 
    \begin{equation} \label{eq:finite:visits}
        \PPP{R}{(y, z)}(\rti^{\ell}_{\bCset} < \plusinfty) = 1. 
    \end{equation}
    Moreover, by \Cref{lem:smallSet} and \eqref{eq:bound:return:P} again, $\bDset=\Dset \times \Dset$ is an accessible $(1,\delta \epsilon \bar \nu)$-small set for $R$, where $\delta \eqdef \inf_{z \in \Dset} P(z, \Cset)$ and $\bar \nu$ is the probability measure $\bar \nu : \Esigma \tensprod \Csigma \ni \Aset \mapsto \int_\Cset \nu(\rmd z) \, \indi{\Aset}(z, z)$ which satisfies $\bar \nu(\bDset) > 0$. Thus, in order to apply \cite[Theorem 11.4.2]{douc:moulines:priouret:soulier:2018}, we only need to show that there exists some $\gamma > 1$ such that
    \begin{equation} \label{eq:geom:ergod:zero}
        \sup_{(y,z) \in \bDset} \PEE{R}{(y, z)}[\gamma^{\rti_\bDset}] < \plusinfty\eqsp.
    \end{equation}
    
    In order to establish \eqref{eq:geom:ergod:zero} we first show that for every $(y, z) \in \Eset \times \Cset$, 
    \begin{equation} \label{eq:geom:ergod:zero:first:step}
        \PPP{R}{(y, z)}(\rti_\bDset < \plusinfty) = 1. 
    \end{equation}
    For this purpose, using \eqref{eq:finite:visits} we can write $\PPP{R}{(y, z)}(\rti_\bDset<\plusinfty) = \sum_{k = 1}^{\plusinfty} \PPP{R}{(y,z)}(\rti_\bDset =\rti^k_\bCset)$, where 
    \begin{equation*}
        \PPP{R}{(y,z)}(\rti_\bDset=\rti^k_\bCset) = \PEE{R}{(y,z)}\lrb{\indi{\Dset}(Z_{\rti^k_\bCset}) \prod_{\ell=1}^{k-1} \indi{\Dset^c}(Z_{\rti^\ell_\bCset})} = \PPP{Q}{z}(\rti_{\Dset} = k)\eqsp;
    \end{equation*}
    here the last equality is obtained by applying \Cref{lem:tech}\ref{item:tech:three} for $g_\ell \equiv \mathbf{1}$ for all $\ell \leq k$, $h_\ell \equiv \indi{\Dset^c}$ for all $\ell < k$, and $h_k \equiv \indi{\Dset}$. Thus, since \cite[Proposition 4.3.3(ii)]{douc:moulines:priouret:soulier:2018} implies that 
    \begin{equation}
    \label{eq:bound:return:Q}
        \PEE{Q}{z}[\lambda^{-\rti_\Dset}] \leq  V(z) + b  /\lambda<\plusinfty\eqsp, \quad z \in \Cset\eqsp, 
    \end{equation}
    it holds that $\PPP{R}{(y,z)}(\rti_\bDset<\plusinfty)=\PPP{Q}{z}(\rti_{\Dset}<\plusinfty) = 1$, which completes the proof \eqref{eq:geom:ergod:zero:first:step}. 
    
    We now turn to the proof of \eqref{eq:geom:ergod:zero}. Set $\MP \eqdef \sup_{y \in \Cset} V_P(y)+b/\lambda$ and pick $\gamma \in (1,\exp(\log^2 \lambda / \log \MP))$. Using \eqref{eq:bound:return:P},  we have 
    \begin{equation} \label{eq:def_bound_M_p}
1\leq \lambda^{-1} \leq        \sup_{y \in \Cset} \PEE{P}{y}[\lambda^{-\rti_\Cset}] \leq \MP\eqsp.
\end{equation}
Therefore, we get
    \begin{equation} \label{eq:boundGamma}
        \gamma \leq \exp(\log^2 \lambda / \log \MP) = (\lambda^{-1})^{- \log \lambda / \log \MP} \leq \lambda^{-1} \eqsp.
    \end{equation}
By Jensen's inequality and \eqref{eq:def_bound_M_p} it holds, for every $x \in \Cset$,
    \begin{multline} \label{eq:majo:lambda}
        \PEE{P}{x}[\gamma^{\rti_\Cset}] = \PEE{P}{x}[\lambda^{\rti_\Cset \log \gamma / \log \lambda}] \leq  \PEE{P}{x}[\lambda^{-\rti_\Cset }]^{- \log \gamma / \log \lambda} \\ 
        \leq \MP^{- \log \gamma / \log \lambda} 
        = \exp \left( - \frac{\log \MP \log \gamma}{\log \lambda} \right) \leq \lambda^{-1}\eqsp.
    \end{multline}
    We will soon make use of the bound \eqref{eq:majo:lambda}; however, first, let $(y, z) \in \Eset \times \Cset$ and write, using \eqref{eq:geom:ergod:zero:first:step} and \eqref{eq:finite:visits}, 
    \begin{equation} \label{eq:bound:gamma}
        \PEE{R}{(y,z)}[\gamma^{\rti_\bDset}] = \PEE{R}{(y, z)} \left[ \gamma^{\rti_\bDset} \indin{\rti_\bDset<\plusinfty} \right] = \sum_{k=1}^{\plusinfty} r_k \eqsp, 
\text{ with }        r_k \eqdef \PEE{R}{(y, z)} \left[ \gamma^{\rti^k_\bCset} \indin{\rti_\bDset=\rti^k_\bCset} \right]\eqsp.
    \end{equation}
    Applying \Cref{lem:tech}\ref{item:tech:three} with $g_{\ell}(t) \equiv \gamma^t$ for $\ell \in \{1,\ldots,k\}$ and $t \in \nset$, $h_\ell \equiv \indi{\Dset^c}$ for $\ell <k$, and $h_k \equiv \indi{\Dset}$ (and using the convention $\rti^0_\bCset = 0$) yields  
    \begin{align*}
    r_k &= \PEE{R}{(y,z)} \left[ \gamma^{\rti^k_\bCset}  \indi{\Dset}(Z_{\rti^k_\bCset}) \prod_{\ell=1}^{k-1}\indi{\Dset^c}(Z_{\rti^\ell_\bCset}) \right] \\
&= \PEE{R}{(y,z)}\lrb{\lr{\prod_{\ell=1}^k \gamma^{\rti^\ell_\bCset-\rti^{\ell-1}_\bCset}} \indi{\Dset}(Z_{\rti^k_\bCset}) \prod_{\ell=1}^{k-1} \indi{\Dset^c}(Z_{\rti^\ell_\bCset})} \nonumber \\
    &= \PEE{P}{y}[\gamma^{\rti_{\Cset}}] \, \PEE{Q}{z}\lrb{\indi{\Dset}(X_k) \prod_{\ell=1}^{k-1} \indi{\Dset^c}(X_\ell) \, \PEE{P}{X_\ell}[\gamma^{\rti_\Cset}]} \\
    &\leq \lr{\sup_{x \in \Cset}\PEE{P}{x}[\gamma^{\rti_\Cset}]}^{k - 1} \PEE{P}{y}[\gamma^{\rti_\Cset}] \, \PPP{Q}{z}(\rti_{\Dset} = k)\eqsp.
    \end{align*}
    Now, combining this bound with \eqref{eq:boundGamma} and \eqref{eq:majo:lambda} yields for any $(y,z) \in \mse \times \msc$,
    \begin{equation*}
        r_k \leq \lambda^{- k + 1} \PEE{P}{y}[\lambda^{-\rti_\Cset}] \, \PPP{Q}{z}(\rti_{\Dset} = k)\eqsp.
    \end{equation*}
Plugging the previous inequality into \eqref{eq:bound:gamma} and using \eqref{eq:bound:return:P} and \eqref{eq:bound:return:Q} we obtain 
    \begin{equation} \label{eq:geom:ergod:one} 
        \PEE{R}{(y,z)}[\gamma^{\rti_\bDset}] \leq \lambda \PEE{P}{y}[\lambda^{-\rti_\Cset}] \, \PEE{Q}{z}[\lambda^{-\rti_\Dset}]\leq \lambda (V_P(y) + b/\lambda)(V_Q(z) + b/\lambda)\eqsp.
    \end{equation}
    Thus, using \ref{ass:P:geom} and \ref{ass:Q:geom}\ref{ass:geom:drift}, we conclude that 
    \begin{align*}
        \textstyle\sup_{(y,z) \in \bDset} \PEE{R}{(y,z)}[\gamma^{\rti_\bDset}] \leq \lambda(\sup_{y \in\Dset} V_P(y) + b/\lambda)(\sup_{z \in \Dset} V_Q(z) + b/\lambda) < \infty\eqsp, 
    \end{align*}
    which completes the proof of \eqref{eq:geom:ergod:zero}.
    
    Finally, all the assumptions of \cite[Theorem 11.4.2]{douc:moulines:priouret:soulier:2018} are satisfied and there exist $c > 0$ and $\rho \in (0,1)$ such that for every $\Xinit \in \measureset_1(\Esigma \tensprod \Csigma)$ and $n \in \nset$,
    $$
        \tvnorm{\Xinit R^n-\check{\pi}} \leq c \, \rho^n\, \PEE{R}{\Xinit}[\gamma^{\rti_\bDset} ] \leq c \lambda \, \rho^n \, \int \Xinit(\rmd (y, z))(V_P(y) + b/\lambda)(V_Q(z) + b/\lambda)\eqsp,
    $$
    where we used \eqref{eq:geom:ergod:one} in the last step. Combining this with the bound $(V_P(y)+b/\lambda)(V_Q(z)+b/\lambda) \leq (1+b/\lambda)^2V_P(y)V_Q(z)$ and letting $C=c \lambda (1+b/\lambda)^2$ completes the proof of \eqref{eq:ergo:geom}.
\end{proof}


\section{A general Kick-Kac teleportation process}
\label{sec:extensions}
\subsection{A generalization of Kac's theorem}
Recall that in order to construct the Markov teleportation process, we combined the kernel $P$, having invariant probability measure $\pi$, with an auxiliary kernel $Q$, having $\pi_\Cset$ as invariant probability measure. We now generalize this construction by replacing $\pi_\msc$ by a more general probability measure $\tilde{\pi} \in \measureset_1(\Esigma)$ which is assumed to admit a bounded density with respect to $\pi$; in other words, by assumption there exists $M \geq 1$ such that for every $x \in \mse$,
\begin{equation} \label{eq:cond:reject}
  \frac{\rmd \tilde \pi}{\rmd \pi}(x) \leq M. 
\end{equation}
Under this assumption we define the $[0, 1]$-valued measurable function 
\begin{equation} \label{eq:tildepi} 
 \alpha(x) \eqdef \frac{1}{M} \frac{\rmd \tilde\pi}{\rmd \pi}(x) \eqsp, \quad x \in \Eset. 
\end{equation}
 Of course, $\tilde \pi=\pi_\msc$
satisfies \eqref{eq:cond:reject} with $M=1/\pi(\msc)$, and in that
case, $\alpha=\indi{\Cset}$.

Before introducing the general Markov teleportation process, we first
provide a generalization of Kac's theorem. Let $\bar{\Eset} \eqdef \Eset \times [0,1]$ and $\bar{\Esigma} \eqdef \Esigma \tensprod {\mathcal B}([0,1])$ and define the Markov kernel $\bar{P}$ on $\bar{\Eset} \times \bar{\Esigma}$ by 
\begin{equation} \label{eq:def:barP}
    \bar P(\bar x,\msa) \eqdef \int P(x,\rmd x') \, \indi{\lrb{0,1}}(u') \indi{\msa}(x',u') \, \rmd u', \quad \bar{x} = (x, u) \in \bar{\Eset}, \ \Aset \in \bar{\Esigma}. 
\end{equation}

Note that a transition according to $\bar P$ can be decomposed into two
independent moves, where the first component is updated according to the
Markov kernel $P$ and the second component is drawn from the uniform distribution on
$[0, 1]$ independently of the past. For a given $\mu \in \measureset_1(\bar{\Esigma})$, denote by $\bar \PP_\Xinit$ (and $\bar \PE_\Xinit$) the unique probability distribution (and associated expectation) on the canonical space $(\bar{\Eset}^{\nset}, \bar{\Esigma}^{\tensprod \nset})$ induced by the Markov kernel $\bar P$ and the initial distribution $\Xinit$. By abuse of notation, if $\mu=\delta_{(x,u)}$ for some $(x,u) \in\Eset \times [0,1]$, we simply write $\bar \PP_{x,u}$ and $\bar \PE_{x,u}$ instead of $\PP_{\delta_{(x,u)}}$ and $\PE_{\delta_{(x,u)}}$, respectively. Moreover, when $\varphi$ is a nonnegative or bounded measurable function on $(\Eset^\nset,\Esigma^{\tensprod \nset})$ such that $\bar \PE_{x,u} \lrb{\varphi}$ does not depend on $u \in \lrb{0,1}$, we indicate this by writing $\bar \PE_{x,*} \lrb{\varphi}$. Finally, set $\bCset \eqdef \set{(x,u) \in \bar\Eset}{u \leq \alpha(x)}$ and define
\begin{equation} \label{eq:def:barC}
  \sigma_\bCset \eqdef \inf\set{k\geq 1}{(X_k,U_k) \in \bCset}\eqsp, 
\end{equation}
where $\seq{X_k,U_k}{k\in\nset}$ is the canonical process on $(\bar{\Eset}^{\nset}, \bar{\Esigma}^{\tensprod \nset})$. In addition,    define $\bar \pi \eqdef \pi \tensprod \unifDist([0,1])$ and note that $\bar{\pi}$ is an invariant probability measure for $\bar{P}$. By showing that $\bar{\msc}$ is $\bar{\pi}$-accessible and applying Kac's theorem to $\bar{P}$ we obtain the following generalization of Kac's theorem. 

\begin{proposition}
    \label{prop:kac:general}
    Let $P$ be a Markov kernel on $\Eset \times \Esigma$ with invariant probability measure $\pi$. Let $\alpha: \Eset \to \lrb{0,1}$ be a measurable function such that $\{ x \in \Eset \,  : \,  \alpha(x) >0\}$ is
    $\pi$-accessible for $P$. Then,  $\bar{\msc}$ is $\bar{\pi}$-accessible. Moreover,
    \begin{equation} \label{eq:2}
        \pi = \pi^0_\alpha = \pi^1_\alpha \eqsp,   
    \end{equation}
    where for every measurable nonnegative or bounded function $f$ on $(\Eset, \Esigma)$,
    \begin{equation*}
        \pi^0_\alpha(f) = \int \pi(\rmd x) \, \alpha(x) \bar \PE_{x,*} \lrb{\sum_{k=0}^{{ \rti}_\bCset-1}f(X_k)} \eqsp,        \pi^1_\alpha(f) = \int \pi(\rmd x) \, \alpha(x) \bar \PE_{x,*} \lrb{\sum_{k=1}^{ \rti_\bCset} f(X_k)} \eqsp. 
    \end{equation*}
\end{proposition}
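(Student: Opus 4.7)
The plan is to lift the problem to the extended space $(\bar\Eset,\bar\Esigma)$ and reduce it directly to \Cref{thm:kac} applied to the kernel $\bar P$ and the set $\bar\Cset$. A quick computation using the product structure of $\bar P$ shows that $\bar\pi = \pi \tensprod \unifDist([0,1])$ is invariant for $\bar P$: for any $\Aset\in\Esigma$ and Borel $B\subset[0,1]$, $\bar\pi \bar P(\Aset\times B) = \int \pi(\rmd x)\, P(x,\Aset)\,|B| = \pi(\Aset)\,|B| = \bar\pi(\Aset\times B)$. Hence \ref{ass:P:inv} is automatically satisfied for $\bar P$ as soon as we verify that $\bar\Cset$ is $\bar\pi$-accessible.

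The main (and only slightly subtle) step is to establish that $\bar\Cset$ is $\bar\pi$-accessible. Let $\tau \eqdef \inf\{k\geq 1 : \alpha(X_k) > 0\}$. Since $\{\alpha > 0\}$ is $\pi$-accessible for $P$, we have $\PPP{P}{x}(\tau < \plusinfty) > 0$ for $\pi$-almost all $x$. Under $\bar\PP_{x,u}$, the sequence $\seq{U_k}{k \in \nsetpos}$ is i.i.d.\ $\unifDist([0,1])$, independent of $\seq{X_k}{k \in \nset}$, so by conditioning on $(X_k)_{k\in\nset}$ and using the fact that $\tau$ is a stopping time for the first coordinate,
\begin{equation*}
\bar\PP_{x,u}(\rti_{\bCset} < \plusinfty) \,\geq\, \bar\PP_{x,u}(\tau < \plusinfty,\ U_\tau \leq \alpha(X_\tau)) \,=\, \PEE{P}{x}\lrb{\alpha(X_\tau)\,\indin{\tau < \plusinfty}} \,>\, 0,
\end{equation*}
the last inequality holding for $\pi$-almost all $x$ because $\alpha(X_\tau) > 0$ on $\{\tau<\plusinfty\}$. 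Since this holds for every $u \in [0,1]$ and $\bar\pi = \pi \tensprod \unifDist([0,1])$, we deduce that $\bar\Cset$ is $\bar\pi$-accessible.

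With this in hand, \Cref{thm:kac} applied to $\bar P$, $\bar\pi$ and $\bar\Cset$ yields $\bar\pi = \bar\pi^0_{\bCset} = \bar\pi^1_{\bCset}$. It remains to project these identities onto the first coordinate. Given $f \in \nonnegfunc{\Esigma}$ (or $f \in \bddfunc{\Esigma}$), define $\bar f(x,u)\eqdef f(x)$, so that $\bar\pi(\bar f) = \pi(f)$. To compute $\bar\pi^0_{\bCset}(\bar f)$, observe that $\rti_{\bCset} \geq 1$, so the initial coordinate $U_0 = u$ is never consulted; thus $\bar\PE_{x,u}[\sum_{k=0}^{\rti_{\bCset}-1} f(X_k)]$ depends only on $x$ (consistent with the $\bar\PE_{x,*}$ notation). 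Using $\bar\pi(\rmd(x,u)) = \pi(\rmd x)\,\rmd u$ and $\indi{\bCset}(x,u) = \indin{u \leq \alpha(x)}$,
\begin{equation*}
\bar\pi^0_{\bCset}(\bar f) = \int \pi(\rmd x)\int_0^1 \indin{u\leq\alpha(x)}\, \bar\PE_{x,*}\lrb{\sum_{k=0}^{\rti_{\bCset}-1}f(X_k)}\rmd u = \int \pi(\rmd x)\,\alpha(x)\,\bar\PE_{x,*}\lrb{\sum_{k=0}^{\rti_{\bCset}-1}f(X_k)} = \pi^0_\alpha(f),
\end{equation*}
and the same Fubini-type computation gives $\bar\pi^1_{\bCset}(\bar f) = \pi^1_\alpha(f)$, completing the proof of \eqref{eq:2}.

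The main obstacle is arguably conceptual rather than technical: one must correctly identify the right lifted set $\bar\Cset$ that encodes the generalized return distribution $\tilde\pi = M\alpha\pi$ via a uniform auxiliary variable. Once $\bar\Cset$ is chosen as above, $\bar\pi$-accessibility follows by an elementary independence argument, and the marginalization step is just Fubini combined with the observation that $\rti_{\bCset}\geq 1$ keeps $U_0$ immaterial.
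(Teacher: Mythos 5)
Your proof is correct and follows essentially the same route as the paper: verify that $\bar{\Cset}$ is $\bar{\pi}$-accessible for $\bar{P}$ (the paper does this by computing $\sum_{k\geq 1}\bar P^k((x,u),\bar\Cset)=\sum_{k\geq 1}\PE^P_x[\alpha(X_k)]>0$, while you use the equivalent first-hitting-time bound via $\tau$; both hinge on the same conditioning of the uniform variable on the $X$-path), then apply \Cref{thm:kac} to $\bar P$ and marginalize by Fubini using that the inner expectation does not depend on $u$. No gaps.
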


\begin{proof}
 To apply Kac's theorem to the set $\bar \Cset$ for the kernel $\bar P$ we need to check that $\bar \Cset$ is $\bar\pi$-accessible for $\bar P$. Since, by assumption, $\{\alpha > 0\}$ is $\pi$-accessible for $P$, we conclude that $\sum_{k=1}^\infty \PP^P_x(\alpha (X_k) >0)>0$ for $\pi$-almost all $x\in \Eset$. Note that since $\PP^P_x(\alpha
    (X_k) >0)>0$ if and only if $\PE^P_x[\alpha (X_k)]>0$, it holds that     
    $$
    \sum_{k=1}^\infty \bar P^k((x, u), \bCset) = \sum_{k=1}^\infty  \bar
    \PE_{x,u}\lrb{\indiacc{U_k \leq \alpha (X_k)}} = \sum_{k=1}^\infty 
    \PE^P_{x}\lrb{ \alpha (X_k)} > 0
    $$
    for $\bar \pi$-almost all $(x,u) \in \bar \Eset$. Hence, $\bar{\Cset}$
    is $\bar{\pi}$-accessible for $\bar{P}$, and Kac's theorem (\Cref{thm:kac}) applies. Thus, for every $h \in \posfunc{\bar{\Esigma}}$, 
    \begin{align*}
        \bar \pi (h)&= \int_\bCset \bar \pi (\rmd (x, u)) \, \bar{\PE}_{x,u} \lrb{\sum_{k=0}^{\rti_\bCset-1}h(X_k, U_k)} \\
        &=\int \pi(\rmd x) \, \int_{[0,1]}   
        \indiacc{u \leq \alpha(x)} \bar
        \PE_{x,u}\lrb{\sum_{k=0}^{\rti_\bCset-1}h(X_k,U_k)} \rmd u \eqsp. 
    \end{align*}
    Now, if $f \in \posfunc{\Esigma}$, then setting $h(x,u)=f(x)$ yields  
    \begin{align*}
        \pi(f) = \bar \pi (h) &= \int \pi(\rmd x) \, \int_{[0,1]} \indiacc{u \leq  \alpha(x)} \bar{\PE}_{x,u}   \lrb{\sum_{k=0}^{\rti_\bCset-1}f(X_k)} \rmd u \\ 
        &= \int \pi(\rmd x) \, \alpha(x) \bar \PE_{x,*} \lrb{\sum_{k=0}^{\rti_\bCset-1}f(X_k)}  \eqsp, 
    \end{align*}
    where the last equality stems from the fact that  $\bar
    \PE_{x,u}[\sum_{k=0}^{\rti_\bCset-1}f(X_k)]$ does not depend on
    $u$. Hence, $\pi= \pi^0_\alpha$. The proof of $\pi= \pi^1_\alpha$
    follows the same lines and is omitted for brevity. 
\end{proof}

\begin{remark}
    Similarly to \Cref{cor:kac}, the generalized Kac's theorem \eqref{eq:2} also holds if $P$ admits a unique invariant probability measure $\pi$ such that $\pi(\alpha)>0$. Indeed, in that case, $\pi(\alpha>0)>0$, and by \Cref{lem:verif_C_acc_unique_invariant}, $\{\alpha >0\}$ is
    $\pi$-accessible for $P$. Therefore, \Cref{prop:kac:general} applies, implying \eqref{eq:2}. 
  \end{remark}

  If
  $\{ x \in \Eset \, : \, \alpha(x) >0\}$ is $\pi$-accessible for $P$,
  we can, on the basis of \Cref{prop:kac:general} and
  \Cref{lem:invariant_measure_kac_process_c},  define a KKT sampler using the extended Markov kernel
  $\bar{P}$ as basis kernel and a Markov kernel
  $\bar{Q}$ leaving $\bar{\pi}_{\bCset}$ invariant as teleportation
  kernel. Moreover, note that since
  $\rmd \tilde{\pi}/\rmd \pi \propto \alpha$, it is easy to verify
  that a Markov kernel $\bar{Q}$ leaving $\bar{\pi}_{\bCset}$ invariant can
  be formed by picking some $\tilde \pi$-invariant Markov kernel $Q$ and setting $\bar{Q} \eqdef \left. Q \tensprod \mathbf{Unif}(0,1)\right|_{\bCset}$, {\ie}, $\bar Q$ is the Markov kernel on $\bCset \times \bar \Esigma_\bCset$ given by 
  \begin{equation} 
    \label{eq:bar:Q}
    \bar Q(\bar x,\msa) = \int \hspace{-2mm} \int  Q(x,\rmd x') \, \frac{\indi{\lrb{0,\alpha(x')}}(u')}{\alpha(x')} \indi{\msa}(x',u') \, \rmd u' \eqsp , \quad \bar{x} = (x, u) \in \bar{\Cset}, \ \Aset \in \bar{\Esigma}_{\bCset} \eqsp.    
  \end{equation}
  Using  \Cref{prop:kac:general}, we may now construct the \emph{general Kick-Kac
  teleportation} (\gKKT) \emph{process} described in
  \Cref{alg:Markov:telep:general}, where $Q$ is, as above, some Markov kernel leaving $\tilde{\pi}$ invariant. Finally, it is worth mentioning that all the results that we have derived for the KKT process can be obtained also for the {\gKKT} process by replacing the conditions on $P$ and $Q$ by similar conditions  
  on $\bar{P}$ and $\bar{Q}$.
  
\begin{algorithm}[h]
    \caption{The general Kick-Kac teleportation (\gKKT) process}
    \label{alg:Markov:telep:general}
    \begin{algorithmic}[1]
        \State {\bf Initialization} Draw $(Y_0, Z_0)$.
        \For{$k \gets 1$ to $n$}
            \State draw $(Y^\star_k, U_k) \sim P(Y_{k-1}, \cdot) \tensprod \unifDist([0,1])$
            \If {$U_k \geq \alpha(Y^\star_k)$}
                \State  set $(Y_k, Z_k) \gets (Y^\star_k, Z_{k-1})$
            \Else
                \State draw $Z_k \sim Q(Z_{k-1},\cdot)$
                \State set $Y_k \gets Z_k$
            \EndIf
        \EndFor
    \end{algorithmic}
\end{algorithm}

\begin{remark}
 \label{rem:upToMultiplicative}
Assume that the target distribution $\pi$ is known only up to multiplicative constant, {\ie}, 
$\pi = \pi_u / \pi_u(\Eset)$, where $\pi_u$ is a known, unnormalized finite measure on $(\Eset, \Esigma)$. Moreover, assume that there exists another unnormalized measure $\tilde{\pi}_u$ on $(\Eset, \Esigma)$, dominated by $\pi_u$, and a finite constant $M_u$ such that for every $x \in \Eset$, 
$$ 
 \frac{\rmd \tilde \pi_u}{\rmd \pi_u} (x) \leq M_u \eqsp. 
$$      
Then it is straightforward to check that \eqref{eq:cond:reject} is satisfied with $\tilde \pi = \tilde \pi_u/ \tilde \pi_u(\Eset)$ and $M = M_u \pi_u(\Eset)/ \tilde \pi_u(\Eset)$. Furthermore, as the function $\alpha$ defined by \eqref{eq:tildepi} can be written as $\alpha(x)= M_u^{-1} (\rmd \tilde \pi_u/\rmd  \pi_u) (x)$, \Cref{alg:Markov:telep:general} can be implemented even if the target is known only up to a multiplicative constant. 
\end{remark}
\begin{remark} 
  \label{rem:alternative:telep:general}
  As explained above, \Cref{alg:Markov:telep:general} can be viewed as a special instance of \Cref{alg:telep}, parameterized by the base kernel $\bar{P}$, the critical region $\bar{\Cset}$ and the teleportation kernel $\bar{Q}$, where the generation of the uniformly distributed component of the teleportative move governed by $\bar{Q}$ is omitted (as this draw is never used later in the algorithm, since neither $\bar{Q}((x, u), \cdot)$ nor $\bar{P}((x, u), \cdot)$ \emph{de facto} depends on $u$). Recalling the definitions of $\bar P$ and $\bCset$ in \eqref{eq:def:barP} and \eqref{eq:def:barC}, respectively, Lines~3--4 in Algorithm~\ref{alg:Markov:telep:general} are equivalent to
   \begin{itemize}
       \item [3':] draw $(Y^\star_k, U_k) \sim \bar P((Y_{k-1}, U_{k-1}), \cdot)$
       \item [4':]\textbf{if} $(Y^\star_k,U_k) \notin \bCset$ \textbf{then}
   \end{itemize}
   which exactly correspond to Lines 3--4 in Algorithm~\ref{alg:telep} with $Y^\star_k$ replaced by
   $(Y_k^\star,U_k)$, $P$ by $\bar P$, and $\Cset$ by $\bCset$. Moreover, alternatively, Lines 3--4 in Algorithm~\ref{alg:Markov:telep:general} can be expressed as 
   \begin{itemize}
       \item [3'':] draw $Y^\star_k \sim P(Y_{k-1},\cdot)$ and, conditionally on $Y^\star_k$, $B_k \sim \mbox{\emph{Bernoulli}}(\alpha(Y^\star_k))$
       \item [4'':]\textbf{if} {$B_k=0$} \textbf{then}
   \end{itemize}
\end{remark}

\subsection{Revisiting the hybrid kernel of \cite{brockwell:kadane:2005}}
\label{sec:revis-hybr-kern}
In this section, we compare our methodology to the one proposed in \cite{brockwell:kadane:2005}.
We first recall the construction of the hybrid kernel introduced in that work and then show that this can be seen as a particular instance of the {\gKKT} process. Let $P$ be a Markov kernel on $\Eset \times \Esigma$ leaving $\pi$ invariant and let $\phi \in \measureset_1(\Esigma)$ be a so-called \emph{re-entry proposal distribution}. We assume that $\pi$ and $\phi$ have densities with respect to the same dominating measure $\nu$, and denote, by abuse of notation, these densities by the same symbols $\pi$ and $\phi$, respectively. The construction of \cite{brockwell:kadane:2005}  follows two steps: using $P$ and $\phi$, we first construct a process $\seq{Y'_k}{k \in \nset}$ on $\Eset \cup \{\mb a\}$, where $\mb a$ is an `artificial' atom; then we obtain $\seq{Y_k}{k \in \nset}$ by removing every occurrence of the state $\mb a$ from  $\seq{Y'_k}{k\in\nset}$. Letting $c > 0$, the transitions of $\seq{Y'_k}{k\in\nset}$ can be described as follows. 

{\small
\begin{center}
\begin{tabular}{|p{5,7cm}|p{5,7cm}|}
    \hline
    \centering \textbf{Case 1 :} $Y'_{k-1}=y\in \Eset$ & \centering  \textbf{Case 2 :} $Y'_{k-1}=\mb a$ \tabularnewline
    \hline
     Draw $(Y^\star_k,U_k) \sim P(Y_k,\cdot) \tensprod \unifDist{([0,1])}$; &  Draw $(Y^\star_k,V_k) \sim \phi \tensprod  \unifDist{([0,1])}$; \tabularnewline
     if $U_k  \geq \left( 1 \wedge \frac{c\phi(Y^\star_k)}{\pi(Y^\star_k)} \right)$,
     then set $Y'_k=Y^\star_k$;  & if $V_k \leq \left( 1 \wedge \frac{\pi(Y^\star_k)}{c\phi(Y^\star_k)} \right)$, then set $Y'_k=Y^\star_k$; \tabularnewline
     otherwise, set $Y'_k=\mb a$. &  otherwise, set $Y'_k=\mb a$. \tabularnewline
    \hline
  \end{tabular}
\end{center}
}
\smallskip

It is straightforwardly seen from \textbf{Case~2} that the algorithm attempts to escape from $\mb a$ through rejection sampling with proposal distribution $\phi$ and acceptance probability $1 \wedge \frac{\pi}{c\phi}$. The density (with respect to $\nu$) of the accepted candidate is therefore $\tilde{\pi}(x) \eqdef M ([c\phi(x)] \wedge \pi(x))$, $x \in \Eset$, where $M \eqdef 1 / \int [c\phi(x)] \wedge \pi(x) \, \nu(\rmd x)$. Note that $\tilde \pi(x) \leq M \pi(x)$ for every $x \in \Eset$, showing that \eqref{eq:cond:reject} is satisfied. Moreover, it can be seen from \textbf{Case~1} that if the previous state is $y \in \Eset$, a candidate drawn from $P(y,\cdot)$ is accepted with probability $1-1 \wedge \frac{c\phi}{\pi}=1-\alpha$, where $\alpha$ is defined in \eqref{eq:tildepi}. Pruning every occurence of $\mb a$ in   $\seq{Y'_k}{k\in\nset}$, we finally obtain the same transitions as in \Cref{alg:Markov:telep:general}, except that the general Markov kernel $Q$ on Line~7 is replaced by the particular distribution $\tilde \pi$. Thus, the approach of \cite{brockwell:kadane:2005} corresponds to a memoryless version of the {\gKKT} process and falls into the class of regeneration-type algorithms. In contrast, our {\gKKT}, which allows the flexibility of a general kernel $Q$ targeting the distribution $\tilde \pi$, does not require exact sampling from $\tilde \pi$.    

\subsection{Revisiting the Metropolis--Hastings algorithm}
Here we use \Cref{rem:alternative:telep:general} to cast the Metropolis--Hastings (MH) algorithm into the framework of the {\gKKT} process. Consider the MH algorithm targeting some distribution $\pi$ on $(\Eset, \Esigma)$ by means of some proposal kernel $\propMH$ on $\Eset \times \Esigma$. We  suppose that $\pi$ and $\propMH$ have a density and a transition density, denoted by $\pi$ and $r$, respectively, with respect to some $\sigma$-finite measure $\nu \in \measureset(\Esigma)$. Then, the associated MH acceptance probability is given by
\begin{equation} \label{eq:MH:acceptance:prob}
    \alphaMHD(x,y) \eqdef \lr{\frac{\pi(y)r(y,x)}{\pi(x)r(x,y)}} \wedge 1\eqsp, \quad (x,y) \in \Eset \times \Eset \eqsp. 
\end{equation}
We first recall that the transition kernel $R^{\MH}$ of the MH algorithm based on $\propMH$ is given by 
\begin{equation}
  \label{eq:def_Metropolis}
R^{\MH}(x, \Aset) = \int_{\msa} \alphaMHD(x, y) \propMH(x,\rmd y ) + \delta_x(\Aset) \int_\Eset \{1 - \alphaMHD(x,y)\} \propMH(x,\rmd y), 
\end{equation}
for $x \in \Eset$ and $\Aset \in \mce$. 

In order to show that the MCMC chain generated by this algorithm is just a special instance of the {\gKKT} process, define the probability measure 
$$
    \tilde{\pi}(h) \eqdef \frac{\int \pi(\rmd x) \, \alphaMH(x) h(x)}{\int \pi(\rmd z) \, \alphaMH(z)}, \quad h \in \posfunc{\Esigma}\eqsp,  
$$
where $\alphaMH(x) \eqdef \int \propMH(x, \rmd y) \, \alphaMHD(x,y)$, $x \in \Eset$. Note that since $0\leq \alphaMH \leq 1$, $\tilde{\pi}$ satisfies \eqref{eq:cond:reject} with $M \eqdef 1 / \int \pi(\rmd z) \, \alphaMH(z)$. Consider now a {\gKKT} process with 
\begin{enumerate}[(i)]
      \item the $\pi$-invariant Markov kernel $P$ is given by $P^{\MH}(x,\cdot)=\delta_x$, $ x \in \Eset$, and 
      \item the $\tilde \pi$-invariant Markov kernel $Q_\alpha$ is given by
      \begin{equation}
          \QMH_\alpha(x,\Aset) = \frac{\int_\Aset \propMH(x,\rmd y) \, \alphaMHD(x,y)}{\int \propMH(x,\rmd z) \, \alphaMHD(x, z)}\eqsp,\quad (x, \Aset) \in \Eset \times \Esigma\eqsp. \label{eq:Qalpha}
      \end{equation}
\end{enumerate}
With this particular choice of $P$ and $Q_\alpha$, Algorithm~\ref{alg:Markov:telep:general}, with Lines 3--4 replaced by Lines 3''--4'' according to \Cref{rem:alternative:telep:general}, turns into Algorithm~\ref{alg:Markov:telep:MH} below. 

Now, let us examine the dynamics of Algorithm~\ref{alg:Markov:telep:MH} more closely. Assume that for some $k \in \nsetpos$, $B_k=1$; then $Y_k=Z_k$, and we assign the value of
$Y_k$ to the next values $(Y_\ell)_{\ell>k}$ until the Bernoulli variable $B_\ell$, with success probability $\alphaMH(Y_{\ell-1})=\ldots=\alphaMH(Y_k)$, takes on the value $1$. In that
case, $Y_\ell$ is drawn from the distribution $\QMH_\alpha(Z_k,\cdot) = \QMH_\alpha(Y_k,\cdot)$. Equivalently, we may draw a geometrically distributed random variable $T$ with success probability
$\alphaMH(Y_k)$ and, independently, a random variable $\tilde Y \sim
\QMH_\alpha(Y_{k},\cdot)$ and then set $Y_k=Y_{k+1}=\ldots=Y_{k+T-1}$ and $Y_{k+T}=\tilde Y$. However, this is exactly the course of action of the MH algorithm, in which, starting from $Y_k$, a draw from $\propMH(Y_k, \cdot)$ is accepted with a probability given by the function $\alphaMHD(Y_k, \cdot)$. If the move is not accepted, the value of $Y_k$ is repeated. Note that this can be equivalently described as a rejection-sampling algorithm, where samples from $\QMH_\alpha(Y_k,\cdot)$ are generated by drawing candidates from $\propMH(Y_k, \cdot)$ and accepting the same with probabilities given by the function $\alphaMHD(Y_k, \cdot)$. It is well known that for rejection sampling, the number $T$ of trials before acceptance is geometrically distributed with success probability $\int \propMH(Y_k, \rmd y) \, \alphaMHD(Y_k, y) = \alphaMH(Y_k)$. Moreover, $T$ is independent of the accepted variable $\tilde{Y}$, the latter having the desired distribution $\QMH_\alpha(Y_k,\cdot)$ conditionally on acceptance. 

This finally shows that the MH algorithm is just a particular {\gKKT} process, where the  base kernel is degenerated, $P(x,\cdot) = \delta_x$, and the teleportation kernel $\QMH_\alpha$ is simulated using rejection sampling.  
\begin{algorithm}[h]
    \caption{The MH algorithm as a {\gKKT} process}
    \label{alg:Markov:telep:MH}
    \begin{algorithmic}[1]
        \State {\bf Initialization} Set $(Y_0,Z_0) \gets (x_0, x_0)$.
        \For{$k\gets 1$ to $n$}
            \State set $Y^\star_k=Y_{k-1}$ and, conditionally to $Y^\star_k$, draw $B_k \sim \mathrm{Bernoulli}(\alphaMH(Y^\star_k))$
            \If {$B_k=0$}
                \State  set $(Y_k, Z_k)\gets (Y^\star_k,Z_{k-1})$
            \Else
                \State draw $Z_k \sim \QMH_\alpha(Z_{k-1}, \cdot)$
                \State set $Y_k \gets Z_k$ 
            \EndIf
        \EndFor
    \end{algorithmic}
\end{algorithm}


Here, $\bar{\Eset} \eqdef \Eset \times [0,1]$ and $\bar{\Esigma} \eqdef \Esigma \tensprod {\mathcal B}([0,1])$  and the Markov kernels $\bar P$ and $\bar Q$ defined in \eqref{eq:def:barP} and \eqref{eq:bar:Q} write
\begin{align} 
    \bar P(\bar x,\msa) &\eqdef \int \delta_{x}(\rmd x') \, \indi{\lrb{0,1}}(u') \indi{\msa}(x',u') \, \rmd u', \quad \bar{x} = (x, u) \in \bar{\Eset}, \ \Aset \in \bar{\Esigma}, \label{eq:defP:mh}\\
    \bar Q(\bar x,\msa) &\eqdef \int Q_\alphaMH(x,\rmd x') \, \frac{\indi{\lrb{0,\alphaMH(x')}}(u')}{\alphaMH(x')} \indi{\msa}(x',u') \, \rmd u', \quad \bar{x} = (x, u) \in \bar{\Eset}, \ \Aset \in \bar{\Esigma}, \label{eq:defQ:mh}
\end{align}
where $Q_\alphaMH$ is defined in \eqref{eq:Qalpha}. It is worthwhile to note that in this case the base kernel $ \bar P $ is not even ergodic, since when $(X_k,U_k)_{k \in \nset}$ is a Markov chain with Markov kernel $\bar P$ it holds that $n^{-1} \sum_{k=1}^n h(X_k)=h(X_0)$, $\PP^{\bar P}_\xi$-a.s., for every initial distribution $\xi$ on $(\Eset,\Esigma)$. In contrast, since the MH algorithm is a particular \gKKT\ process, we obtain that the teleported process $(Y_k)_{k \in \nset}$ is ergodic  by applying \cite[Corollary 2]{tierney:1994}. Thus, 
in this particular case, using the teleportation kernel instead of the original base kernel always increases the efficiency of the algorithm.  

Next, we show that the conditions that we derive in \Cref{sec:geom-ergod-kkt} are mild in the sense that they are ``almost'' necessary conditions for the MH algorithm to be geometrically ergodic.
\begin{proposition}
    \label{prop:geom:mh}
    Assume that the following conditions hold.  
\begin{enumerate}[(I)]
    \item \label{item:pos} $\alphaMH_0\eqdef \inf_{x \in \Eset} \alphaMH(x)>0$ and for every $(x,x') \in \Eset \times \Eset$, $\pi(x)>0$ and $r(x,x')>0$.  
    \item  \label{item:geom:mh:two} There exists a set $\Dset_{\MH} \in \Esigma$, a constant $\epsilon>0$ and a probability measure $\nu_{\MH}$ on $(\Eset,\Esigma)$ such that $\Dset_{\MH}$ is $\propMH$-accessible, $\nu_{\MH}(\Dset_{\MH})>0$ and $\Dset_{\MH}$ is $1$-small for $\QMH_{\alpha}$ in the sense that for every $x \in \Dset_{\MH}$ and $\Aset \in \Esigma$,   
    $$ 
\QMH_{\alpha}(x,\msa) \geq \epsilon \nu_{\MH}(\Aset) \eqsp.  
    $$
    \item \label{item:geom:mh:three} There exist constants $(\lambda_{\MH},b_{\MH}) \in (0,1) \times \rsetps$ and a measurable function $V_{\MH}: \mse \to [1,\plusinfty)$ such that
    \begin{equation} \label{eq:R:drift}
R^{\MH} V_{\MH}(y)\leq \lambda_{\MH} V_{\MH}(y)+ b_{\MH} \indi{\Dset_{\MH}}\quad \mbox{and} \quad \sup_{x \in \Dset_{\MH}} V_{\MH}(x)<\infty\eqsp.
    \end{equation}
    Then \ref{ass:P:geom} and \ref{ass:Q:geom} hold true with $P$ and $Q$ replaced by $\bar P$ and $\bar Q$, respectively.  
\end{enumerate}
\end{proposition}

Assumption~\ref{item:pos} is necessary for $R^{\MH}$ to be geometrically ergodic by \cite[Proposition 5.1]{roberts1996geometric}.
In Assumption~\ref{item:geom:mh:two}, the fact that $\Dset_{\MH}$ is supposed to be $K^{\MH}$-accessible is mild.  By \cite[Proposition~4.3.3]{douc:moulines:priouret:soulier:2018} (see the comment after \ref{ass:P:geom}) it holds, for instance, if $K^{\MH}$ satisfies an appropriate drift condition. 
In addition, the condition that $\Dset_{\MH}$ is $1$-small for $\QMH_{\alpha}$ is in general assumed when showing that this set is  $1$-small for $R^{\MH}$. Finally, Assumption~\ref{item:geom:mh:three} holds (up to some constants) if $R^{\MH}$ is $V_{\MH}$-geometrically ergodic; see \cite[Chapter 15]{meyn:tweedie:2012}.

\begin{proof}
Let $\bCset \eqdef \set{(x,u) \in \bar\Eset}{u \leq \alphaMH(x)}$. By the definition \eqref{eq:defP:mh} of $\bar P$ it holds that $\bar P(\bar x,\bar \msc)=\alphaMH(x)\geq \alphaMH_0>0$ for every $\bar x=(x,u) \in \Eset$. Hence, \Cref{lem:lem:util} applies, which implies \ref{ass:P:geom}. 

We now check \ref{ass:Q:geom}. By Assumption~\ref{item:pos} it holds that $\alphaMHD(x,y)>0$ for all $(x, y) \in \Eset \times \Eset$, and since $\Dset_{\MH}$ is accessible for $\propMH$ we deduce that $\bar{\Dset} \eqdef \Dset_{\MH} \times [0,1]$ is accessible for $\bar Q$ defined in \eqref{eq:defQ:mh} (this can be done by verifying that $[\propMH]^k(x,\msa)=0$ whenever $\bar{Q}^k(\bar x,\msa \times [0,1]) = 0$ for $\bar x =(x,u)\in\mse \times [0,1]$, $\msa \in\mce$ and $k \in\nset$).

In addition, by \eqref{eq:defQ:mh}, \eqref{eq:Qalpha} and Assumption~\ref{item:geom:mh:two} it holds that for every $\bar x \in \bar{\Dset}$ and $(\Aset,\Bset) \in \Esigma \times {\mathcal B} ([0,1])$, as $\alphaMH_0 \leq \alphaMH(x)\leq 1$,
\begin{align*}
    \bar Q(\bar x, \Aset\times \Bset) &\geq \QMH_{\alpha}(x,\msa) \Leb(\Bset \cap [0,\alphaMH_0]) \geq \epsilon \nu(\Aset \times \Bset), 
\end{align*}
where $\Leb$ is the Lebesgue measure on $[0,1]$ and $\nu$ is the measure defined by $\nu(\Aset \times \Bset)=\nu_{\MH}(\Aset) \Leb(\Bset \cap [0,\alphaMH_0])$. Thus, $\bar{\Dset}$ is a $(1,\epsilon \nu)$-small set for $\bar Q$ with $\nu(\bar \Dset)=\nu_{\MH}(\Dset_{\MH}) \alphaMH_0>0$, which implies that \ref{ass:Q:geom}\ref{ass:geom:D} is satisfied. Moreover, for every $\bar x\in \bar{\Dset}$, 
$$ 
 \bar P(\bar x,\bCset) =\int \delta_x(\rmd x') \indi{[0,1]}(u') \, \rmd u' \ \indiacc{u'\leq \alphaMH(x')}=\alphaMH(x) \geq \alphaMH_0 > 0, 
$$ 
which implies that \ref{ass:Q:geom}\ref{ass:geom:mino:C} is satisfied. Define $V_{\bar Q}(\bar x)\eqdef V_{\MH}(x)$ for $\bar x=(x,u) \in \bar \Eset$; then by Assumption~\ref{item:geom:mh:three}, 
\begin{align*}
  \bar Q V_{\bar Q}(\bar x) &= \alphaMH(x)^{-1}\int \propMH(x, \rmd y) \, \alphaMHD(x, y) V_{\MH}(y)\\
&=  \alphaMH(x)^{-1} \lr{R^{\MH} V_{\MH}(x)-[1-\alphaMH(x)] V_{\MH}(x)} \\
& \leq \alphaMH(x)^{-1} \lr{\lrb{\lambda_{\MH}-1+\alphaMH(x)} V_{\MH}(x)+b_{\MH} \indi{\Dset_{\MH}}(x)} \\ 
&\leq \lambda_{\MH} V_{\bar Q}(\bar x)+ (\alphaMH_0)^{-1}b_{\MH} \indi{\Dset_{\MH}}(x), 
\end{align*}
where we have used in the last inequality that $\alpha^{-1}(\lambda -1+\alpha) \leq \lambda$ for every $(\alpha,\lambda) \in (0,1]\times (0,1)$ and $V_{\bar Q}(\bar x)= V_{\MH}(x)$. Noting that $\sup_{\bar x \in \bar{\Dset}} V_{\bar Q}(\bar x)=\sup_{x \in \Dset_{\MH}} V_{\MH}(x)<\infty$, we conclude that \ref{ass:Q:geom}\ref{ass:geom:drift} holds true with $V_Q$ replaced by $V_{\bar Q}$. The proof is complete.  
\end{proof}


\section{Numerical illustrations}
\label{sec:numerics}
In the following we benchmark numerically the KKT sampler against the \emph{Metropolis-adjusted Langevin algorithm} (MALA) \cite{roberts:tweedie:1996} and the HMC algorithm \cite{neal:2011,betancourt-bernoulli:2017}. In all our experiments, the target distribution $\pi$ is assumed to have a positive density on $\rset^d$ with respect to the Lebesgue measure; for ease of notation, this density will be denoted by the same symbol, $\pi$. In addition, $\log \pi$ is supposed to be continuously differentiable, and we denote its gradient by $\nabla \log \pi$.

 We will consider two different parameterizations of the MH algorithm, namely the MALA and the \emph{random-walk Metropolis} (RWM) \emph{algorithm}. In RWM, proposals are generated by means of a random walk with standard deviation $\upsigma > 0$, providing an MH algorithm with proposal transition density  
\begin{equation*}
    r_{\upsigma}(x, y) = (2 \upsigma^2 \uppi)^{-d/2} \exp \left( -\frac{1}{2 \upsigma^2} \norm{y - x}^2 \right)\eqsp. 
\end{equation*}
On the other hand, in the MALA, proposals are generated according to the Euler--Maruyama discretization of the overdamped Langevin diffusion, {\ie}, 
\begin{equation*}
r_{\gamma}(x,y) = (4\gamma\uppi)^{-d/2} \exp \left( - \frac{1}{4 \gamma} \norm{y-x-\gamma \nabla \log \pi(x)}^2 \right)\eqsp,
\end{equation*}
where $\gamma >0$ is a fixed stepsize. The Markov kernels associated with the RWM algorithm and the MALA will be denoted by $\Qrwm_\upsigma$ and $\Qmala_\gamma$, respectively. 

Finally, we consider the HMC algorithm, which is briefly reviewed in the following (we refer to \cite{neal:2011,bou:sanz:2018} for details). HMC is based on the Hamiltonian function $H(x,v) \eqdef -\log \pi(x) + \norm[2]{v}/2$, $(x, v) \in \Eset \times \rset$,  associated with the potential $-\log \pi$ of the target distribution and the corresponding ordinary differential equation (ODE) and dynamics. 
Indeed, the latter preserves the extended target distribution $\pi \tensprod \gaussLaw(0,\Idd)$, where $\gaussLaw(0,\Idd)$ is the standard $d$-dimensional Gaussian distribution. However, in most cases, integrating exactly the Hamiltonian ODE is not an option, and therefore numerical integrators are typically used instead. A very popular choice is the \emph{Verlet integrator} which, given some starting point $(x_0,v_0) \in \rset^{2d}$, time step size $\Delta_t > 0$ and number $N_{\HMC} \in \nsets$ of iterations, consists in the recursion $(x_{k+1},v_{k+1}) = \Uppsi_{\Delta_t}(x_k,v_k)$, for $k\in \{0,\ldots,N_{\HMC}-1\}$, where, letting $v_{k+1/2} = v_k +2^{-1}\Delta_t \log \pi(x_k)$,
    \begin{equation}
      \label{eq:def_Uppsi}
      \Uppsi_{\Delta_t}(x_k,v_k) \eqdef (v_{k+1/2}+2^{-1} \Delta_t \, \nabla \log \pi(x_k+\Delta_t  v_{k+1/2}), x_k + \Delta_t  v_{k+1/2})\eqsp.
    \end{equation}
    For $k\in \{1, \ldots, N_{\HMC}-1\}$, $(x_k,v_k)=\Uppsi_{\Delta_t}^{\circ k}(x_0,v_0)$, where  $\Uppsi_{\Delta_t}^{\circ 1} = \Uppsi_{\Delta_t}$ and $\Uppsi_{\Delta_t}^{\circ (k + 1)} = \Uppsi_{\Delta_t}^{\circ k} \circ \Uppsi_{\Delta_t}$ for $k \in \nsetpos$. It follows that  $\Uppsi_{\Delta_t}^{\circ k}(x_0,v_0)$ is an approximation of the Hamiltonian dynamics at time $k \Delta_t$. However, failing to leave the Hamiltonian function constant, numerical integration of the Hamiltonian ODE does not preserve the extended target in general. Still, like the continuous Hamiltonian dynamics, symplecticness and reversibility still hold. These properties allow $(x_{N_{\HMC}},v_{N_{\HMC}}) = \Uppsi_{\Delta_t}^{\circ N_{\HMC}}(x_0,v_0)$ to be used as a \emph{deterministic} proposal inside an MH algorithm with corresponding acceptance probability function
    \begin{equation}
      \label{eq:alpha_hmc}
      \alpha_{\HMC}(x_0,v_0) = 1\wedge\exp\parenthese{H(x_0,v_0) - H \circ \Uppsi_{\Delta_t}^{\circ N_{\HMC}}(x_0,v_0)}.
    \end{equation}
    Finally, to ensure that the resulting Markov kernel is irreducible and ergodic (see
    \cite{durmus:moulines:saksman:2020}), the
    starting point of the additional variable $v_0$ is refreshed, \ie, sampled from its stationary distribution $\gaussLaw(0,\Idd)$ independently of the past, at each iteration. This yields the Markov kernel
\begin{multline}
  \label{eq:def_hmc}
  R^{\HMC}_{\Delta_t,N_{\HMC}}(x, \msa) \eqdef (2\uppi)^{-2d}\int_{\rset^d} \1_{\msa}\parenthese{\proj_x \circ \Phi^{\circ(N_{\HMC})}_{\Delta_t}(x,v)}  \alpha_{\HMC}(x,v) \exp(\norm[2]{v}) \, \rmd v \\
   + \delta_x(\msa) (2\uppi)^{-2d}\int_{\rset^d} (1 -  \alpha_{\HMC}(x,v)) \exp(\norm[2]{v}) \, \rmd v, \quad x \in \rset^d, \ \msa \in \mcbb(\rset^d),
\end{multline}
where $\Uppsi_{\Delta_t}$ and $\alpha_{\HMC}$ are defined in \eqref{eq:def_Uppsi} and \eqref{eq:alpha_hmc}, respectively, and $\proj_x : \rset^{2d} \ni (x,v) \mapsto x$.

\subsection{Two multi-modal distributions}
In a first example, we consider the target distribution 
\begin{equation} \label{eq:multi_modal_one}
\pi(x) = (4 \uppi)^{-1} \left( \exp \left( - \frac{1}{2} \| x - \boldsymbol{\mu} \|^2 \right) + \exp \left( - \frac{1}{2} \|x + \boldsymbol{\mu} \|^2 \right) \right), \quad x \in \rset^2, 
\end{equation}
corresponding to a mixture of two bivariate Gaussian distributions with identity covariance matrix and means $\boldsymbol{\mu} = (10, 0)^\intercal$ and $- \boldsymbol{\mu}$. 
In this setting, we compare the MALA, with step size $\gamma = 0.1$, with the memoryless KKT sampler. The latter is parameterized by the same base kernel $P = \Qmala_\gamma$, again with $\gamma = 0.1$, and the set
\begin{equation}
    \Cset = \set{x \in \Dset}{\pi(x) \leq c q(x)},  
\end{equation}
where $c = 1.3 / \uppi$, $\Dset=\ccint{-15,15}^2$ and $q(x) = \indi{\Dset} (x)/30^2$ for $x \in \rset^2$. 
Moreover, we draw from $\pi_{\msc}$ using rejection sampling with instrumental density $q$, \ie, the uniform distribution on $\Dset$. With this choice of $q$, the KKT sampler will explore the modes of $\pi$ using the MALA and teleport across the low-probability regions far from the modes through independent sampling from $\pi_\Cset$. In addition, by the definition of $\Cset$, the rejection-sampling algorithm performs relatively well; the average number of rejections equals $70$. Note that the KKT sampler is supposed to operate using the MALA also outside $\Dset$; however, due to the large span of $\Dset$, this never occurred in our simulation. \Cref{fig:double_gaussian} provides the resulting histograms of $10^6$ samples. As clear from this plot, the MALA gets stuck in one of the modes, and adjusting the step size of the proposal does not help circumventing this problem. On the other hand, the KKT process---moving easily across the low-probability barrier between the modes---explores efficiently the full distribution.  

\begin{figure}
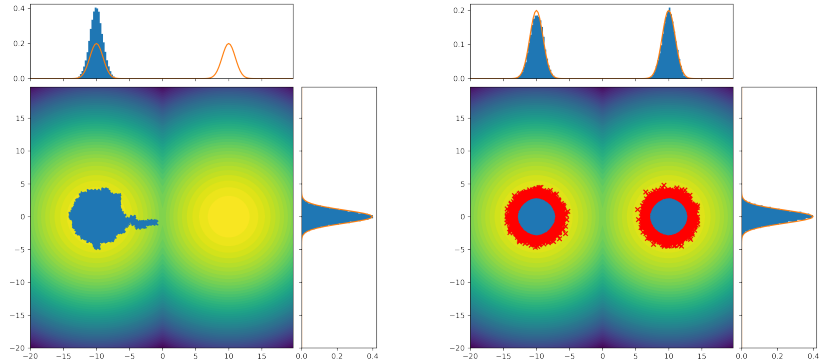

\includegraphics[scale=0.28]{figure/double_gaussian_MALA.png}
\includegraphics[scale=0.28]{figure/double_gaussian.png}
\caption{Comparison between MALA (left panel) and the memoryless KKT sampler (right panel) on the multi-modal distribution \eqref{eq:multi_modal_one}. For the memoryless KKT sampler, red and blue points correspond to samples generated by rejection sampling from $\pi_{\msc}$ and MALA base kernel, respectively.}
  \label{fig:double_gaussian}
\end{figure}

In a second experiment we consider again a two-dimensional multi-modal distribution, this time given by  
\begin{multline} \label{eq:multi_modal_two}
  \pi(x) =14.8^{-1} \left[0.8 \left( \int \exp(-s^4) \, \rmd s \right)^{-2} \exp \left( -(x_1 + 7)^4 - (x_2 + 6.5)^4 \right) \right. \\
  \left. + (2\pi)^{-1} \sum_{i=1}^{14} \exp \left( - \frac{1}{2} \| x - \boldsymbol{\mu}_i \|^2 \right) \right], \quad x = (x_1, x_2) \in \rset^2, 
\end{multline}
where $(\boldsymbol{\mu}_i)_{i = 1}^{14}$ are points in $\rset^2$, marked with black bullets in \Cref{fig:No_grad_lips_pot}. Note that one of the components of the mixture distribution \eqref{eq:multi_modal_two} has lighter tails than the other ones, of which all are Gaussian. 

For this model, we compared the MALA, with step size $\gamma = 0.8$, with the KKT sampler parameterized by the same base kernel $P = \Qmala_{\gamma}$ (again with $\gamma = 0.8$) and the set  
\begin{equation*}
  \Cset = \{ x \in \rset^2 : - \log(14.8 \pi(x)) > 2 \}.
\end{equation*}
In this case, we let the teleportative moves of the KKT process be governed by the RWM kernel $Q = \Qrwm_{\upsigma}$ with $\upsigma = 0.8$. The algorithmic parameters $\gamma$, $\upsigma$ and $\Cset$ are tuned to obtain the best results for both methods. 
 \Cref{fig:No_grad_lips_pot} displays the resulting samples after $10^6$ iterations. As evident from the plot, MALA fails to explore the mode of the target corresponding to the component of the mixture with the lightest tails, even though this mode is less isolated than in our previous example. The KKT sampler, on the other hand, has no difficulty at all in transiting across any low-probability barrier. 
 \begin{figure}
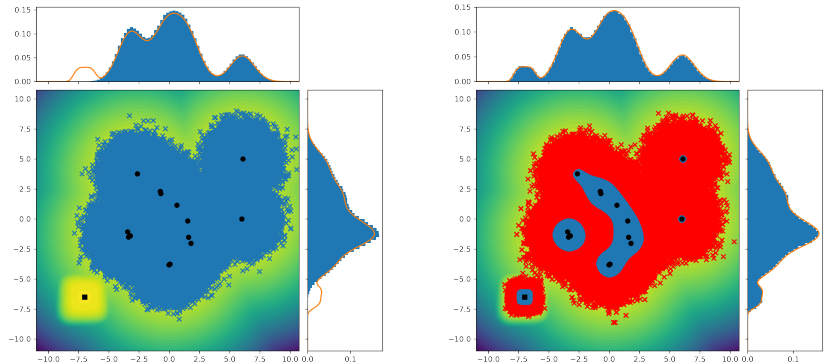

  \includegraphics[scale=0.28]{figure/multi_gaussian_quad_MALA.png}
    \includegraphics[scale=0.28]{figure/multi_gaussian_quad_teleportation.png}
    \caption{Comparison between MALA (left panel) and the KKT sampler (right panel) on the multi-modal distribution \eqref{eq:multi_modal_two}. For the KKT sampler, red and blue points correspond to samples generated by the RWM teleportation kernel and MALA base kernel, respectively.} 
  \label{fig:No_grad_lips_pot}
\end{figure}

\subsection{Stochastic volatility model}
\label{sec:Stochastic:Volatility}

We now illustrate the proposed KKT sampler in the context of Bayesian inference in a stochastic volatility model.  More precisely, following \cite{Ludkin2019HugAH} we assume that we have access to a given record $(y_k)_{k=0}^N \in \rset^{N + 1}$, $N \in \nsets$, of observations such that for every $k \in \{0, \ldots, N\}$, $y_k$ is a draw from a one-dimensional, zero-mean Gaussian distribution with variance $\exp(2 x_k)/\tau$, where the sequence $(x_k)_{k = 0}^N$ is governed by the recursion 
\begin{equation*}
    x_{k+1} = \rho x_k + z_{k+1}, 
\end{equation*}
with the model parameters $\rho \in \ooint{0,1}$ and $\tau > 0$ and the noise sequence $(z_k)_{k = 0}^N \in \rset^{N + 1}$ being unknown. In order to infer the parameters and the noise variables we take a Bayesian approach and assign $\tau$ a gamma prior with hyperparameter $(21, 5)$ and $(1+\rho)/2$ a beta prior with hyperparameter $(20, 2)$. Moreover, the noise variables $(z_k)_{k = 0}^N$ are assumed to be \emph{a priori} independent and standard normally distributed and $x_0 = z_0 / \sqrt{1 - \rho^2}$. In our experiment, $N + 1 = 100$ synthetic data points $(y_k)_{k = 0}^N$ were generated by simulation under model parameters and noise variables drawn from the corresponding priors. In addition, the reparameterizations 
\begin{equation*}
    \alpha = -\log(\tau)/2 \eqsp, \qquad \beta = (\log(1+\rho)-\log(1-\rho))/2.
\end{equation*}
of $\tau$ and $\rho$ provide parameters that are well supported on $\rset$. We then obtain a target density on $\rset^{N+3}$ given by 
\begin{multline*} 
-\log \pi(\alpha,\beta, (z_k)_{k=0}^N \mid (y_k)_{k=0}^N) = 42 \alpha + 5\rme^{-2\alpha}+22\log(1+\rme^{-2\beta})+4\beta+(N+1) \alpha \\
+ 2^{-1} \sum_{k=0}^{N} x_k+2^{-1}\sum_{k=0}^{N} (w_k+z_k^2)+\mathrm{Cst}_{\mathrm{SV}},
\end{multline*}
where $w_k \eqdef \exp(-x_k-2\alpha)y_k^2$ and $\mathrm{Cst}_{\mathrm{SV}}$ is some additive constant independent of the parameters of interest; see \cite[Section~G.3]{Ludkin2019HugAH} for details. On this model, we compare the KKT sampler to HMC. For HMC, we consider $N_{\HMC}= 35$ iterations for the Verlet integrator and adjust the time step $\Delta_t$ to obtain an acceptance rate of about $0.7$ on the average. The very same HMC kernel is used as base kernel $P$ in the KKT sampler. As teleportation kernel we use $Q = \Qrwm_\upsigma$, with $\upsigma$ selected to obtain an acceptance rate of about $0.25$. Finally, the region $\Cset$ is given by 
\begin{equation*}
    \Cset=\defEns{(\alpha,\beta, (z_k)_{k=0}^N) \in \rset^{N+3} : -\log \pi(\alpha,\beta, (z_k)_{k=0}^N \mid (y_k)_{k=0}^N) -\mathrm{Cst}_{\mathrm{SV}}>75}.
\end{equation*}
In this setting, we run both MCMC algorithms for $10^5$ iterations after discarding a burn-in period comprising $10^5$ steps. For the KKT sampler, the percentage of samples generated by the teleportation kernel is $63\%$. \Cref{fig:ESS_Stochastic_Volatility} reports the resulting \emph{effective sample sizes} (ESSs) (see \cite[Section 12.3.5]{Robert}) per evaluation of $\log \pi$ and $\nabla \log \pi$ at each iteration for each component of the resulting MCMC trajectories. As clear from \Cref{fig:ESS_Stochastic_Volatility}, the KKT sampler outperforms significantly HMC in terms of this measure.

\begin{table}[]
\begin{tabular}{|l|l|l|llll|}
\hline
\multicolumn{1}{|c|}{\multirow{2}{*}{Algorithm}} & \multirow{2}{*}{$\mathrm{ESS}(\alpha)$} & \multirow{2}{*}{$\mathrm{ESS}(\beta)$} & \multicolumn{4}{c|}{$(\mathrm{ESS}(z_k))_{k=0}^N$}                                                  \\ \cline{4-7} 
\multicolumn{1}{|c|}{}                           &                           &                          & \multicolumn{1}{l|}{mean} & \multicolumn{1}{l|}{variance} & \multicolumn{1}{l|}{min}  & max  \\ \hline
KKT sampler                                      & 4.84                      & 1.57                     & \multicolumn{1}{l|}{0.86} & \multicolumn{1}{l|}{0.10}     & \multicolumn{1}{l|}{0.33} & 1.89 \\ \hline
HMC                                              & 0.23                      & 0.17                     & \multicolumn{1}{l|}{0.22} & \multicolumn{1}{l|}{0.004}    & \multicolumn{1}{l|}{0.15} & 0.46 \\ \hline
\end{tabular}
\caption{Summary statistics of ESSs calculated on the basis of MCMC paths generated by the MALA and the KKT sampler for the stochastic volatility model in \Cref{sec:Stochastic:Volatility}. The values are based on $10^5$ iterations of each algorithm.}
\label{fig:ESS_Stochastic_Volatility}
\end{table}

\subsection{The Ginzburg--Landau model}
\label{sec:Ginzburg:Landau:model}

In our final numerical experiment, we consider the \emph{Ginzburg--Landau model} used for describing phase transitions in condensed matter physics; see \cite[Section~6.2]{livingstone2019kinetic}. Let $p \in \nset$ and $d = p^3$ and define $\pi$ on a three-dimensional lattice by, for $x=(x_{ijk})_{(i,j,k) \in \{1, \ldots, p\}^3} 
\in \rset^d$, 
\begin{equation*}
    - \log  \pi(x)= 2^{-1}\sum_{i,j,k=1}^p\parenthese{(1-\tau)x_{ijk}^2+\tau\alpha \| \tilde{\nabla} x_{ijk} \|^2+\tau \lambda x_{ijk}^4/2}+ \mathrm{Cst}_{\mathrm{GL}},
\end{equation*}
where $\mathrm{Cst}_{\mathrm{GL}}$ is the logarithm of the normalizing constant and $\tau$, $\lambda$ and $\alpha$ are all positive model parameters. In addition, $\tilde{\nabla} x_{ijk} \eqdef (x_{i_+jk} - x_{ijk}, x_{ij_+k} - x_{ijk}, x_{ijk_+} - x_{ijk})$, where $i_+ \eqdef i+1 \mod p$ (and similarly for $j_+$ and $k_+$). In our simulations, the target is parameterized by $p=5$, $\tau=2$, $\lambda=0.5$ and $\alpha=0.1$. In this example, we compare the MALA, operating with the step size $\gamma = 10^{-3}$, with the KKT sampler using the base kernel $P=\Qmala_{\gamma}$ with $\gamma = 0.1$, the set 
\begin{equation} \label{eq:C:Ginzburg:Landau}
  \msc = \left\{ x \in \rset^d : -\log(\pi(x)) - \mathrm{Cst}_{\mathrm{GL}}>100 \right\}
\end{equation}
and the teleportation kernel $Q = \Qrwm_{\upsigma}$ with $\upsigma = 0.1$. Note that the step size $\gamma$ used in the MALA is set to be quite small in comparison to that used in the KKT sampler; otherwise, the acceptance probability would degenerate to zero, resulting in a stuck algorithm. This is mainly due to the fact that $\nabla \log \pi$ is not Lipschitz in this case. In contrast, the teleportation algorithm does not suffer from this issue if the set $\Cset$ is tuned appropriately. Indeed, with the choice \eqref{eq:C:Ginzburg:Landau} of $\Cset$, the MALA is used only in the neighborhoods of the modes of $\log \pi$, where $\nabla \log \pi$ is Lipschitz with a relatively small Lipschitz constant, while instead RWM---incorporating no gradient information---is used outside these regions. 

In \Cref{fig:Ginzburg}, we display the  ESSs 
of the trajectories produced by the two methods, divided by the number of evaluations of either $\log \pi$ or $\nabla \log \pi$ per iteration. The values are calculated on the basis of $10^5$ iterations (after discarding burn-in periods comprising $10^5$ iterations). Evidently, the KKT sampler outperforms clearly MALA, with an improvement of at least an order of magnitude in terms of average ESS.

\begin{table}[]
\begin{tabular}{|l|c|c|c|c|}
  \hline
  Algorithm & mean & variance & min & max\\
  \hline
  KKT sampler & 908 & 5438 & 727 & 1091\\
    \hline
  MALA & 34 & 83 & 12 & 57\\
  \hline
\end{tabular}
\caption{Summary statistics of ESSs calculated on the basis of MCMC paths generated by the MALA and the KKT sampler for the Ginzburg--Landau model in \Cref{sec:Ginzburg:Landau:model}. The values are based on $10^5$ iterations of each algorithm.}
\label{fig:Ginzburg}
\end{table}


\section*{Acknowledgments}

The work of J.~Olsson is supported by the Swedish Research Council, Grant~2018-05230. 

\appendix

\section{Kernel notation}
\label{sec:notation}

The following kernel notation will be used at several places in the paper. 
Let $(\Eset_1, \Esigma_1)$ and $(\Eset_2, \Esigma_2)$ be general measurable spaces. A possibly unnormalized transition kernel $K$ on $\Eset_1 \times \Esigma_2$ induces the following three operations, one on $\bddfunc{\Esigma_2}$ and two on $\measureset(\Esigma_1)$: 
\begin{itemize}
\item For $h \in \bddfunc{\Esigma_2}$ we define the function $Kh : \Eset_1 \ni x \mapsto \int h(y) \, K(x, \rmd y)$ (whenever the  integral is well defined). 
\item For $\mu \in \measureset(\Esigma_1)$ we define the measure $\mu K: \Esigma_2 \ni \Aset \mapsto \int \mu(\rmd x) \, K(x, \Aset)$. 
\item For $\mu \in \measureset(\Esigma_1)$ we define the measure $\mu \tensprod K: \Esigma_1 \tensprod \Esigma_2 \ni \Aset \mapsto \int \! \int _\Aset\mu(\rmd x) \, K(x, \rmd y)$. 
\end{itemize}
Now let $(\Eset_3, \Esigma_3)$ be a third measurable space and $L$ a transition kernel on $\Eset_2 \times \Esigma_3$. Then we define the following products between $K$ and $L$, the first resulting in a kernel on $\Eset_1 \times \Esigma_3$ and the second in a kernel on $\Eset_1 \times (\Esigma_2 \tensprod \Esigma_3)$:
\begin{itemize}
\item $K L : \Eset_1 \times \Esigma_3 \ni (x, \Aset) \mapsto \int K(x, \rmd y) \, L(y, \Aset)$.
\item $K \tensprod L : \Eset_1 \times (\Esigma_2 \tensprod \Esigma_3) \ni (x, \Aset) \mapsto \int \! \int_\Aset K(x, \rmd y) \, L(y, \rmd z)$. 
\end{itemize}


\section{Proofs and technical lemmas}
\label{sec:appendix}
\subsection{Proof of \Cref{prop:nonrev}}
\label{sec:S:nonRev}

First, note that for all $\Aset_0$ and $\Aset_1$ in $\Esigma$, by the definition of $S$,
\begin{align} \nonumber 
    &[\pi \tensprod S](\Aset_0 \times \Aset_1) \\
    &\qquad\qquad=\int_{\Aset_0} \, \pi(\rmd y) P(y, \Cset^c \cap \Aset_1)+\int_{\Aset_0} \pi(\rmd y) \, P(y,\Cset) \pi_\Cset(\Aset_1) \nonumber \\
    &\qquad\qquad= [\pi \tensprod P]((\Cset^c \cap \Aset_0) \times (\Cset^c \cap \Aset_1))+[\pi \tensprod P]((\Cset \cap \Aset_0) \times (\Cset^c \cap \Aset_1)) \nonumber \\      
    &\qquad\qquad\qquad + [\pi \tensprod P]((\Cset^c \cap \Aset_0) \times \Cset) \pi_{\Cset}(\Aset_1) + [\pi \tensprod P]((\Cset \cap \Aset_0) \times \Cset) \pi_{\Cset}(\Aset_1). \label{eq:proof_S_reversible_1}
\end{align}
We first prove that \ref{item:nonrev:a} and \ref{item:nonrev:b} jointly imply that $S$ is $\pi$-reversible. To do so, we rewrite each term on the right-hand side of \eqref{eq:proof_S_reversible_1} to show that $[\pi \tensprod S](\Aset_0 \times \Aset_1) = [\pi \tensprod S](\Aset_1 \times \Aset_0)$. We start with the second and the last terms. By \ref{item:nonrev:b}, 
\begin{equation} \label{eq:proof_S_reversible_3}
   [\pi \tensprod P]((\Cset \cap \Aset_0) \times (\Cset^c \cap \Aset_1)) = \pi_{\Cset}(\Aset_0) \pi(\Cset) \mu(\Cset^c \cap \Aset_1)\eqsp,
\end{equation}
and 
\begin{equation} \label{eq:proof_S_reversible_4}
    [\pi \tensprod P]((\Cset \cap \Aset_0) \times \Cset) = \int_{\Cset \cap \Aset_0} \pi(\rmd y) \, (1 - P(y, \Cset^c)) = (1 - \mu(\Cset^c)) \pi(\Cset) \pi_{\Cset}(\Aset_0) \eqsp.
\end{equation}
We turn to the third term on the right-hand side of \eqref{eq:proof_S_reversible_1}. Under \ref{ass:P:inv}, \ref{item:nonrev:a} and \ref{item:nonrev:b} imply
\begin{align} \nonumber
    [\pi \tensprod P]((\Cset^c\cap\Aset_0) \times \Cset)&=\pi(\Cset^c\cap\Aset_0)-[\pi \tensprod P]((\Cset^c\cap\Aset_0) \times \Cset^c) \nonumber \\
    &= [\pi\tensprod P](\Eset \times (\Cset^c\cap\Aset_0)) - [\pi \tensprod P](\Cset^c\times(\Cset^c\cap\Aset_0)) \nonumber \\
    &= [\pi \tensprod P](\Cset\times(\Cset^c\cap\Aset_0)) \nonumber \\
    &= \pi(\Cset )\mu(\Cset^c\cap\Aset_0)\eqsp. \label{eq:proof_S_reversible_2}
\end{align}
Combining \eqref{eq:proof_S_reversible_1}--\eqref{eq:proof_S_reversible_2} yields 
\begin{align*}
    [\pi \tensprod S](\Aset_0 \times \Aset_1)&=[\pi\tensprod P]((\Cset^c\cap \Aset_0) \times (\Cset^c \cap \Aset_1))+\pi_{\Cset}(\Aset_0)\pi(\Cset)\mu(\Cset^c \cap \Aset_1) \\
    &\quad + \pi_{\Cset}(\Aset_1)\pi(\Cset)\mu(\Cset^c\cap\Aset_0)+(1-\mu(\Cset^c))\pi(\Cset)\pi_{\Cset}(\Aset_0)\pi_{\Cset}(\Aset_1)\eqsp, 
\end{align*}
and by \ref{item:nonrev:a} this expression is symmetric in $\Aset_0$ and $\Aset_1$. Thus, $[\pi \tensprod S](\Aset_0 \times \Aset_1) = [\pi \tensprod S](\Aset_1 \times \Aset_0)$, which implies that $S$ is $\pi$-reversible.
  
We now establish the converse. If $S$ is $\pi$-reversible, then \ref{item:nonrev:a} holds trivially true (as $\mce_{\Cset^c}^{\tensprod 2} \subset \mce^{\tensprod 2}$). We show  that \ref{item:nonrev:b} holds as well. By \eqref{eq:proof_S_reversible_1}, and using that $S$ is $\pi$-reversible, we have for every $(\Aset, \Bset) \in \Esigma_{\Cset} \times \Esigma_{\Cset^c}$,
\begin{equation*}
    \int_{\Aset} \pi(\rmd y) \, P(y,\Bset) = [\pi \tensprod S](\Aset\times \Bset) = [\pi \tensprod S](\Bset\times \Aset) = \pi_{\Cset}(\Aset) \int_{\Bset} \pi(\rmd y) \, P(y,\Cset)\eqsp. 
\end{equation*}
Therefore, 
\begin{equation} \label{eq:proof_S_reversible_5}
    \int_{\Aset} \pi(\rmd y) \, (P(y,\Bset) - \mu(\Bset)) = 0\eqsp,
\end{equation}
where we have defined the measure $\mu : \Esigma_{\Cset^c} \ni \Bset' \mapsto \int_{\Bset'} \pi(\rmd y) \, P(y,\Cset) / \pi(\Cset)$. Since $\Esigma_{\Cset^c}$ is countably generated there exists a $\pi$-system $\seq{\Bset_k}{k \in \nset}$, with $\Bset_k \in \Esigma_{\Cset^c}$ for all $k$, such that $\Esigma_{\Cset^c} = \sigma(\seq{\Bset_k}{k \in \nset})$ and $\Bset_0 = \Cset^c$. Then, by using, for every $k \in \nset$, \eqref{eq:proof_S_reversible_5} with $\Bset=\Bset_k$ and $\Aset = \{P(\cdot, \Bset_k) > \mu(\Bset_k)\}$ as well as $\Aset = \{P(\cdot, \Bset_k) < \mu(\Bset_k)\}$, we conclude that there exists $\msx_k \in \mce_\Cset$ such that $\pi(\msx_k^c)=0$ and for every $y \in \msx_k$, $P(y,\Bset_k) = \mu(\Bset_k)$. Now, let $\msx \eqdef \cap_{k \in\nset} \msx_k$; then $\pi(\msx^c)=0$ and for every $y \in \msx$ and $k \in \nset$, $P(y,\Bset_k) = \mu(\Bset_k)$. Therefore, by Dynkin's $\pi$--$\lambda$ theorem, $P(y,\cdot)|_{\Cset^c} = \mu$ for $\pi$-almost all $y \in \Cset$, which completes the proof.
  
\subsection{Proof of \Cref{propo:harmonic_LLN}}
\label{sec:harmonic}
 First, assume that \ref{propo:harmonic_LLN_ii} holds. Then by \cite[Theorem 5.1.8]{douc:moulines:priouret:soulier:2018}, for every measurable function $g: \Eset \to \rset$ such that $\pi(\abs{g}) < \plusinfty$,
\begin{equation} \label{eq:techn:lgn:zero}
    \lim_{n \to \plusinfty} n^{-1} \sum_{k = 0}^{n - 1} g(X_k) = \PE^P_\pi[g(X_0) \mid \invar], \quad \PP^P_{\pi}\mbox{-a.s.},
\end{equation}
where $\invar \eqdef \set{\Aset \in \Esigma^{\tensprod\nset}}{\1_{\Aset}=\1_{\Aset}\circ \shift}$ is the $\sigma$-field of invariant sets. Under \ref{propo:harmonic_LLN_ii}, \cite[Corollary 5.2.4]{douc:moulines:priouret:soulier:2018} implies that the invariant random variable $\PE_\pi^P[g(X_0) \mid \invar]$ is $\PP_\pi^P$-a.s. constant. As a consequence, $\PP_\pi^P$-a.s.,
\begin{equation*}
    \PE^P_\pi[g(X_0) \mid \invar] = \PE^P_\pi \lrb{\PE^P_\pi[g(X_0) \mid \invar]} = \PE^P_\pi[g(X_0)] =\pi(g). 
\end{equation*}
Plugging this into \eqref{eq:techn:lgn:zero} yields $\PP^P_\pi(\Aset)=1$, where 
\begin{equation*}
    \Aset \eqdef \defEns{\lim_{n\to \plusinfty} n^{-1} \sum_{k=0}^{n-1} g(X_k) = \pi(g)}.
\end{equation*}
Since $\Aset$ is invariant, \ie, $\1_{\msa} = \1_{\msa} \circ \theta$, the function $h: \Eset \ni x\mapsto \PP_x^P(\Aset)$ is bounded and harmonic, and by \ref{propo:harmonic_LLN_ii} it is hence equal to a constant $\zeta$. Then, $1=\PP^P_\pi(\Aset)=\int \pi(\rmd x) \, \PP_x^P(\Aset)=\zeta$ and we conclude that $\PP_\xi^P(\Aset)=\int \xi(\rmd x) \, \PP_x^P(\Aset)=\zeta=1$. Thus, \ref{propo:harmonic_LLN_i} holds true. 
    
Conversely, assume \ref{propo:harmonic_LLN_i} and let $h$ be a bounded and harmonic function.  Applying \ref{propo:harmonic_LLN_i} with $g=h$ yields that for every $x \in \Eset$,
\begin{equation*}
    \lim_{n\to \plusinfty}  n^{-1} \sum_{k=0}^{n-1} h(X_k) = \pi(h), \quad \PP^P_{x}\mbox{-a.s.} 
\end{equation*}
In addition, since by \cite[Proposition~5.2.2(ii)]{douc:moulines:priouret:soulier:2018}, for any $x \in\mse$, $\seq{h(X_k)}{k \in \nset}$ converges $\PP^P_x$-a.s. to some random variable $Y=\limsup_{k \to \plusinfty} h(X_k)$ such that $h(x) = \PE^P_x[Y]$, we finally conclude that $Y=\pi(h)$, $\PP^P_{x}$-a.s., and $h(x) = \PE^{P}_x[Y] = \pi(h)$ for any $x \in \Eset$. This shows \ref{propo:harmonic_LLN_ii}.  
    
\subsection{Technical results for proving \Cref{thm:LLN_Markov} and \Cref{thm:ergo:geom}}

\begin{lemma} \label{lem:tech}
Set $\bCset=\Cset \times \Cset$. Then for every $(y, z) \in  \Eset \times \Cset$, $n \in \nset$ and $v \in \posfunc{\Esigma}$,
\begin{equation} \label{eq:induction}
\PEE{R}{(y, z)}[\indin{\rti_{\bCset} > n} v(Y_n)] = \PEE{P}{y}[\indin{\rti_{\Cset} > n} v(X_n)]\eqsp.
\end{equation}
If, in addition, $\inf_{y \in\mse}\PPP{P}{y}(\rti_\Cset<\plusinfty)=1$, then the following properties hold true. 
\begin{enumerate}[(i)]
\item \label{item:tech:one} For every $(y, z) \in \Eset \times \Cset$ and $\ell \in \nsetpos$,
$
\PPP{R}{(y,z)}(\rti^\ell_{\bCset}<\plusinfty)=1.
$
\item \label{item:tech:two} For every $(g, h) \in \posfunc{\nset} \times \posfunc{\Csigma}$ and $\Xinit \in \measureset_1(\Esigma \tensprod \Csigma)$,
\begin{equation}
\PEE{R}{\Xinit}[g(\rti_\bCset)h(Z_{\rti_\bCset})] = \int
\Xinit(\rmd (y, z)) \, \PEE{P}{y}[g(\rti_{ \Cset})] Qh(z) \label{eq:tech:one} 
\end{equation}
and for every $\ell \in \nsetpos$,
\begin{equation}
\PEE{R}{\Xinit}\left[ g(\rti_\bCset \circ{\shift_{\rti^{\ell-1}_\bCset}})h(Z_{\rti^{\ell}_\bCset}) \mid \mcf_{\rti^{\ell-1}_{\bCset}} \right] = \PEE{P}{Y_{\rti^{\ell-1}_\Cset}}[g(\rti_{\Cset})]Qh(Z_{\rti^{\ell-1}_{\bCset}})\eqsp, \quad \PPP{R}{\Xinit}\mbox{-a.s.}\label{eq:tech:two}
\end{equation}
\item \label{item:tech:three} For all $k \in \nsetpos$, $(g_i)_{i=1}^{k}$ in $\posfunc{\nset}$, $(h_i)_{i=1}^{k}$ in  $\posfunc{\Csigma}$ and $\Xinit \in \measureset_1(\Esigma \tensprod\Csigma)$,
\begin{multline} \label{eq:tech:three}
\PEE{R}{\Xinit}\lrb{\prod_{\ell=1}^k g_\ell(\rti^\ell_\bCset-\rti^{\ell-1}_\bCset ) h_\ell(Z_{\rti^\ell_\bCset})} \\
= \int \Xinit(\rmd (y, z)) \, \PEE{P}{y}[g_1(\rti_{\Cset})]\PEE{Q}{z}\lrb{h_k(X_{k})\prod_{\ell=1}^{k-1} h_\ell(X_{\ell}) \PEE{P}{X_\ell}[g_{\ell+1}(\rti_\Cset)]}\eqsp,
\end{multline}
where, by convention, $\rti^0_\bCset \eqdef 0$.
\item \label{eq:tech:four} Let $(y, z) \in \Eset\times \Cset$; then under $\PP_{(y, z)}^R$, $\seq{Y_{\rti_\bCset^k}}{k \in \nsetpos}$  is a Markov chain with transition kernel $Q$ and initial distribution $\delta_z$. 
\end{enumerate}
\end{lemma}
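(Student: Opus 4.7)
The plan is to proceed from \eqref{eq:induction} up to \ref{eq:tech:four} in order, exploiting the fact that, on the event $\{\rti_{\bCset}>n\}$, no teleportation has occurred, so $Y_k=Y_k^\star$ for $k\le n$, $Z_k=z$ throughout, and the $Y$-marginal evolves exactly as a $P$-chain. I would establish \eqref{eq:induction} by induction on $n$: for $n=0$ both sides reduce to $v(y)$ since $\rti_{\bCset},\rti_{\Cset}\ge 1$; for the inductive step I condition on the first step. On $\{\rti_{\bCset}>n+1\}$ one has $Y_1=Y_1^\star\notin\Cset$ and $Z_1=z$, so using the defining decomposition of $R$ and the Markov property at time $1$,
\[
\PEE{R}{(y,z)}[\indin{\rti_{\bCset}>n+1}v(Y_{n+1})]
=\int_{\Cset^c}P(y,\rmd y')\,\PEE{R}{(y',z)}[\indin{\rti_{\bCset}>n}v(Y_n)],
\]
which by the induction hypothesis equals $\PEE{P}{y}[\indin{\rti_{\Cset}>n+1}v(X_{n+1})]$.

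For \ref{item:tech:one} I would set $v\equiv 1$ in \eqref{eq:induction} and let $n\to\infty$, obtaining $\PPP{R}{(y,z)}(\rti_{\bCset}=\infty)=\PPP{P}{y}(\rti_{\Cset}=\infty)=0$ for every $(y,z)\in\Eset\times\Cset$; the general case $\ell>1$ follows by induction on $\ell$ via the strong Markov property at $\rti_{\bCset}^{\ell-1}$ (finite by induction), since the $\ell=1$ statement applies from the new starting point $(Y_{\rti_{\bCset}^{\ell-1}},Z_{\rti_{\bCset}^{\ell-1}})\in\bCset$.

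For \ref{item:tech:two}, I would decompose $\{\rti_{\bCset}=n\}=\{\rti_{\bCset}>n-1\}\cap\{Y_n\in\Cset\}$ and use the Markov property at time $n-1$. On $\{\rti_{\bCset}>n-1\}$, $Z_{n-1}=z$ and $Y_{n-1}\notin\Cset$, and under one $R$-step from this state the event $\{Y_n\in\Cset\}$ has conditional probability $P(Y_{n-1},\Cset)$, while conditionally on this event $Z_n\sim Q(z,\cdot)$ independently of $Y_{n-1}$. Hence
\[
\PEE{R}{(y,z)}[\indin{\rti_{\bCset}=n}h(Z_n)]
=\PEE{R}{(y,z)}[\indin{\rti_{\bCset}>n-1}P(Y_{n-1},\Cset)]\,Qh(z),
\]
and applying \eqref{eq:induction} with $v=P(\cdot,\Cset)$ turns the right-hand side into $\PPP{P}{y}(\rti_{\Cset}=n)\,Qh(z)$; multiplying by $g(n)$, summing over $n$, and averaging over $\Xinit$ yields \eqref{eq:tech:one}. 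The generalization \eqref{eq:tech:two} is then obtained by the strong Markov property of $R$ at $\rti_{\bCset}^{\ell-1}$ (finite by \ref{item:tech:one}) applied to the bounded measurable functional $(y',z')\mapsto\PEE{R}{(y',z')}[g(\rti_{\bCset})h(Z_{\rti_{\bCset}})]=\PEE{P}{y'}[g(\rti_{\Cset})]Qh(z')$ together with the identity $Z_{\rti_{\bCset}^{\ell-1}}\in\Cset$.

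Finally, \ref{item:tech:three} follows from \eqref{eq:tech:two} by induction on $k$: conditioning on $\mcf_{\rti_{\bCset}^{k-1}}$ and applying \eqref{eq:tech:two} with $g=g_k$, $h=h_k$ peels off the last factor as $\PEE{P}{Y_{\rti_{\bCset}^{k-1}}}[g_k(\rti_{\Cset})]\,Qh_k(Z_{\rti_{\bCset}^{k-1}})$, and the induction hypothesis disposes of the remaining product. Property \ref{eq:tech:four} is then the special case where $g_\ell\equiv 1$: \eqref{eq:tech:three} shows that, under $\PPP{R}{(y,z)}$, the finite-dimensional distributions of $(Z_{\rti_{\bCset}^k})_{k\ge 1}$ coincide with those of a $Q$-chain started at $z$, and since $Y_{\rti_{\bCset}^k}=Z_{\rti_{\bCset}^k}$ at each teleportation time, the same holds for $(Y_{\rti_{\bCset}^k})_{k\ge 1}$. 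The main subtlety lies in \eqref{eq:tech:one}: one must carefully verify that $Z_{\rti_{\bCset}}$ is drawn from $Q(z,\cdot)$ independently of $\rti_{\bCset}$, which is precisely what the conditioning-and-Markov argument above delivers.
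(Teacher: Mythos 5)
Your proposal is correct and follows essentially the same route as the paper's proof: induction on $n$ for \eqref{eq:induction}, the limit $v\equiv 1$, $n\to\infty$ plus the strong Markov property for \ref{item:tech:one}, the decomposition $\{\rti_{\bCset}=n\}=\{\rti_{\bCset}>n-1\}\cap\{Y_n\in\Cset\}$ combined with \eqref{eq:induction} applied to $v=P(\cdot,\Cset)$ for \ref{item:tech:two}, and induction on $k$ via the tower property for \ref{item:tech:three}--\ref{eq:tech:four}. The only (immaterial) difference is that you run the induction for \eqref{eq:induction} by conditioning on the first step, whereas the paper conditions on the last step and applies the inductive hypothesis to the function $P(\indi{\Cset^c}v)$.
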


\begin{proof}
We establish \eqref{eq:induction} by induction over $n$. The base case $n = 0$ holds trivially true. Assuming that \eqref{eq:induction} holds true for some $n \in \nset$, the Markov property implies that for every $(y,z) \in\mse\times \msc$,
\begin{align*}
    \PEE{R}{(y,z)}[\indin{\rti_{\bCset} > n+1} v(Y_{n+1})]&= \PEE{R}{(y,z)}[\indin{\rti_{\bCset}>n} \indin{Y_{n+1} \notin \Cset} v(Y_{n+1})] \\
    &= \PEE{R}{(y,z)}[\indin{\rti_{\bCset} > n} P(\indi{\Cset^c}v)(Y_{n})] \\
    &= \PEE{P}{y}[\indin{\rti_{\Cset} > n} P(\indi{\Cset^c}v)(X_{n})] \\ 
    &= \PEE{P}{y}[\indin{\rti_{\Cset} > n+1} v(X_{n+1})]\eqsp,
\end{align*}
which establishes \eqref{eq:induction} with $n$ replaced by $n+1$ and hence completes the induction step. 

We now establish (i)--(iv) in turn.  

\begin{enumerate}[(i)]
\item By setting $v \equiv 1$ in \eqref{eq:induction} and letting $n\to \infty$, we get that for every $(y, z) \in \Eset \times \Cset$, $\PPP{R}{(y,z)}(\rti_{\bCset} = \plusinfty) = \PPP{P}{y}(\rti_{\Cset} =\plusinfty)$. Therefore, by assumption, for every $(y,z)\in\mse\times \msc$,
$$
    \PPP{R}{(y,z)}(\rti_{\bCset}<\plusinfty)=\PPP{P}{y}(\rti_{\Cset}<\plusinfty)=1\eqsp,
$$
from which it follows that $\PPP{R}{(y,z)}(\rti^\ell_{\bCset}<\plusinfty) = 1$ for every $\ell \in \nsetpos$. 
\item We first show \eqref{eq:tech:one}. Using the Markov property and the definition \eqref{eq:def_R_Kac} of $R$, for every $(y,z) \in \Eset \times \Cset$,
\begin{align*}
    \PEE{R}{(y,z)}[\indin{\rti_\bCset=n} h(Z_n)]&=\PEE{R}{(y,z)}[\indin{\rti_\bCset>n-1}\indin{Y_n \in \Cset} h(Z_n)]\\
    &= \PEE{R}{(y,z)}[\indin{\rti_\bCset > n-1}P(Y_{n-1},\Cset) Qh(Z_{n-1})]\\
    &= \PEE{R}{(y,z)}[\indin{\rti_\bCset > n-1}P(Y_{n-1},\Cset)] Qh(z)\\
    &=\PEE{P}{y}[\indin{\rti_\Cset > n-1}P(X_{n-1},\Cset)] Qh(z) \\
    &=\PPP{P}{y}(\rti_\Cset=n) Qh(z)\eqsp,
\end{align*}
where the penultimate equality follows from \eqref{eq:induction}. Then, using \ref{item:tech:one} and the previous identity, 
\begin{align*}
 \PEE{R}{(y,z)}[g(\rti_\bCset)h(Z_{\rti_\bCset})] &=\sum_{n=1}^\plusinfty g(n) \PEE{R}{(y,z)}[\indin{\rti_\bCset=n} h(Z_n)] \\
 & =\sum_{n=1}^\plusinfty g(n) \PPP{P}{y}(\rti_\Cset=n) Qh(z) \\
 &= \PEE{P}{y}[g(\rti_{\Cset})]Qh(z)\eqsp.
\end{align*}
Integrating with respect to  $\Xinit$ establishes \eqref{eq:tech:one}. 

We turn to \eqref{eq:tech:two}. Since $Z_{\rti^{\ell}_{\bCset}} = Z_{\rti_\bCset} \circ \shift_{\rti^{\ell-1}_{\bCset}}$ on $\{\rti^{\ell-1}_{\bCset}<\plusinfty\}$, combining the strong Markov property, \ref{item:tech:one} and \eqref{eq:tech:one} yields, $\PPP{R}{\Xinit}$-\as,
\begin{align*}
\PEE{R}{\Xinit} \left[ g(\rti_\bCset \circ{\shift_{\rti^{\ell-1}_{\bCset}}}) h(Z_{\rti^{\ell}_{\bCset}}) \mid \mcf_{\rti^{\ell-1}_{\bCset}} \right] &=
\PEE{R}{\Xinit} \left[ g(\rti_\bCset \circ{\shift_{\rti^{\ell-1}_{\bCset}}})h(Z_{\rti_\bCset} \circ \shift_{\rti^{\ell-1}_{\bCset}}) \mid \mcf_{\rti^{\ell-1}_{\bCset}} \right] \\
&= \PEE{R}{(Y_{\rti^{\ell-1}_{\bCset}},Z_{\rti^{\ell-1}_{\bCset}})}[g(\rti_\bCset) h(Z_{\rti_\bCset})]\\
&= \PEE{P}{Y_{\rti^{\ell-1}_{\bCset}}}[g(\rti_\Cset)]Qh(Z_{\rti^{\ell-1}_{\bCset}})\eqsp,
\end{align*}
where the last equality follows from  \eqref{eq:tech:one}. This completes the proof \eqref{eq:tech:two}.
\item We proceed by induction. The base case $k=1$ holds by \eqref{eq:tech:one}. In order to carry through the induction step, assume that \eqref{eq:tech:three} holds for some $k \in \nsetpos$. Then, for all $(g_i)_{i=1}^{k+1}$ in $\posfunc{\nset}$,  $(h_i)_{i=1}^{k+1}$ in $\posfunc{\Csigma}$ and $\Xinit \in \measureset_{1}(\Esigma \tensprod\Csigma)$, using the tower property, \eqref{eq:tech:two} and the fact that $Y_{\rti^{\ell}_{\bCset}}=Z_{\rti^{\ell}_{\bCset}}$, $\PPP{R}{\Xinit}$-a.s., for all $\ell \in \nsetpos$, we obtain
\begin{align*}
     \lefteqn{\PEE{R}{\Xinit}\lrb{\prod_{\ell=1}^{k+1} g_\ell(\rti^\ell_\bCset-\rti^{\ell-1}_\bCset ) h_\ell(Z_{\rti^\ell_\bCset})}}\\
     &= \PEE{R}{\Xinit}\lrb{\PEE{R}{\Xinit}[g_{k+1}(\rti_\bCset \circ{\shift_{\rti^{k}_\bCset}})h_{k+1}(Z_{\rti^{k+1}_\bCset}) \mid \mcf_{\rti^{k}_{\bCset}}]\prod_{\ell=1}^k g_\ell(\rti^\ell_\bCset-\rti^{\ell-1}_\bCset ) h_\ell(Z_{\rti^\ell_\bCset})} \nonumber \\
&= \PEE{R}{\Xinit}\lrb{\PEE{P}{Z_{\rti^{k}_\bCset}}[g_{k+1}(\rti_{\Cset})]Qh_{k+1}(Z_{\rti^{k}_{\bCset}})\prod_{\ell=1}^k g_\ell(\rti^\ell_\bCset-\rti^{\ell-1}_\bCset ) h_\ell(Z_{\rti^\ell_\bCset})}\eqsp. 
\end{align*}
The expectation on the right-hand side is over a product of nonnegative functions of $Z_{\rti^{\ell}_\bCset}$ and $\rti^\ell_\bCset-\rti^{\ell-1}_\bCset$ for $\ell \in \{1,\ldots,k\}$. Therefore, using the induction hypothesis,
\begin{align*}
    \lefteqn{\PEE{R}{\Xinit}\lrb{\prod_{\ell=1}^{k+1} g_\ell(\rti^\ell_\bCset-\rti^{\ell-1}_\bCset ) h_\ell(Z_{\rti^\ell_\bCset})}} \\
    &= \int \Xinit(\rmd (y, z)) \, \PEE{P}{y}[g_1(\rti_{\Cset})] \\ 
    &\hspace{16mm} \times \PEE{Q}{z} \lrb{\PEE{P}{X_{k}}[g_{k+1}(\rti_{\Cset})]Qh_{k+1}(X_k) h_k(X_k) \prod_{\ell=1}^{k-1} h_\ell(X_{\ell}) \PEE{P}{X_\ell}[g_{\ell+1}(\rti_\Cset)]}\\
    &= \int \Xinit(\rmd (y, z)) \, \PEE{P}{y}[g_1(\rti_{\Cset})] \PEE{Q}{z} \lrb{h_{k+1}(X_{k+1})\prod_{\ell=1}^k h_\ell(X_{\ell}) \PEE{P}{X_\ell}[g_{\ell+1}(\rti_\Cset)]}\eqsp,  
\end{align*}
which means that \eqref{eq:tech:three} holds true with $k$ replaced by $k + 1$. Thus, the claim follows by induction. 
\item By applying \eqref{eq:tech:three} with $g_1 \equiv \ldots \equiv g_k \equiv \mathbf{1}$, we conclude that for all $k \in \nsetpos$, $(h_i)_{i=1}^{k+1}$ in $\posfunc{\Csigma}$ and $\Xinit \in \measureset_{1}(\Esigma \tensprod\Csigma)$,
\begin{equation*}
  \PEE{R}{\Xinit} \lrb{\prod_{\ell=1}^k h_\ell(Z_{\rti^\ell_\bCset})} = \int \Xinit(\rmd (y, z)) \, \PEE{Q}{z}\lrb{\prod_{\ell=1}^{k} h_\ell(X_{\ell})}\eqsp. 
\end{equation*}
This shows that for every $(y,z)\in\mse\times \msc$, under $\PP^R_{(y,z)}$, $\seq{Z_{\rti^\ell_\Cset}}{\ell \in \nsetpos}$ is a Markov chain with transition kernel $Q$ and initial distribution $\delta_{(y,z)}$. We may now complete the proof by noting that $Y_{\rti^{\ell}_{\bCset}} = Z_{\rti^{\ell}_{\bCset}}$, $\PPP{R}{\Xinit}$-a.s., for all $\ell \in \nsetpos$.
\end{enumerate}
\end{proof}

\begin{lemma}  \label{lem:smallSet}
    Assume \ref{ass:Q:geom}\ref{ass:geom:D}--\ref{ass:geom:mino:C} and that for every $y \in \Eset$, $\PPP{P}{y}(\rti_\Cset<\plusinfty)=1$. 
    Set $\delta \eqdef \inf_{z \in \Dset} P(z, \Cset)$ and $\bDset \eqdef \Dset \times \Dset$ and define the probability measure $\bar{\nu} : \Esigma \tensprod \Csigma \ni \Aset \mapsto \int_\Cset \nu(\rmd z) \, \indi{\Aset}(z,z)$. Then the set $\bDset=\Dset \times \Dset$ is an accessible $(1,\delta \epsilon \bar \nu)$-small set for $R$ such that $\bar \nu(\bDset)>0$.
\end{lemma}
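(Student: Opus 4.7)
The plan is to verify the three assertions of the lemma in turn: the measure condition $\bar\nu(\bDset)>0$, the minorization making $\bDset$ a $(1,\delta\epsilon\bar\nu)$-small set, and the accessibility of $\bDset$ for $R$. The first two are essentially bookkeeping once we unpack the definitions of $R$ and $\bar\nu$; the real work is the accessibility claim, which we will reduce to accessibility of $\Dset$ for $Q$ via the structural results already established in \Cref{lem:tech}.

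First I would note that $\bar\nu(\bDset)=\int_\Cset \nu(\rmd z)\,\indi{\Dset\times\Dset}(z,z)=\nu(\Dset)>0$ by \ref{ass:Q:geom}\ref{ass:geom:D}. Next, fix $(y,z)\in\bDset$ and $\Aset\in\Esigma\tensprod\Csigma$. Dropping the first term in the definition \eqref{eq:def_R_Kac} of $R$ and using $P(y,\Cset)\geq \delta$ (since $y\in\Dset$) together with the small-set minorization $Q(z,\cdot)\geq \epsilon\nu$ (since $z\in\Dset$), I get
\begin{equation*}
R((y,z),\Aset)\geq P(y,\Cset)\int_\Cset Q(z,\rmd z')\,\indi{\Aset}(z',z')\geq \delta\epsilon\int_\Cset\nu(\rmd z')\,\indi{\Aset}(z',z')=\delta\epsilon\,\bar\nu(\Aset),
\end{equation*}
which is exactly the required $(1,\delta\epsilon\bar\nu)$-small-set minorization.

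The hardest part is accessibility. Here, for arbitrary $(y_0,z_0)\in\Eset\times\Cset$, I need to show $\PPP{R}{(y_0,z_0)}(\rti_\bDset<\plusinfty)>0$. The idea is that whenever the $Y$-component enters $\Cset$, the algorithm forces $Y_{\rti^k_\bCset}=Z_{\rti^k_\bCset}$, so the bivariate chain lands in $\bDset=\Dset\times\Dset$ precisely when the embedded $Z$-chain at the successive entrance times lies in $\Dset$. Under the standing assumption $\inf_{y\in\Eset}\PPP{P}{y}(\rti_\Cset<\plusinfty)=1$, \Cref{lem:tech}\ref{item:tech:one} yields $\PPP{R}{(y_0,z_0)}(\rti^k_\bCset<\plusinfty)=1$ for every $k\in\nsetpos$, and \Cref{lem:tech}\ref{eq:tech:four} identifies $\seq{Y_{\rti^k_\bCset}}{k\in\nsetpos}$ under $\PPP{R}{(y_0,z_0)}$ with a Markov chain of kernel $Q$ started from $z_0$. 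Combining these facts,
\begin{equation*}
\PPP{R}{(y_0,z_0)}(\rti_\bDset<\plusinfty)\geq \PPP{R}{(y_0,z_0)}\bigl(\exists\,k\in\nsetpos:\ Y_{\rti^k_\bCset}\in\Dset\bigr)=\PPP{Q}{z_0}(\rti_\Dset<\plusinfty)>0,
\end{equation*}
the last positivity being accessibility of $\Dset$ for $Q$ from \ref{ass:Q:geom}\ref{ass:geom:D}. This completes the three checks and establishes the lemma; the only point requiring genuine care is verifying that the $Z$-component at entrance times behaves exactly as an autonomous $Q$-chain, but this has already been done for us in \Cref{lem:tech}.
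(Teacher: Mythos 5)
Your proof is correct, and the first two steps (the lower bound $\bar\nu(\bDset)=\nu(\Dset)>0$ and the minorization $R((y,z),\cdot)\geq\delta\epsilon\bar\nu$ on $\bDset$) coincide with the paper's argument. Where you genuinely diverge is the accessibility claim. The paper does not verify $\PPP{R}{(y_0,z_0)}(\rti_{\bDset}<\plusinfty)>0$ directly: it first lifts a maximal irreducibility measure $\psi$ of $Q$ to the diagonal measure $\bar\psi(\Aset)=\int_\Cset\psi(\rmd z)\indi{\Aset}(z,z)$, proves that $R$ is $\bar\psi$-irreducible by bounding $\sum_n R^n((y,z),\Aset)$ from below by $\sum_\ell Q^\ell(z,\Aset_1)$ with $\Aset_1=\set{z\in\Cset}{(z,z)\in\Aset}$ (using \Cref{lem:tech}\ref{item:tech:one} and \ref{item:tech:three}), and then invokes \cite[Lemma~3.5.2]{douc:moulines:priouret:soulier:2018} together with $\bar\psi(\bDset)=\psi(\Dset)>0$ to conclude accessibility. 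You instead exploit \Cref{lem:tech}\ref{eq:tech:four} to identify the embedded process $\seq{Y_{\rti^k_{\bCset}}}{k\in\nsetpos}$ with a $Q$-chain started at $z_0$, so that $\PPP{R}{(y_0,z_0)}(\rti_{\bDset}<\plusinfty)\geq\PPP{Q}{z_0}(\rti_{\Dset}<\plusinfty)>0$ follows in one line from the accessibility of $\Dset$ for $Q$; the only points needing care, which you handle correctly, are that $\rti^k_{\bCset}<\plusinfty$ a.s.\ (\Cref{lem:tech}\ref{item:tech:one}) and that $Y_{\rti^k_{\bCset}}=Z_{\rti^k_{\bCset}}$, so that $Y_{\rti^k_{\bCset}}\in\Dset$ indeed puts the bivariate chain in $\bDset$. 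Your route is more elementary and shorter for the stated conclusion; the paper's route yields the slightly stronger byproduct that $R$ is irreducible with respect to the explicit measure $\bar\psi$ supported on the diagonal, at the cost of citing two additional results from the reference monograph. Both arguments rest on the same structural lemma, and either suffices for the subsequent application in \Cref{thm:ergo:geom}.
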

\begin{proof}
Under \ref{ass:Q:geom}\ref{ass:geom:D}--\ref{ass:geom:mino:C} it holds that for every $(y,z)\in \bDset$ and $\Aset \in \Esigma \tensprod \Csigma$,
$$
R((y,z), \Aset) \geq P(y,\Cset) \int Q(z,\rmd z') \indi{\Aset}(z',z') \geq \delta \epsilon \bar \nu(\Aset)\eqsp,
$$
showing that $\bDset$ is a $(1,\delta \epsilon\bar\nu)$-small set for $R$. In addition, $\bar \nu(\bDset)=\nu(\Dset)>0$. To complete the proof, it remains to show that $\bar{\msd}$ is accessible for $R$. By assumption, $Q$ admits an accessible small set and is hence irreducible. Furthermore, \cite[Theorem 9.2.4]{douc:moulines:priouret:soulier:2018} implies that there exists a maximal irreducibility measure $\psi \in \measureset(\Csigma)$ for $Q$. We define the measure $\bar{\psi} : \Esigma \tensprod \Csigma \ni \Aset \mapsto \int_\Cset \psi(\rmd z) \, \indi{\Aset}(z,z)$ and complete the proof by showing that $R$ is $\bar{\psi}$-irreducible. Indeed, since $\nu$ is, by \ref{ass:Q:geom}\ref{ass:geom:D}, an irreducibility measure for $Q$ such that $\nu(\msd) > 0$, it holds that $\bar{\psi}(\bar{\msd}) = \psi(\msd) >0$; thus, by \cite[Lemma 3.5.2]{douc:moulines:priouret:soulier:2018}, $\bar{\msd}$ is accessible for $R$ if $R$ is $\bar \psi$-irreducible. 

To establish $\bar{\psi}$-irreducibility, consider a set $\Aset \in \Esigma \tensprod \Csigma$ such that $\bar{\psi}(\Aset) > 0$. Setting $\Aset_1 \eqdef \set{z \in \Cset}{(z,z) \in \Aset}$, it holds that $\bar \psi(\Aset) = \int \1_{\Aset}(z,z) \, \psi(\rmd z) =\psi(\Aset_1)>0$, and since $Q$ is $\psi$-irreducible, it follows that $\sum_{k=1}^{\plusinfty} Q^k(z,\Aset_1)>0$ for all $z \in \Cset$. Now for all $(y, z) \in \Eset \times \Cset$, $Y_k = Z_k$, $\PPP{R}{(y, z)}$-a.s. on $\{Y_k \in \Cset\}$, and therefore
\begin{align*}
\sum_{n=1}^{\plusinfty} R^n((y,z),\Aset) &\geq \sum_{n=1}^{\plusinfty} \PPP{R}{(y,z)}\lr{(Y_n, Z_n) \in \Aset \cap \bCset, Y_n = Z_n} \\
&= \sum_{\ell=1}^{\plusinfty} \PPP{R}{(y,z)}\lr{Z_{\rti^\ell_\bCset} \in \Aset_1}=\sum_{\ell=1}^{\plusinfty} Q^\ell(z,\Aset_1)>0\eqsp,
\end{align*}
where the first and last equalities follow from \Cref{lem:tech}\ref{item:tech:one} and \Cref{lem:tech}\ref{item:tech:three}, respectively. The Markov kernel $R$ is therefore $\bar \psi$-irreducible, and the proof is finalized. 
\end{proof}


\bibliographystyle{imsart-number} 
\bibliography{../../../Bibliography/biblio}       


\end{document}